\numberwithin{equation}{section}
\newcommand{\E}{\mathbb{E}}
\newcommand{\Ind}{\mathbb{I}}
\newcommand{\Var}{\text{Var}}
\newcommand{\Cov}{\text{Cov}}
\newcommand{\eps}{\varepsilon}
\newcommand{\cE}{\mathcal{E}}
\newcommand{\cM}{\mathcal{M}}
\newcommand{\cN}{\mathcal{N}}
\newcommand{\cS}{\mathcal{S}}
\newcommand{\bk}{{\bf{k}}}
\newcommand{\bp}{{\bf{p}}}
\newcommand{\bx}{{\bf{x}}}
\newcommand{\bz}{{\bf{z}}}
\newcommand{\din}{d_{\text{\normalfont in}}}
\newcommand{\dout}{d_{\text{\normalfont out}}}
\newcommand{\dmax}{d}
\newcommand{\TTE}{\text{\normalfont TTE}}
\newcommand{\Yobs}{Y_{i,t}^{\text{\normalfont obs}}}
\newtheorem{theorem}{Theorem}
\newtheorem{definition}[theorem]{Definition}
\newtheorem{lemma}[theorem]{Lemma}
\newtheorem{corollary}[theorem]{Corollary}
\newtheorem{assumption}{Assumption}
\title{Staggered Rollout Designs Enable Causal Inference Under Interference Without Network Knowledge}
\author{%
  Mayleen Cortez \\
  Center for Applied Math\\
  Cornell University\\
  Ithaca, NY 14850 \\
  \texttt{mec383@cornell.edu} \\
  \And
  Matthew Eichhorn \\
  Center for Applied Math\\
  Cornell University\\
  Ithaca, NY 14850 \\
  \texttt{meichhorn@cornell.edu} \\
  \And
  Christina Lee Yu \\
  Operations Research and Information Engineering\\
  Cornell University\\
  Ithaca, NY 14850 \\
  \texttt{cleeyu@cornell.edu} \\
}
\begin{document}

\maketitle

\begin{abstract}
    Randomized experiments are widely used to estimate causal effects across many domains. However, classical causal inference approaches rely on independence assumptions that are violated by \textit{network interference}, when the treatment of one individual influences the outcomes of others. All existing approaches require at least approximate knowledge of the network, which may be unavailable or costly to collect. We consider the task of estimating the total treatment effect (TTE), the average difference between the outcomes when the whole population is treated versus when the whole population is untreated. By leveraging a \textit{staggered rollout} design, in which treatment is incrementally given to random subsets of individuals, we derive unbiased estimators for TTE that \textit{do not rely on any prior structural knowledge of the network}, as long as the network interference effects are constrained to low-degree interactions among neighbors of an individual. We derive bounds on the variance of the estimators, and we show in experiments that our estimator performs well against baselines on simulated data. Central to our theoretical contribution is a connection between staggered rollout observations and polynomial extrapolation. 
\end{abstract}

\section{Introduction}
\label{sec:intro}
A cornerstone of much of the classic causal inference literature is the stable unit treatment value assumption (SUTVA), which posits that an individual's potential outcome is a function only of their assigned treatment; there are no spillover effects due to the treatment of others. Such an assumption fails to account for the ways in which individuals interact in many real-world experimental settings. For instance,
new features rolled out on social networking sites such as LinkedIn may alter these users' behaviors, which in turn affect how their connections (who do not have access to the feature) interact with the platform. Individuals receiving a vaccine against an infectious disease may reduce the transmission probability of the disease to others they interact with. Implementing a different pricing policy for a subset of individuals in an online marketplace such as Airbnb or a  platform such as Uber could impact the experience of other users, as they compete for the same resources or same customers. Public health measures instituted in one city can limit travel to nearby communities, indirectly affecting their health outcomes or transit related outcomes. 
These examples illustrate how network interference may arise naturally from the connectedness of our society. Unfortunately, the standard causal inference techniques which do not account for network interference may result in arbitrarily biased estimates.

As these issues come into greater focus, there is a growing research area in developing new tools for causal inference under \textit{network interference}, when the outcome of an individual can be affected by the treatment of another.
Many approaches either propose complex graph-based cluster randomized designs, or require strong parametric assumptions on the network interference effects. A limitation is that all these approaches require at least partial knowledge of the underlying network in order to implement the randomization or to compute the estimator. While structural knowledge is available to online social networks, other applications such as public health must reason about an unknown or potentially transient network. The additional effort required to collect or model network structure is both difficult and costly.

In this work, we explore the value of additional measurements which arise from a staggered rollout randomized design, in which the treatments are administered over a span of a few timepoints. For example, the experimentation team at LinkedIn may roll out an experiment over 5 days, increasing the fraction of treated individuals according to a schedule of 1\%, 2\%, 5\%, 10\%, 20\%, where it continuously collects data and measurements before and during each day of the experiment. Not only is this type of experiment easy to implement in such applications, it is often desirable to implement treatments according to such a staggered rollout design as it allows the system to first ensure safety of the proposed treatment on a smaller test group before implementing it on larger groups. This type of experimental design is also common for trials involving healthcare and medicine due to the requirement of certain safety considerations before testing for efficacy. A key contribution in our work is that we show the additional measurements from a \textit{staggered rollout} design enable graph agnostic causal inference, lifting all requirements on knowledge of the network.

We focus on estimating the \textit{total treatment effect} ($\TTE$), informally defined as the difference in average outcomes across the population between two scenarios: when all individuals are treated and when no individuals are treated. It has also been referred to as the \textit{global average treatment effect} (GATE). The $\TTE$ is particularly pertinent to applications where the decision maker must choose between entirely adopting the new treatment or remaining with the status quo. For example, LinkedIn would like to choose a single news feed recommendation algorithm, and Airbnb and Uber would like to choose a single dynamic pricing algorithm. We assume \textit{neighborhood interference}, where each individual is only affected by the treatments of its direct neighbors; this is only mildly restrictive as the neighborhood can be defined with respect to an unknown network, which is neither used for the estimator nor the randomized design.

\paragraph{Related Work.}
In addressing the challenges that arise from network interference, a key tension arises between the model assumptions and the simplicity and efficiency of the proposed estimator. Previously proposed model assumptions can be generally classified into assumptions on exposure functions \citep{AronowSamii17, auerbach2021local, li2021causal, Manski13, viviano2020experimental}, interference neighborhoods \citep{bargagli2020heterogeneous, pmlr-v115-bhattacharya20a, SussmanAiroldi17, UganderKarrerBackstromKleinberg13}, parametric structure \citep{BasseAiroldi15, cai2015social, EcklesKarrerUgander17, GuiXuBhasinHan15,ToulisKao13}, or a combination of these. Each of these assumptions lead to different solution concepts. 
All of these approaches rely on knowledge of the network mediating the interference effects. 

One class of approaches relies on assumptions about the network structure. They assume \textit{partial interference}, meaning that the population can be partitioned into disjoint groups, such that all network interference effects can only occur within but not across the pre-specified groups \citep{auerbach2021local,pmlr-v115-bhattacharya20a,HudgensHalloran08,LiuHudgens14,Rosenbaum07,Sobel06,TchetgenVanderWeele12,VanderweeleTchetgenHalloran14}. 
This assumption is motivated by scenarios where the network is naturally strongly clustered. A natural solution is to randomize treatments over the groups jointly, such that each group is assigned to be either fully treated or fully control. 
A drawback of this approach is that many networks are well-connected such that there is no clear clustering of the network which does not cut a significant fraction of the edges. The bias of standard estimators will scale with the number of edges that cross between groups, leading to proposed cluster randomized designs that randomize over clusters that are constructed to minimize the number of edges between clusters \citep{EcklesKarrerUgander17, GuiXuBhasinHan15}. Constructing good clusters itself can be computationally intensive. Additionally some applications may prohibit such nonuniform treatment assignment probabilities due to fairness considerations.
Under \textit{neighborhood interference} assumptions, \citep{ugander2020randomized, UganderKarrerBackstromKleinberg13} analyze the Horvitz-Thompson estimator alongside a cluster randomized design, which involves both clustering the graph and computing probabilities of entire neighborhoods being assigned to treatment or control over the distribution of clusterings, which is computationally intensive. When one is willing to impose a distributional model on the network itself, \cite{li2020random} provides central limit theorem convergence results for a related but weaker estimand measuring the change in outcomes under small perturbations of the fraction of treated individuals.

An alternate approach is to impose structure on the form of the network interference effects. The most common assumption is that the network effects are linear with respect to a specified statistic of the local neighborhood \citep{BasseAiroldi15,cai2015social,chin2019regression,GuiXuBhasinHan15,parker2016optimal,ToulisKao13}. The assumptions reduce the number of unknown parameters in the model to a fixed dimension that does not grow with the population size, reducing the inference task to linear regression. As a result, the natural solution is to use a least squares estimate, shifting the focus to constructing randomized designs that minimize the variance of the estimate. A limitation of this approach is that it requires the correct choice of the the statistic governing the linearity, and it requires precise knowledge of the network structure to compute these neighborhood statistics. Furthermore, it assumes knowledge of the relevant covariate types that differentiate individual responses, or otherwise assumes homogeneity in the network effects. 

The most similar work to our paper is the solution proposed in \citep{YuAiroldiBorgsChayes22}, which provides an estimator for the $\TTE$ under a heterogeneous linear interference model \citep{EcklesKarrerUgander17}, also referred to as the joint assumptions of additivity of main effects and interference effects in \citep{SussmanAiroldi17}. Their estimator does not require knowledge of the network, but requires measurements over two time steps. Our work generalizes their results beyond linear to polynomial models, and we show that the staggered rollout experimental design enables graph agnostic causal inference. \mcedit{The extension from linear to polynomial models introduces the possibility of non-trivial interactions within the neighborhood set, adding complexity to the model that necessitates a new analysis and algorithm.
}

\paragraph{Contributions.}
We show that under a staggered rollout experimental design, the task of estimating the total treatment effect reduces to polynomial extrapolation, where the degree of the polynomial is governed by the cardinality of interactions in the neighborhood interference model, bounded above by the degree of the graph. Our approach is the first in the literature to propose an estimator and randomized design that does not require any knowledge of the network structure, and yet is unbiased and consistent. We provide variance bounds on the estimator, showing that the variance only grows polynomially in the degree as opposed to the exponential growth that is exhibited in the Horvitz-Thompson estimator under simple Bernoulli randomized designs. We provide experiments that also illustrate that naively using regression models without allowing for heterogeneity could lead to significant bias, whereas our estimator is unbiased with significantly lower variance than the bias incurred due to a misspecified model. We are also the first to study the value of a staggered rollout experimental design in the presence of network interference, and we believe the overall framework could extend beyond polynomial models to other function classes, opening a new approach for handling network interference while allowing for flexible heterogeneity in the network effects.

\section{Setup}
\label{sec:setup}



Consider a population of $n$ individuals, and assume that the network interference can be represented via an unknown directed graph with edge set $E \subset [n]\times [n]$. An edge $(j,i) \in E$ represents that individual $i$ is affected by the treatment assignment of individual $j$; as such, self-loops are expected. The in-neighborhood of individual $i$ is denoted by $\cN_i = \{ j \in [n] : (j,i) \in E \}$, and we let $\din$ denote the maximum in-degree, $\dout$ the maximum out-degree, and $\dmax = \max \{ \din, \dout\}$. We posit that the outcome of individual $i$ as a function of the entire population's exposure to treatment can be expressed by the potential outcomes function $Y_i : \{0,1\}^n \to \mathbb{R}$.

Our task is to estimate the total treatment effect ($\TTE$), which represents the difference in average outcomes when the entire population is fully under treatment as opposed to fully under control, denoted as
\begin{equation} \label{eq:TTE_Ydefn}
    \TTE := \tfrac{1}{n} \textstyle\sum_{i=1}^n \big( Y_i({\bf 1}) - Y_i({\bf 0}) \big).
\end{equation}

We use $\bz \in \{0,1\}^n$ to denote the treatment assignment vector, where $z_i = 1$ if individual $i$ is assigned to treatment, and $z_i=0$ if $i$ is assigned to control. By definition of $E$, it follows that the potential outcomes functions satisfy neighborhood interference with respect to the graph defined by $E$.

\begin{assumption}[Neighborhood Interference]
    $Y_i(\bz)$ only depends on the treatment of individuals in $\cN_i$ (including $i$). Equivalently, $Y_i(\bz) = Y_i(\bz')$ for any $\bz$ and $\bz'$ such that $z_j = z_j'$ for all $j \in \cN_i$.
\end{assumption}

Additionally, as the treatment variables $z_i$ are binary, any potential outcomes function satisfying neighborhood interference can be written as a polynomial in the neighborhood treatment variables:
\[
    Y_i(\bz) = \sum_{\cS \subseteq \cN_i} a_\cS \prod_{j \in \cS} z_j \prod_{j' \in \cN_i \setminus \cS} (1-z_{j'}),
\]
for some coefficients $\{ a_{\cS} \}$. We use the degree of the polynomial to quantify the complexity of the model.
In full generality, any model satisfying the neighborhood interference assumption will have polynomial degree bounded by $\max_i |\cN_i|$, the maximum in-degree of the graph. In this work we consider the scenario where the polynomial degree may be significantly smaller than $\max_i |\cN_i|$.

\begin{assumption}[Low Polynomial Degree]
    The potential outcomes model has polynomial degree at most $\beta$, i.e. there exist coefficients $\{c_{i,\cS}\}_{i \in [n], \cS \subseteq[n]}$ such that for all $i$ and $\bz$,
    \begin{equation} \label{eq:ppom}
        Y_i(\bz) = \sum_{\cS \subseteq\cN_i, |\cS| \leq \beta} c_{i,\cS} \cdot \Ind \big( \cS \textrm{ \emph{treated}} \big) = \sum_{\cS \subseteq\cN_i, |\cS| \leq \beta} c_{i,\cS} \prod_{j \in \cS} z_j.
    \end{equation}
\end{assumption}
\mcedit{The low-degree polynomial structure is perhaps better conceptualized as a constraint on the order of interactions among neighbors of an individual, as the potential outcomes function is polynomial with respect to the {\em binary} treatment vector. For general $\beta$, this assumption is not restrictive at all; rather, a restriction is imposed by assuming a specific value for $\beta$, or more generally assuming that $\beta$ is much smaller than the graph degree.} We interpret the parameter $c_{i,\cS}$ as the effect that treating all individuals in $\cS$ has on the outcome of individual $i$. The coefficient $c_{i,\emptyset}$ represents individual $i$'s outcome when everyone is assigned to control (i.e. their \textit{baseline outcome}); this is unaffected by the treatment assignment. In the case of a singleton set $\cS = \{j\}$, we use the shorthand notation $c_{ij} := c_{i,\{j\}}$. It follows that the total treatment effect is the sum of all $c_{i,\cS}$ for nonempty subsets $\cS$, i.e.
$\TTE = \frac{1}{n}\sum_{i=1}^{n} \sum_{\substack{\cS \subseteq\cN_i \\ 1 \leq |\cS| \leq \beta}} c_{i,\cS}.$

The number of unknown parameters in this model are $\sum_{i \in [n]} \sum_{k = 0}^{\beta} \binom{|\cN_i|}{k}$, which scales as $n d^{\beta}$. When $\beta = 1$, the network effects resulting from treated neighbors is additive, and is also equivalent to the heterogeneous linear outcomes model in \citep{YuAiroldiBorgsChayes22}. This low degree assumption will not generally admit threshold models or saturation models, both of which would require the degree of $Y_i$ to be $|\cN_i|$. 

%
%
An example in which the polynomial degree may be smaller than the neighborhood size would be a setting in which an individual's neighborhood can be further partitioned into smaller subcommunities: colleagues, university friends, high school friends, family, etc. Each subcommunity could have an additive affect on the individuals' outcome, but there may be nontrivial interactions among the treatments of individuals in the subcommunities. 
The polynomial degree would be bounded by the size of the largest subcommunity, which could be significantly smaller than the full neighborhood. \mcedit{As another example, suppose that a social platform is testing a "hangout room" feature that provides groups of up to $5$ people a new environment to engage on the platform. One could posit a model with $\beta=5$, as the change in any individual's usage on the platform can be attributed to experience with the new feature, which takes place in groups of up to $5$ users.}

We let $Y_{\max}$ denote an upper bound on the absolute treatment effects for each individual, i.e. $$Y_{\max} := \max_{i \in [n]} \textstyle\sum_{\cS \subseteq\cN_i, |\cS| \leq \beta} |c_{i,\cS}|.$$ It follows that the magnitude of the outcomes $Y_i(\bz)$ are bounded by $Y_{\max}$ for any treatment vector $\bz$. 
We let $L_j$ denote the absolute effect or influence that individual $j$ has on the population outcomes,
\begin{equation*}
    L_j := \textstyle\sum_{i : j \in \cN_i} \textstyle\sum_{\cS \subseteq \cN_i, |\cS| \leq \beta, j \in \cS} |c_{i,\cS}|.
\end{equation*}
Our boundedness assumption and the finiteness of our network imply the boundedness of the $L_j$. We denote the upper bound on the absolute effect or influence of any individual by $L_{\max} := \max_{j} \big\{ L_j \big\}$.

\paragraph{Randomized Experiment Design.}
As it may be costly and/or detrimental to expose the entire population to treatment, we wish to estimate the total treatment effect after treating only a small random subset of individuals. In particular we assume that there may be an experimental budget that limits the proportion of individuals who may be treated. We will focus on two standard randomized designs. In Bernoulli design, a treatment vector $\bz$ is obtained by independently sampling each coordinate from a Bernoulli($p$) distribution, so that the probability that a subset of individuals $\cS$ are all treated is $p^{|\cS|}$. We assume that $p > \frac{1}{n}$ so that at least one individual is treated in expectation. In completely randomized design, a treatment vector $\bz$ is obtained by uniformly sampling a subset of $k$ individuals to treat for some fixed $k$. Here, the probability that a subset of individuals $\cS$ are all treated is
\begin{equation} \label{eq:bracket_def}
    \prod_{i=0}^{|\cS|-1} \frac{k-i}{n-i} =: \Big[ \tfrac{k}{n} \Big]^{|\cS|}.
\end{equation}
Throughout the paper, we utilize a \textit{staggered rollout} experimental design. Treatment is assigned to individuals in $T$ stages throughout the experiment. Overall, the individuals' outcomes are measured $T+1$ times: a baseline measurement before treatment, as well as a measurement after each treatment round. We'll use $\bz^t$ to denote the vector of treatment assignment in round $t$, and assume that each entry $z_i^t$ is monotone increasing with $t$ (individuals cannot be un-treated). This monotonicity requirement introduces significant correlation between the treatment vectors. \mcedit{Monotonicity is a constraint in many real-world scenarios where the experimental designer only has the option to introduce treatment to new individuals. For example, treatments in  medication trials can have life-altering, irreversible effects, and the exposure of individuals to an advertising campaign cannot be “taken back.” In other domains, such as the rollout of new interfaces on social media platforms, treatments are temporary or reversible and it may make sense to remove the monotonicity requirement.} 

\mcedit{Another assumption we make is that observations of outcomes are perturbed by iid Gaussian noise.
\begin{assumption}
    We observe $\Yobs = Y_i(\bz^t) + \eps_{i,t} \ $ for $\ \eps_{i,t} \overset{\text{iid}}{\sim} N(0,\sigma^2)$.
\end{assumption}}


\section{Graph Agnostic Estimators under Staggered Rollout Design}
\label{sec:GASR}

To motivate the design of our estimators, we begin with a high-level view of estimating the total treatment effect. \mcedit{We limit our attention to static networks.} When we have no information about the underlying causal network, we do not know how much of each individual's neighborhood is treated, so have no systematic way to predict what their potential outcome would be if the entire population were treated. However, we can aggregate the average of the individuals' outcomes to obtain a meaningful statistic. \mcedit{In the following discussion we omit observation noise to make the intuition for our estimator clear.} Consider the expected population average outcomes where the expectation is taken over the distribution of treatment vectors $\bz$ sampled from a parameterized class of distributions $\mathcal{D}_x$, where $\mathcal{D}_0$ refers to the distribution that deterministically assigns all individuals to control, and $\mathcal{D}_1$ refers to the distribution that deterministically assigns all individuals to treatment. Consider the underlying expected outcome function $F_{\mathcal{D}}: [0,1] \to \mathbb{R}$ given by 
\[
    F_{\mathcal{D}}(x) = \E \Big[ \tfrac{1}{n} \textstyle\sum_{i=1}^{n} Y_i(\bz) \Big]
\]
where the expectation is taken over the distribution of treatment vectors $\bz\sim \mathcal{D}_x$. By construction, the $\TTE$ is exactly $F_{\mathcal{D}}(1) - F_{\mathcal{D}}(0)$.

If we can implement a staggered rollout design where at stage $t$ of the experiment, the marginal distribution of the treatment vector is $\mathcal{D}_{x_t}$, the observed average outcomes collected in the experiment at stage $t$ would give noisy estimates of $F_{\mathcal{D}}(x_t)$. Under this framing, our goal is to use these measurements to extrapolate the value of $F_{\mathcal{D}}(1)$. This provides a general framework for utilizing staggered rollout design to simplify estimation of the total treatment effect. 

The simplest class of distributions we can consider is the Bernoulli$(p)$ randomized design, in which each individual is independently assigned to treatment or control with probability $p$. For a degree-$\beta$ polynomial potential outcomes model, the expected outcome function under this design is polynomial in the treatment probability $p$:
\[
    F_{B}(p) 
    = \E \Big[ \frac{1}{n} \sum_{i=1}^{n} Y_i(\bz) \Big] 
    = \frac{1}{n} \sum_{i=1}^{n} \sum_{\substack{\cS \subseteq \cN_i \\ |\cS| \leq \beta}} c_{i,\cS} \cdot \E \Big[ \prod_{j \in \cS} z_j \Big] 
    = \frac{1}{n} \sum_{i=1}^{n} \sum_{\substack{\cS \subseteq \cN_i \\ |\cS| \leq \beta}} c_{i,\cS} \cdot p^{|\cS|}.
\]
To implement a staggered rollout Bernoulli design with treatment probabilities $p_1 < p_2 < \hdots < p_T$ we independently sample $u_i \sim U[0,1]$ for each individual $i$. Then, for each $t \in [T]$, we define treatment vector $\bz^t$ with $\bz^t_i = \Ind \big( u_i \leq p_t \big)$. This both ensures that the marginal distribution of the treatment vector at stage $t$ is equivalent to the Bernoulli($p_t$) randomized design, and that the treatment assignments are monotone over the rounds.

Alternatively, we can consider a completely randomized design (CRD) in which we fix a number of treated individuals $k$, and sample a subset of $k$ individuals uniformly at random among all size $k$ subsets in the population. For a degree-$\beta$ polynomial potential outcomes model, the expected outcome function under this design is polynomial in the treated fraction $k/n$:
\[
    F_C(\tfrac{k}{n}) 
    = \E \Big[ \frac{1}{n} \sum_{i=1}^{n} Y_i(\bz) \Big] 
    = \frac{1}{n} \sum_{i=1}^{n} \sum_{\substack{\cS \subseteq \cN_i \\ |\cS| \leq \beta}} c_{i,\cS} \cdot \E \Big[ \prod_{j \in \cS} z_j \Big] 
    = \frac{1}{n} \sum_{i=1}^{n} \sum_{\substack{\cS \subseteq \cN_i \\ |\cS| \leq \beta}} c_{i,\cS} \cdot \Big[ \tfrac{k}{n} \Big]^{|\cS|}.
\]
To implement a complete staggered rollout design, we sample a treatment vector from CRD($k_1$) at stage 1, and at stage $t > 1$, we sample a treatment vector from CRD($k_t - k_{t-1}$) out of the remaining untreated individuals. The marginal distribution of the treatment vector at state $t$ will be equivalent to the completely randomized design with parameter $k_t$. 

To construct our estimators, we will make use of the Lagrange interpolation formula.

\begin{definition}[Lagrange Interpolation]
    Given a dataset $\big\{ (x_t,y_t) \big\}_{t=0}^{T}$ with distinct $x$-coordinates, the unique polynomial $F$ of degree at most $T$ with $F(x_t) = y_t$ for each $t$ is given by 
    \[
        F(x) = \sum_{t=0}^{T} \ell_{t,\bx}(x) \cdot y_t, 
        \hspace{40pt}
        \ell_{t,\bx}(x) = \prod_{\substack{s=0 \\ s\not=t}}^{T} \frac{x-x_s}{x_t-x_s}.
    \]
\end{definition}

To estimate $\TTE$, we require estimates of $F_B(x)$ or $F_C(x)$ at $x \in [0,1]$. As both $F_B$ and $F_C$ have degree at most $\beta$, they can be recovered from $\beta+1$ observations, requiring $T=\beta$ rounds of treatment. Given treatment targets $\bx = (x_0, x_1, \dots x_T)$ with realized treatment schedule $\{ \bz_t \sim \mathcal{D}_{x_t} \}$, we can utilize Lagrange interpolation to derive the following polynomial interpolation (PI) estimator:
\begin{equation} \label{eq:bern_poly}
    \widehat{\TTE}_{\text{PI}}(\bx) := \begin{cases}
        \sum_{t=0}^{T} \Big( \ell_{t,\bx}(1) - \ell_{t,\bx}(0) \Big) \Big( \tfrac{1}{n} \sum_{i=1}^{n} \Yobs \Big) & x_0 < x_1 < \hdots < x_T, \\
        0 & x_t = x_{t-1} \textrm{ for some } t \in [T].
    \end{cases}
\end{equation}
The separation into cases ensures that the Lagrange coefficients are well-defined. We assume that the degree $\beta$ is known such that the experimenter can select $T = \beta$. We also assume that $\bx$ is monotone, and define $\Delta_{\bx} = \min_{t=1..m} \big\{ x_t - x_{t-1} \big\}$. We can apply this estimator in both the Bernoulli and completely randomized design settings. 
\mcedit{\subsection{Theoretical Results and Discussion}}
\mcedit{For a potential outcomes model with degree $\beta$, we let the notation BRD$(\bp)$ refer to a staggered rollout Bernoulli design with distinct treatment probabilities $\bp = (p_0, p_1, \hdots, p_\beta)$. We let CRD$(\bk)$ refer to a staggered rollout completely randomized design with distinct treatment counts $\bk = (k_0, k_1, \hdots, k_\beta)$.}
\begin{theorem} \label{thm:ga_bern_var}
    Consider a potential outcomes model with degree $\beta$. Under a \mcreplace{staggered rollout Bernoulli design with distinct treatment probabilities $\bp = (p_0=0, \hdots, p_\beta)$}{\emph{BRD}$(\bp)$ with $p_0=0$}, the estimator $\widehat{\TTE}_{\text{\normalfont PI}}(\bp)$
    is unbiased with variance $$O \Big( \tfrac{\dmax^2 \beta^2}{n} Y_{\max}^2 \Delta_{\bp}^{-2\beta} \mcedit{+ \tfrac{\sigma^2 \beta}{n} \Delta_{\bp}^{-2\beta}} \Big).$$
\end{theorem}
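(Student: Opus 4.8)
The plan is to establish unbiasedness from the exactness of Lagrange interpolation on degree-$\beta$ polynomials, and then to bound the variance by separating the observation noise from the treatment randomness, with the crux being a covariance-cancellation argument driven by the neighborhood interference assumption. Write $w_t := \ell_{t,\bp}(1) - \ell_{t,\bp}(0)$; since $\bp$ is deterministic, these are fixed constants.

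\textbf{Unbiasedness.} Because the noise is mean zero and the marginal law of $\bz^t$ is Bernoulli$(p_t)$, the text's identity gives $\E\big[\tfrac1n\sum_i \Yobs\big] = F_B(p_t)$. Taking the expectation of the estimator yields $\E[\widehat{\TTE}_{\text{PI}}(\bp)] = \sum_{t=0}^\beta w_t F_B(p_t)$. Since $F_B$ has degree at most $\beta$, the interpolating polynomial $\sum_t \ell_{t,\bp}(x) F_B(p_t)$ coincides with $F_B$ identically (they agree at the $\beta+1$ distinct nodes $p_0,\dots,p_\beta$), so $\sum_t w_t F_B(p_t) = F_B(1) - F_B(0) = \TTE$. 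The choice $p_0 = 0$ makes $\bz^0 = \mathbf 0$ almost surely, so $F_B(0)$ is observed directly and the $t=0$ outcome term is deterministic.

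\textbf{Variance decomposition and the noise term.} Set $\bar Y_t := \tfrac1n\sum_i Y_i(\bz^t)$ and $\bar\varepsilon_t := \tfrac1n\sum_i \varepsilon_{i,t}$, so that $\widehat{\TTE}_{\text{PI}}(\bp) = \sum_t w_t \bar Y_t + \sum_t w_t \bar\varepsilon_t$. As the noise is independent of the treatment randomness $\{u_j\}$, the variance splits additively over these two sums. The $\bar\varepsilon_t$ are independent across $t$ with variance $\sigma^2/n$, so the noise contribution is $\tfrac{\sigma^2}{n}\sum_t w_t^2$. To bound the weights I would use $|1-p_s|, |p_s| \le 1$ together with the minimum-gap estimate $|p_t - p_s| \ge \Delta_{\bp}$, giving $|\ell_{t,\bp}(1)|, |\ell_{t,\bp}(0)| \le \Delta_{\bp}^{-\beta}$ and hence $|w_t| \le 2\Delta_{\bp}^{-\beta}$. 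Summing over the $\beta+1$ indices yields $\sum_t w_t^2 = O(\beta\,\Delta_{\bp}^{-2\beta})$, producing the claimed $\tfrac{\sigma^2\beta}{n}\Delta_{\bp}^{-2\beta}$ term.

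\textbf{Outcome term (the main obstacle).} Writing $G_i := \sum_t w_t Y_i(\bz^t)$, the outcome contribution is $\tfrac{1}{n^2}\sum_{i,i'}\Cov(G_i, G_{i'})$. The key observation is that under the coupling $z_j^t = \Ind(u_j \le p_t)$, the quantity $G_i$ is a deterministic function of $\{u_j : j \in \cN_i\}$ alone; hence $\Cov(G_i, G_{i'}) = 0$ whenever $\cN_i \cap \cN_{i'} = \emptyset$. A counting argument then limits the surviving pairs: for fixed $i$, a node $i'$ contributes only if some $j \in \cN_i$ also lies in $\cN_{i'}$, and each such $j$ influences at most $\dout$ individuals, so there are at most $\din\dout \le \dmax^2$ admissible $i'$. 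Combining the resulting $O(n\dmax^2)$ bound on nonzero pairs with the uniform estimate $|\Cov(G_i, G_{i'})| \le \E[G_i^2]^{1/2}\E[G_{i'}^2]^{1/2} \le \big(\sum_t |w_t|\big)^2 Y_{\max}^2 = O(\beta^2 \Delta_{\bp}^{-2\beta} Y_{\max}^2)$ gives the first term $O\big(\tfrac{\dmax^2 \beta^2}{n} Y_{\max}^2 \Delta_{\bp}^{-2\beta}\big)$. I expect this covariance cancellation to be the decisive step: it converts a naive $O(n^2)$ double sum into an $O(n\dmax^2)$ sum, so the variance grows only polynomially in $\dmax$, in contrast with the exponential dependence typical of Horvitz--Thompson estimators under Bernoulli designs. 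The remaining pieces---verifying the factorization of $G_i$ across disjoint neighborhoods and the routine Lagrange-weight bounds---are comparatively mechanical.
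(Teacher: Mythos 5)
Your proof is correct and follows essentially the same route as the paper's: unbiasedness via exactness of Lagrange interpolation applied to the degree-$\beta$ polynomial $F_B$, the noise contribution split off by independence (the paper uses the law of total variance) and bounded by $O(\tfrac{\sigma^2\beta}{n}\Delta_{\bp}^{-2\beta})$, covariance cancellation across disjoint neighborhoods under the uniform-coupling (the paper phrases this as the monomial covariance bound $B_2=0$ in its general variance lemma), the same $\din\dout \le \dmax^2$ count of overlapping pairs, and the same per-factor $O(\Delta_{\bp}^{-\beta})$ bound on the Lagrange coefficients. The only cosmetic difference is that you bound the surviving covariances via Cauchy--Schwarz on the aggregates $G_i$, whereas the paper expands each $Y_i(\bz^t)$ into monomials and bounds each monomial covariance by $1$ against the coefficient sums; both yield the identical $\beta^2 Y_{\max}^2 \Delta_{\bp}^{-2\beta}$ factor.
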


\begin{theorem} \label{thm:ga_cr_var}
    Consider a potential outcomes model with degree $\beta$. Under a \mcreplace{staggered rollout completely randomized design with distinct treatment counts $\bk = (k_0 = 0, k_1, \hdots, k_\beta)$}{\emph{CRD}$(\bk)$ with $k_0=0$}, the $\TTE$ estimator  $\widehat{\TTE}_{\text{\normalfont PI}}(\bk/n)$
    is unbiased with variance $$O \Big( \beta^2 \; Y_{\max}^2 \Big( \tfrac{d^2}{n} + \tfrac{\beta^2}{k_1} \Big) \cdot \big(\tfrac{n}{\Delta_{\bk}}\big)^{2\beta} \mcedit{+ \frac{\sigma^2 \beta}{n}\big(\tfrac{n}{\Delta_{\bk}}\big)^{2\beta}} \Big).$$
\end{theorem}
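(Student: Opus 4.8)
The plan is to mirror the structure of the proof of Theorem~\ref{thm:ga_bern_var}, since under a completely randomized design the expected outcome function $F_C$ is, as computed in the excerpt, a polynomial of degree at most $\beta$ in $x = k/n$ (the bracket $[k/n]^{|\cS|}$ of \eqref{eq:bracket_def} is a degree-$|\cS|$ polynomial in $k$, hence in $k/n$). \emph{Unbiasedness} then follows as in the Bernoulli case: at stage $t$ the treatment vector has marginal distribution $\mathrm{CRD}(k_t)$ and the noise $\eps_{i,t}$ is mean zero, so $\E\big[\tfrac1n\sum_i\Yobs\big] = F_C(k_t/n)$. Writing $x_t = k_t/n$ and noting these $\beta+1$ nodes are distinct, Lagrange interpolation recovers $F_C$ exactly; by linearity of expectation, $\E[\widehat\TTE_{\text{PI}}(\bk/n)] = \sum_t(\ell_{t,\bx}(1)-\ell_{t,\bx}(0))F_C(x_t) = F_C(1)-F_C(0) = \TTE$.

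For the variance I would write $\widehat\TTE_{\text{PI}} = \sum_{t=0}^\beta w_t(\bar M_t + \bar\eps_t)$ with $w_t = \ell_{t,\bx}(1)-\ell_{t,\bx}(0)$, $\bar M_t = \tfrac1n\sum_i Y_i(\bz^t)$, and $\bar\eps_t = \tfrac1n\sum_i\eps_{i,t}$. Since the Gaussian noise is independent across rounds and independent of the treatment assignment, the variance splits cleanly as $\Var(\sum_t w_t\bar M_t) + \tfrac{\sigma^2}{n}\sum_t w_t^2$. The weights are controlled by the node separation: because all $x_s\in[0,1]$ and consecutive nodes differ by at least $\Delta_{\bx} = \Delta_{\bk}/n$, the standard estimate $\prod_{s\ne t}|x_t-x_s|\ge\Delta_{\bx}^\beta\, t!(\beta-t)!$ gives $|w_t| = O(\Delta_{\bx}^{-\beta}) = O((n/\Delta_{\bk})^\beta)$. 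Summing $\beta+1$ such terms yields the noise contribution $\tfrac{\sigma^2}{n}\sum_t w_t^2 = O\big(\tfrac{\sigma^2\beta}{n}(n/\Delta_{\bk})^{2\beta}\big)$, the second term in the bound.

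The treatment term is the heart of the argument. Setting $\hat Y_i = \sum_t w_t Y_i(\bz^t)$, I would expand $\Var(\sum_t w_t\bar M_t) = \tfrac1{n^2}\sum_{i,i'}\Cov(\hat Y_i,\hat Y_{i'})$ and split the double sum according to whether $\cN_i\cap\cN_{i'} = \emptyset$. For overlapping pairs (those sharing a common in-neighbor) there are at most $n\,\din\dout = O(nd^2)$, and each covariance is bounded crudely via Cauchy--Schwarz by $\max_i\Var(\hat Y_i)\le(\sum_t|w_t|Y_{\max})^2 = O(\beta^2 Y_{\max}^2(n/\Delta_{\bk})^{2\beta})$; summing and dividing by $n^2$ reproduces the $\tfrac{d^2}{n}$ term exactly as in the Bernoulli case.

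The main obstacle, and the genuinely new ingredient relative to Theorem~\ref{thm:ga_bern_var}, is the disjoint pairs. Under Bernoulli design these contribute nothing because disjoint neighborhoods receive independent treatments, but under a completely randomized design the fixed-count constraint induces a (negative) correlation even across disjoint sets, and there are $\Theta(n^2)$ such pairs, so a crude per-pair bound would fail to vanish with $n$. The key lemma I would prove is that for disjoint $\cS,\cS'$ with $|\cS|,|\cS'|\le\beta$, the staggered CRD satisfies $\big|\Cov\big(\prod_{j\in\cS}z_j^t,\prod_{j'\in\cS'}z_{j'}^{t'}\big)\big| = O\big(\tfrac{|\cS||\cS'|}{k_1}\big)$; this comes from writing the joint probability through the bracket notation of \eqref{eq:bracket_def} and estimating the ratio $[k_t/n]^{|\cS|+|\cS'|}/([k_t/n]^{|\cS|}[k_t/n]^{|\cS'|})$ termwise, with the worst case at the sparsest nonzero stage $k_1$ (stage $0$ contributes nothing since $k_0=0$). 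Combining this with $\sum_\cS|c_{i,\cS}|\le Y_{\max}$ and $(\sum_t|w_t|)^2 = O(\beta^2(n/\Delta_{\bk})^{2\beta})$ gives $|\Cov(\hat Y_i,\hat Y_{i'})| = O\big(\tfrac{\beta^4}{k_1}Y_{\max}^2(n/\Delta_{\bk})^{2\beta}\big)$ per disjoint pair; summing over $O(n^2)$ pairs and dividing by $n^2$ produces the $\tfrac{\beta^2}{k_1}$ term, and adding the overlapping, disjoint, and noise contributions yields the stated bound. The delicate point in the key lemma is handling the cross-stage case $t\ne t'$ together with the monotone nesting $A_t\subseteq A_{t'}$ of the treated sets; I expect this to require the most care, though the same $O(1/k_1)$ scaling should persist.
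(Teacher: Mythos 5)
Your proposal is correct and follows essentially the same route as the paper's proof: the identical unbiasedness argument via Lagrange interpolation of $F_C$, the same three-way variance split (noise, overlapping-neighborhood pairs, disjoint-neighborhood pairs), and the same key covariance bound $O\big(\tfrac{|\cS||\cS'|}{k_1}\big)$ for disjoint sets under staggered CRD, which the paper proves via Lemma~\ref{lem:crazy_bound} applied to the nested-stage joint probability $\big[\tfrac{k_t}{n}\big]^{|\cS|}\big[\tfrac{k_{t'}-|\cS|}{n-|\cS|}\big]^{|\cS'|}$ (this resolves exactly the cross-stage case $t \neq t'$ you flag as delicate, using exchangeability of the design given the monotone nesting). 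The only cosmetic differences are that the paper packages the decomposition into a reusable lemma (Lemma~\ref{eq:gen_var_bound}, applied with $B_1 = 1$, $B_2 = O(\beta^2/k_1)$, $\alpha = (n/\Delta_{\bk})^\beta$) and bounds overlapping pairs through the coefficient expansion $\sum_{\cS}|c_{i,\cS}| \leq Y_{\max}$ rather than your Cauchy--Schwarz step, both of which yield the same order.
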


Proofs of both of these theorems are given in Appendix~\ref{ap:var_calcs}. \mcedit{Notably, families of networks with $d = o(\log n)$ have variance asymptotically approaching 0. As such, our results can generally handle sparse networks.} A key technical piece of the analysis is handling the strong correlation in the observations across measurements due to the monotonicity enforced by the staggered rollout design. In the case of a linear potential outcomes model, we can strengthen both of these variance bounds, which match the results from \citep{YuAiroldiBorgsChayes22}\mcedit{, differing only in an additive term coming from observation noise}.
\begin{corollary} \label{cor:lin_bounds}
    For a linear potential outcomes model:
    \begin{itemize}
        \item The estimator $\widehat{\TTE}_{\text{\normalfont PI}}(\bp)$ under \emph{BRD}$(0,p)$ has variance at most $\frac{1-p}{np} \cdot L_{\max}^2 \mcedit{+ \frac{2\sigma^2}{np^2}}$.
        \item The estimator $\widehat{\TTE}_{\text{\normalfont PI}}(\bk/n)$ under \emph{CRD}$(0,k)$ has variance at most $\frac{n-k}{(n-1)k} \cdot L_{\max}^2 \mcedit{+ \frac{2\sigma^2n}{k^2}}$.
    \end{itemize}
\end{corollary}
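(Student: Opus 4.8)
The plan is to carry out a direct variance computation by specializing the polynomial interpolation estimator to the linear case $\beta=1$, where only $T=1$ treatment round is used and the estimator collapses to a single rescaled difference of population averages. With interpolation nodes $x_0=0$ and $x_1=p$, a one-line evaluation of the Lagrange coefficients gives $\ell_{0,\bp}(1)-\ell_{0,\bp}(0)=-\tfrac1p$ and $\ell_{1,\bp}(1)-\ell_{1,\bp}(0)=\tfrac1p$, so that $\widehat{\TTE}_{\text{PI}}(\bp)=\tfrac{1}{np}\sum_{i=1}^n\big(Y_{i,1}^{\mathrm{obs}}-Y_{i,0}^{\mathrm{obs}}\big)$. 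Since $p_0=0$ forces $\bz^0=\mathbf 0$, we have $Y_{i,0}^{\mathrm{obs}}=c_{i,\emptyset}+\eps_{i,0}$, and in the linear model $Y_i(\bz)=c_{i,\emptyset}+\sum_{j\in\cN_i}c_{ij}z_j$ the baseline constant cancels in each difference. Reindexing the double sum $\sum_i\sum_{j\in\cN_i}c_{ij}z^1_j$ by the influencing individual $j$ and writing $\tilde L_j:=\sum_{i:j\in\cN_i}c_{ij}$ (so $|\tilde L_j|\le L_j\le L_{\max}$), I obtain the clean decomposition
$$\widehat{\TTE}_{\text{PI}}(\bp)=\frac{1}{np}\sum_{j=1}^n z^1_j\,\tilde L_j+\frac{1}{np}\sum_{i=1}^n(\eps_{i,1}-\eps_{i,0}).$$
The analogous reduction for CRD replaces the prefactor $\tfrac1{np}$ by $\tfrac1k$, since evaluating $\ell_t$ at $x_1=k/n$ contributes $\pm\tfrac nk$. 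Unbiasedness is then transparent (and also inherited from Theorems~\ref{thm:ga_bern_var} and~\ref{thm:ga_cr_var}), as $\E[\tfrac{1}{np}\sum_j z^1_j\tilde L_j]=\tfrac1n\sum_j\tilde L_j=\TTE$.

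Because the treatment assignment $\bz^1$ is independent of the observation noise, the variance splits into a treatment part and a noise part. The noise part is immediate in both designs: $\sum_i(\eps_{i,1}-\eps_{i,0})$ has variance $2n\sigma^2$ by independence of the $2n$ i.i.d.\ Gaussians, giving the additive terms $\tfrac{2\sigma^2}{np^2}$ for BRD and $\tfrac{2\sigma^2 n}{k^2}$ for CRD after multiplying by the squared prefactor. It then remains to bound $\Var\big(\sum_j z^1_j\tilde L_j\big)$ under each design.

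For BRD the indicators $z^1_j$ are independent with $\Var(z^1_j)=p(1-p)$, so $\Var\big(\sum_j z^1_j\tilde L_j\big)=p(1-p)\sum_j\tilde L_j^2\le p(1-p)\,n\,L_{\max}^2$, which after the $\tfrac{1}{n^2p^2}$ prefactor yields $\tfrac{1-p}{np}L_{\max}^2$. The main obstacle is the CRD case, where the $z^1_j$ are negatively correlated because treatment is a uniform size-$k$ subset; there I would compute $\Var(z^1_j)=\tfrac{k(n-k)}{n^2}$ and $\Cov(z^1_i,z^1_j)=-\tfrac{k(n-k)}{n^2(n-1)}$ for $i\ne j$, then substitute into $\Var\big(\sum_j z^1_j\tilde L_j\big)=\sum_j\tilde L_j^2\,\Var(z^1_j)+\sum_{i\ne j}\tilde L_i\tilde L_j\,\Cov(z^1_i,z^1_j)$. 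Collapsing the cross terms via $\sum_{i\ne j}\tilde L_i\tilde L_j=(\sum_j\tilde L_j)^2-\sum_j\tilde L_j^2$ simplifies this to the sampling-without-replacement form $\tfrac{k(n-k)}{n(n-1)}\sum_j(\tilde L_j-\bar L)^2$ with $\bar L=\tfrac1n\sum_j\tilde L_j$. Bounding $\sum_j(\tilde L_j-\bar L)^2\le\sum_j\tilde L_j^2\le nL_{\max}^2$ and applying the $\tfrac1{k^2}$ prefactor then gives $\tfrac{n-k}{(n-1)k}L_{\max}^2$, which together with the noise term completes the bound.
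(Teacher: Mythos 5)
Your proposal is correct and follows essentially the same route as the paper's proof: evaluate the Lagrange coefficients to collapse the estimator to a rescaled first-difference, separate the observation-noise variance (your independence argument is equivalent to the paper's law-of-total-variance step), reindex the outcome term by the influencing individual $j$, and then use independence of the $z_j^1$ for BRD and the variance/covariance values of sampling without replacement for CRD. Your repackaging of the CRD algebra as $\tfrac{k(n-k)}{n(n-1)}\sum_j(\tilde L_j-\bar L)^2$ is just a tidier form of the paper's step of dropping the nonpositive cross term, so the two arguments coincide.
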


Observe that the Bernoulli estimator does not incorporate any information about the realized treatments. Notably, it does not account for the number of treated individuals. While this binomial random variable concentrates around its mean (especially for large values of $n$), it fails to account for significant deviations from this mean. Since this information is available at the time of estimation, it can be incorporated into an estimator. We let $\hat{\bk} = (\hat{k}_0 = 0, \hat{k}_1, \hdots, \hat{k}_\beta)$ be the realized number of treated individuals at each time step, and consider the estimator $\widehat{\TTE}_{\text{PI}}(\hat{\bk}/n)$.

\begin{theorem} \label{thm:ga_varred}
    Consider a potential outcomes model with degree $\beta$. Under a staggered rollout Bernoulli design with treatment probabilities $\bp = (p_0=0, \hdots, p_\beta)$, $\widehat{\TTE}_{\text{\normalfont PI}}(\hat{\bk}/n)$ has bias decaying exponentially in $n$ and variance $O \Big( \beta^2 \; Y_{\max}^2 \Big( \tfrac{d^2}{n} + \tfrac{\beta^2}{p_1n} \Big) \cdot \Delta_{\bp}^{-2\beta} \mcedit{\ + \ \tfrac{\beta \sigma^2}{n} \Delta_{\bp}^{-2\beta}} \Big)$.
\end{theorem}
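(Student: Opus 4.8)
The plan is to condition on the realized treatment counts $\hat{\bk}$ and reduce to the completely randomized design analysis of Theorem~\ref{thm:ga_cr_var}. The key structural observation is that the staggered rollout Bernoulli design, conditioned on $\hat{\bk} = (\hat{k}_0, \ldots, \hat{k}_\beta)$, is distributed exactly as the staggered \emph{CRD}$(\hat{\bk})$. Indeed, writing $z_i^t = \Ind(u_i \le p_t)$ with $u_i \sim U[0,1]$ iid, the set of individuals newly treated at stage $t$ consists of those with $u_i \in (p_{t-1}, p_t]$; conditioning on the sizes of these groups leaves their membership uniform among all nested sequences of the prescribed cardinalities, which is precisely the law of \emph{CRD}$(\hat{\bk})$. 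Since the Gaussian noise is independent of the $u_i$, the conditional law of the observations $\Yobs$, and hence of $\widehat{\TTE}_{\text{PI}}(\hat{\bk}/n)$, is exactly that of the CRD estimator with counts $\hat{\bk}$.

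For the bias, I would first partition on the \emph{good} event $G$ on which the counts are strictly increasing (equivalently $\Delta_{\hat{\bk}} \ge 1$) and its complement on which some increment vanishes. On $G$, Theorem~\ref{thm:ga_cr_var} gives $\E[\widehat{\TTE}_{\text{PI}}(\hat{\bk}/n)\mid\hat{\bk}] = \TTE$, while on $G^c$ the estimator equals $0$ by the case split in \eqref{eq:bern_poly}. Thus the bias equals $-\TTE \cdot \Prob(G^c)$. Because each increment $\hat{k}_t - \hat{k}_{t-1} \sim \text{Binomial}(n, p_t - p_{t-1})$ has mean at least $n\Delta_{\bp}$, a union bound together with $\Prob(\text{Binomial}(n,q)=0)=(1-q)^n \le e^{-qn}$ shows $\Prob(G^c) \le \beta\, e^{-\Delta_{\bp} n}$; since $|\TTE| \le Y_{\max}$, the bias decays exponentially in $n$.

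For the variance, I would use the law of total variance, $\Var(\widehat{\TTE}_{\text{PI}}(\hat{\bk}/n)) = \E[\Var(\cdot \mid \hat{\bk})] + \Var(\E[\cdot\mid\hat{\bk}])$. The outer-variance term is governed by the exponentially small $\Prob(G^c)$ exactly as above and is negligible. The main term $\E[\Var(\cdot \mid \hat{\bk})\,\Ind(G)]$ is, by Theorem~\ref{thm:ga_cr_var}, bounded by the expectation over $\hat{\bk}$ of
$$\beta^2 Y_{\max}^2\Big(\tfrac{d^2}{n} + \tfrac{\beta^2}{\hat{k}_1}\Big)\big(\tfrac{n}{\Delta_{\hat{\bk}}}\big)^{2\beta} + \tfrac{\sigma^2\beta}{n}\big(\tfrac{n}{\Delta_{\hat{\bk}}}\big)^{2\beta}.$$
The target bound then follows once the random quantities $\hat{k}_1$ and $\Delta_{\hat{\bk}}$ are replaced by their typical values $np_1$ and $n\Delta_{\bp}$: for $\hat{k}_1 \sim \text{Binomial}(n,p_1)$ the identity $\E[1/(\hat{k}_1+1)] = \tfrac{1-(1-p_1)^{n+1}}{(n+1)p_1}$ combined with $\tfrac{1}{\hat{k}_1}\le\tfrac{2}{\hat{k}_1+1}$ on $\{\hat{k}_1\ge1\}$ gives $\E[\hat{k}_1^{-1}\Ind(\hat{k}_1\ge1)] = O(1/(np_1))$, while $n/\Delta_{\hat{\bk}} \approx \Delta_{\bp}^{-1}$ by concentration of the Binomial increments.

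The main obstacle will be bounding the negative moments $\E\big[\Delta_{\hat{\bk}}^{-2\beta}\,\Ind(G)\big]$, and the mixed moment pairing this with $\hat{k}_1^{-1}$, since $\Delta_{\hat{\bk}}$ can be as small as $1$ on low-probability events. I would split each such expectation at the concentration event $\{\Delta_{\hat{\bk}} \ge \tfrac12 n\Delta_{\bp}\}$: on this event $(n/\Delta_{\hat{\bk}})^{2\beta} \le (2/\Delta_{\bp})^{2\beta}$ holds deterministically, so the factor pulls out as a constant and the remaining $\E[\hat{k}_1^{-1}]$ yields the stated $\tfrac{\beta^2}{p_1 n}\Delta_{\bp}^{-2\beta}$ contribution; off this event $(n/\Delta_{\hat{\bk}})^{2\beta} \le n^{2\beta}$ is only polynomially large in $n$, whereas the Chernoff bound makes its probability exponentially small in $n$, so this contribution vanishes asymptotically. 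Assembling the pieces gives the claimed variance and confirms that the additional Bernoulli randomness in the treatment counts inflates the CRD bound only by the negligible outer-variance and bad-event terms.
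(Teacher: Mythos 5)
Your proposal is correct and follows essentially the same route as the paper's proof: the same reduction (conditional on the realized counts, the staggered Bernoulli rollout is distributed as a staggered CRD, which the paper uses implicitly and you state explicitly), the same bias argument via the degenerate-count event on which the estimator is defined to be $0$, and the same law-of-total-variance decomposition whose main term invokes Theorem~\ref{thm:ga_cr_var} conditionally on a concentration event for $\hat{\bk}$. The one quantitative difference is your fixed split at $\Delta_{\hat{\bk}} \ge \tfrac{1}{2} n\Delta_{\bp}$, which leaves an extra $2^{2\beta}$ factor multiplying $\Delta_{\bp}^{-2\beta}$; the paper instead takes the concentration radius $\delta = \Theta(1/\log n)$, so that the perturbed quantities $(\Delta_{\bp}-\delta p)^{-2\beta}$ and $(1-\delta)p_1 n$ converge to $\Delta_{\bp}^{-2\beta}$ and $p_1 n$ as $n$ grows, recovering the stated bound without any exponential-in-$\beta$ constant.
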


A proof is given in Appendix~\ref{ap:var_calcs}. For large $n$, the performance of these three estimators will converge to each other. While our theoretical variance bound in Theorem~\ref{thm:ga_varred} does not show improvement upon that from Theorem~\ref{thm:ga_bern_var}, our experimental results illustrate empirical improvements of this estimator.
    
\medskip \noindent
{\bf Discussion.}
Our results illustrate a natural relationship between the complexity of the model (i.e. its degree $\beta$) and the complexity of the randomized design and corresponding estimator; we require $T \geq \beta$, i.e. $\beta+1$ outcome measurements, in order to construct an unbiased estimator. Intuitively, each of these measurements allows us to quantify one ``degree'' of the network effects. 
Given an overall treatment budget $p = p_T$ with a uniform treatment schedule where $p_t = t p/T$, the difference between treatment fractions is $\Delta_{\bp} = p/T$. As a result, for our setting in which $T=\beta$, the variance scales as $(\beta/p)^{2\beta}$, where $\beta$ is always bounded above by the size of the neighborhood, i.e. graph degree. \cyedit{In comparison, under a fully general neighborhood model, the Horvitz-Thompson estimator has a variance that scales as $O(1/np^d)$, where $d$ denotes the size of the largest neighborhood.}

A practical question, critical to real-world experimental settings, is how one should determine the degree $\beta$ if it is not known in advance. Even if we have many measurements, it may not always be wise to increase the degree of the interpolant, as this increases the magnitude of its slope outside of the interpolating region $[0,p]$. When the expected number of treated individuals $np$ is small relative to the population size $n$ (so $p \ll 1)$, the value of the interpolant at $1$ will be highly sensitive to any deviation of the later measurements from their expectation. On the other hand, choosing to fit a low degree polynomial may lead to bias if the underlying network effects exhibit higher order interactions. The study of the sensitivity of polynomial interpolation estimators under model misspecification in our randomized experiment setting is a captivating direction for future work. \cyedit{In heuristic settings we recommend choosing a conservative $\beta$ erring on lower values, as is also common practice when using polynomial regression in supervised learning settings.}


\cyedit{The low polynomial structure is primarily used to show that the expected total outcomes function $F_{\mathcal{D}}$ is a low degree polynomial of the treatment fraction. As $F_{\mathcal{D}}$ represents an expectation taken over the population, where treatments are assigned uniformly at random, it is plausible that this function varies in a smooth and simple way when the treatment level is changed. While the polynomial class is not the only hierarchy of function classes to capture complexity, it is a fairly natural one. However, continued study of this overall approach of interpolation for staggered rollout designs for other function classes beyond polynomial would also be incredibly interesting and relevant.}

\cyedit{In this work, we also limit our attention to static network effects, but extensions to incorporate time-varying network effects or even time-varying network structures is an interesting direction for future work. When the total treatment budget $p_T$ is small, such that a significant number of individuals are observed under the baseline outcomes across all stages of the experiment, then we could handle time-fixed effects via a simple modification of our estimator by using these baseline individuals to estimate the time-fixed effects and subtracting them from the current estimator.}

\section{Experiments}
\label{sec:experiments}


We provide simulations on synthetic data to illustrate the performance of our estimators relative to existing estimators. 
For a population of $n$ individuals, we generate random directed networks of $n$ nodes using a configuration model with in-degrees distributed as a power law with exponent 2.5, and out-degrees evenly shared among individuals.
%
For degree $\beta$, we construct the following potential outcomes model:
\begin{equation}
    Y_i(\bz) = c_{i,\emptyset} + \sum_{j\in \cN_i} \tilde{c}_{ij}z_j + \sum_{\ell = 2}^{\beta} \left( \frac{\sum_{j \in \cN_i}\tilde{c}_{ij}z_j}{\sum_{j \in \cN_i}\tilde{c}_{ij}} \right)^{\ell},
\end{equation}
where $c_{i,\emptyset} \sim U[0,1]$, $\tilde{c}_{ii} \sim U[0,1]$, and for $i \neq j$, $\tilde{c}_{ij} = v_j |\cN_i|/\sum_{k: (k,j) \in E} |\cN_k|$ for $v_j \sim U[0,r]$, where $r$ denotes a hyperparameter that governs the relative magnitude of the network effects relative to the direct effects. Essentially $v_j$ represents the magnitude of individual $j$'s influence, which is then shared among its out-neighbors proportional to their in-degrees. \mcedit{For simplicity we assume no observation noise in the experiments, i.e. $\sigma = 0$.}


\paragraph{Other Algorithms.} 
We benchmark our proposed estimators against least squares regression and difference-in-mean estimators. As these estimators don't utilized the staggered rollout design, we evaluate them on the measurements taken at the last stage, $T$, of the experiment. We will use $\bz$ to denote the treatment vector at time $T$ (suppressing the superscript). As a network sampled from a configuration model does not exhibit clustering, the solutions that propose cluster randomized designs perform poorly, and thus we omit them from the experiments.

The standard difference in means estimator is the difference between the average outcome of individuals assigned to treatment and the average outcome of individuals assigned to control, given by
\begin{equation}
    \widehat{\TTE}_{\text{DM}} = \frac{\sum_{i\in[n]}z_iY_i(\bz)}{\sum_{i\in[n]}z_i} - \frac{\sum_{i\in[n]}(1-z_i)Y_i(\bz)}{\sum_{i\in[n]}(1-z_i)}. \label{eq:DiffMeans-Stnd}
\end{equation}
This estimator is biased under the presence of network interference. Note that $\widehat{\TTE}_{\text{DM}}$ does not take into account any information about each individual's neighborhood. 

A modification of the difference in means estimator incorporates knowledge of the number of treated neighbors of each individual.  Let $U_i$ denote the number of individuals in $\cN_i \setminus \{i\}$ assigned to treatment, and let $\tilde{U}_i$ denote the number of neighbors individuals in $\cN_i \setminus \{i\}$ assigned to control. This estimator is given by
\begin{equation}
    \widehat{\TTE}_{\text{DM}(\lambda)} = \frac{\sum_{i\in[n]}z_i \Ind(U_i \geq \lambda) Y_i(\bz)}{\sum_{i\in[n]} z_i \Ind(U_i \geq \lambda) } - \frac{\sum_{i\in[n]}(1-z_i)\Ind(\tilde{U}_i \geq \lambda)Y_i(\bz)}{\sum_{i\in[n]}(1-z_i)\Ind(\tilde{U}_i \geq \lambda)}, \label{eq:DiffMeans-Thresh}
\end{equation}
for some user-defined tolerance $\lambda \in [0,1]$. This estimator only counts an individual's outcome if at least $\lambda$ of the individual's neighborhood is assigned to the same treatment as the individual itself. In our experiments, we set $\lambda = 0.75.$

Finally we compare against least squares regression models of degree $\beta$, which posit that the potential outcomes model can be described as 
\begin{equation} \label{eq:LS-Prop-model}
    Y_i(\bz) = g(z_i, \Bar{z}_i) = \Big(\rho + \textstyle\sum_{k=1}^{\beta} \gamma_k \, X_i^k\Big) + z_i \Big(\tilde{\rho} + \textstyle\sum_{k=1}^{\beta-1} \tilde{\gamma}_k \, X_i^k\Big),
\end{equation}
for some covariate $X_i$. In the two variations we consider, we set $X_i$ equal to either the number of treated neighbors or the proportion of treated neighbors, where we do not include $i$ itself. The two sets of coefficients $(\rho, \gamma_1, \dots \gamma_{\beta})$ and $(\rho, \tilde{\gamma}_1, \dots \tilde{\gamma}_{\beta})$ allow for the model to be different when $i$ is treated vs not treated, and the second summation only goes until $\beta-1$ since we want to only allow degree $\beta$ interactions. The total number of coefficients in the model is $2\beta+1$. Least squares regression finds the set of coefficients that minimizes the least squares predictive error on the dataset, which consists of $\{z_i, X_i, Y_i(\bz)\}_{i \in [n]}$. The estimated coefficients define an estimate for the function $\hat{g}$. For the variation which uses the number of treated neighbors as the covariates, setting $X_i = U_i$, the estimate is given by
\begin{equation}
    \widehat{\TTE}_{\text{LS-Num}} = \tfrac{1}{n} \textstyle\sum_{i=1}^n (\hat{g}(1,|\cN_i|-1) - \hat{g}(0,0)).
\end{equation}
For the variation which uses the proportion of treated neighbors as the covariates, setting $X_i = U_i/(|\cN_i| - 1)$,  the estimate is given by
\begin{equation}
    \widehat{\TTE}_{\text{LS-Prop}} = \tfrac{1}{n} \textstyle\sum_{i=1}^n (\hat{g}(1,1) - \hat{g}(0,0)).
\end{equation}

As completely randomized design is more balanced than Bernoulli randomized design, we evaluate all the benchmark algorithms under a completely randomized design.

\begin{figure}[t]
     \centering
     \begin{subfigure}[b]{0.49\textwidth}
         \centering
         \includegraphics[width=\textwidth]{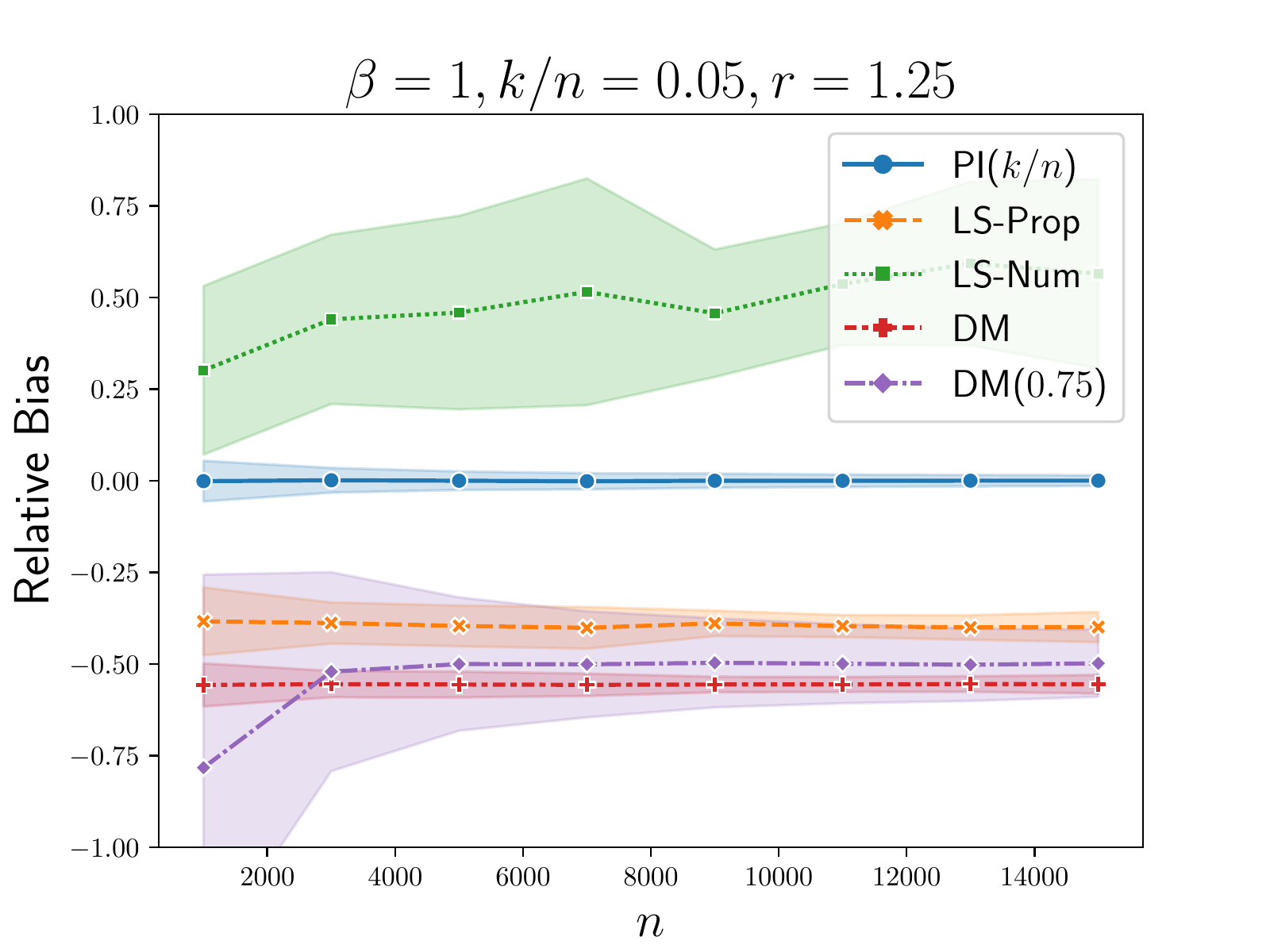}
         \caption{Varying size of the population}  \label{fig:size}
     \end{subfigure}
     \begin{subfigure}[b]{0.49\textwidth}
         \centering
         \includegraphics[width=\textwidth]{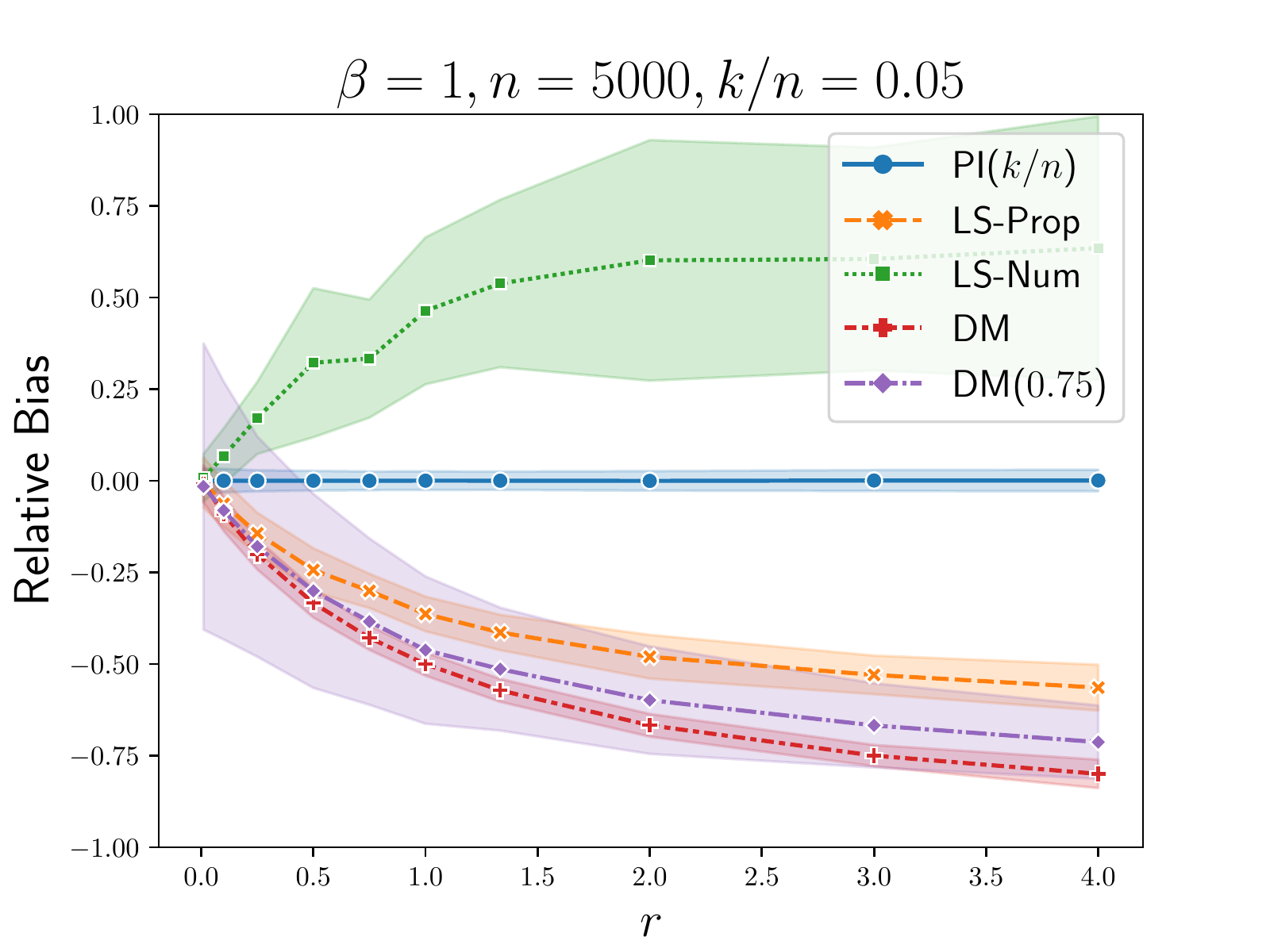}
         \caption{Varying ratio of direct:indirect effects}  \label{fig:ratio}
     \end{subfigure} \\
     \begin{subfigure}[b]{0.49\textwidth}
         \centering
         \includegraphics[width=\textwidth]{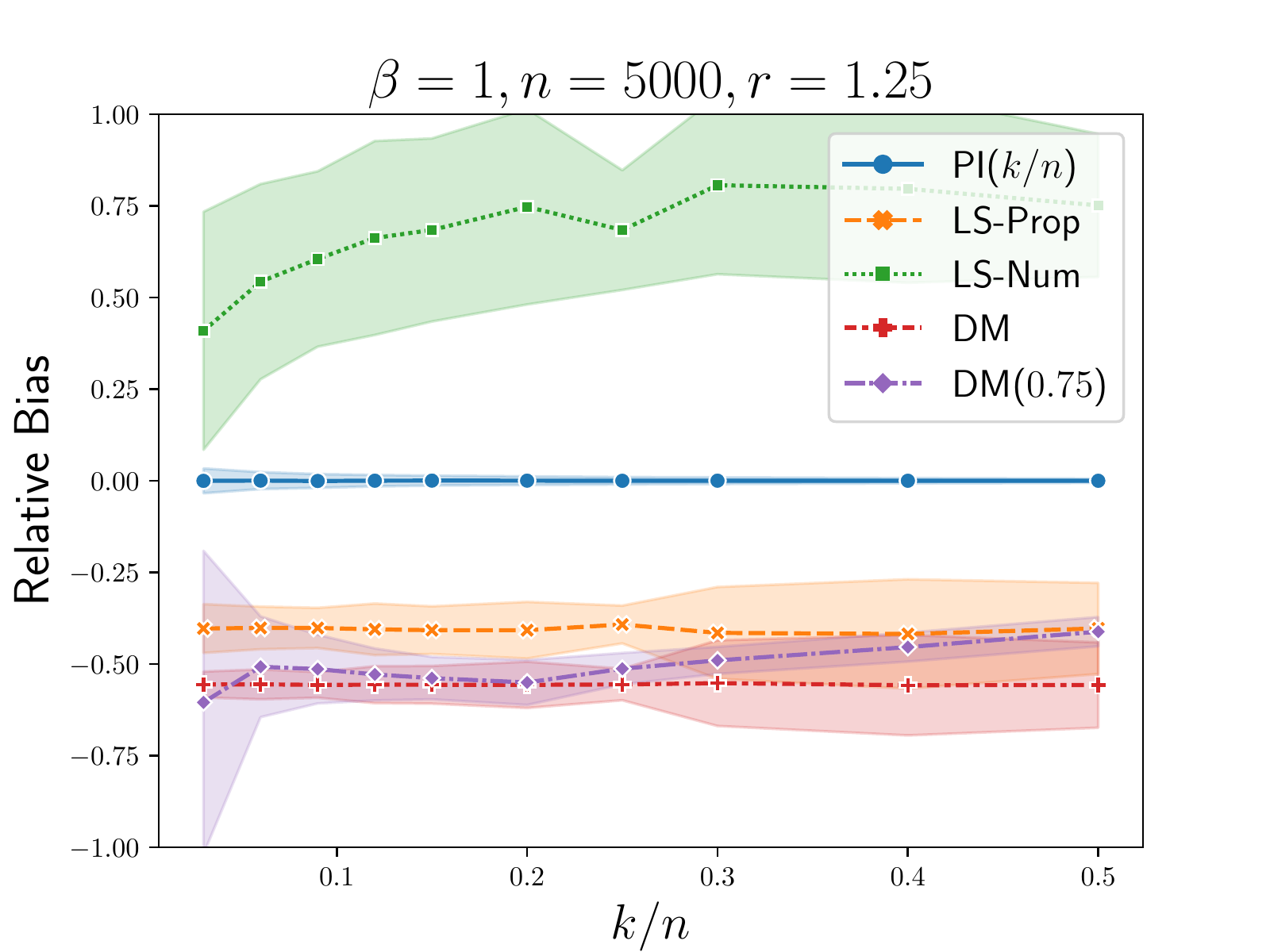}
         \caption{Varying treatment budget}  \label{fig:p}
     \end{subfigure}
     \begin{subfigure}[b]{0.49\textwidth}
         \centering
         \includegraphics[width=\textwidth]{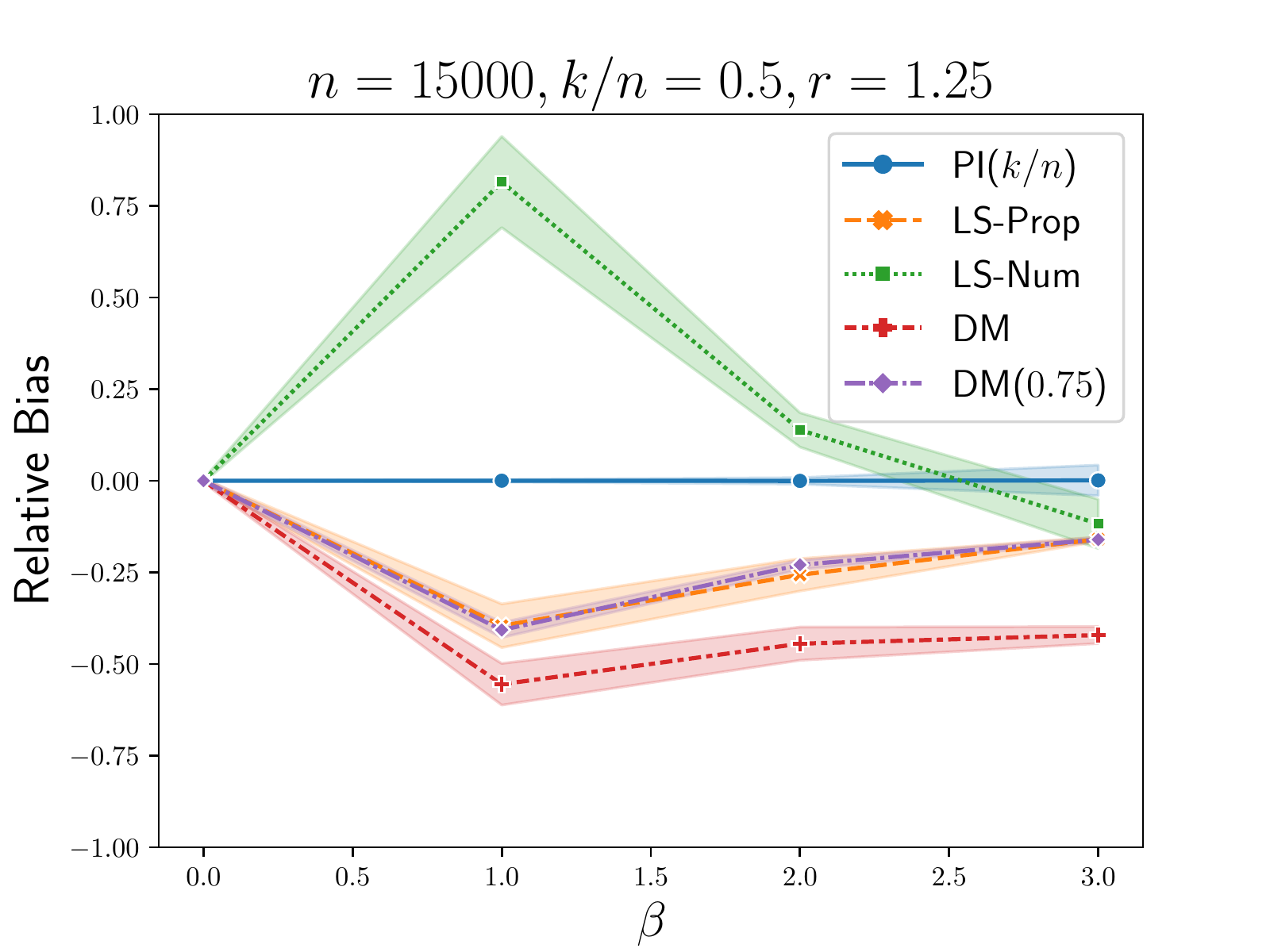}
         \caption{Varying the model degree}  \label{fig:beta}
     \end{subfigure} 
        \caption{Four graphs visualizing the performance of various $\TTE$ estimators as various parameters are adjusted. The height of each graph depicts the experimental relative bias of the estimator and the shaded width depicts the experimental standard deviation.} \label{fig:CRD_results}
\end{figure}
\paragraph{Results and Discussion.} 
For each population size $n$, we sample $G$ networks from the distribution described above. For each configuration of parameters in the experiment, we sample $N$ treatment schedules $\{ \bz^0, \hdots, \bz^\beta \}$ from our parameterized distribution class (Bernoulli or CRD) compute the $\TTE$ using each estimator. For each estimator, we plot the relative bias of the $\TTE$ estimates averaged over the results from these $GN$ samples and normalized by the magnitude of the $\TTE$. The width of the shading in the figures depicts the standard deviation across the $GN$ estimates. We ran all experiments on a Linux-based machine with $20$ CPU(s) and $10$ cores. The experiments for the linear setting took $8.3$ minutes and the experiments with varying polynomial degree took $4.6$ minutes.

In Figure \ref{fig:CRD_results}, we visualize the effect of four network/estimator parameters on the quality of each of the five $\TTE$ estimators (the four described above, and our CRD estimator with treatment targets $k_t = \frac{tk}{\beta}$). Specifically, we consider the effects of the population size ($n$), the maximum proportion of treated individuals ($k/n$), the degree of the potential outcomes model ($\beta$), and the ratio between the network and direct effects ($r$). Each of the plots fixes three of these parameters and varies the fourth. Specific settings of the parameters are listed on each plot.     



In Figure \ref{fig:CRD_results}, our estimator (in blue) is unbiased as expected and the variance decreases as $n$ and $k/n$ increases. However, the other estimators remain significantly biased, with higher variances than ours, regardless of treatment budget or population size. As the ratio $r$ increases the network effects become more significant relative to the direct effect, and thus the bias of other estimators also increases. As a sanity check, when the ratio is close to 0, all estimators are unbiased as there are no network effects. 

In Figure \ref{fig:our_results}, we compare the variants of our estimator, evaluating $\widehat{\TTE}_{\text{\normalfont PI}}(\bk/n)$ under CRD and evaluating $\widehat{\TTE}_{\text{\normalfont PI}}(\bp)$ and $\widehat{\TTE}_{\text{\normalfont PI}}(\hat{\bk}/n)$ under Bernoulli($\bp$) randomized design, where $p_t = tp/\beta$ and $\hat{\bk}$ is the vector of realized treatment counts. The estimators 
$\widehat{\TTE}_{\text{\normalfont PI}}(\bk/n)$ and $\widehat{\TTE}_{\text{\normalfont PI}}(\hat{\bk}/n)$ perform nearly identically. $\widehat{\TTE}_{\text{\normalfont PI}}(\hat{\bk}/n)$ has lower variance than $\widehat{\TTE}_{\text{\normalfont PI}}(\bp)$, which is intuitive as it performs polynomial interpolation on the realized treatment fraction rather than the expected treatment fraction.
We include additional experiments for higher degree models in Appendix~\ref{ap:quadratic}.

\meedit{An additional point of comparison for these estimators is their computational complexity. Here, the most natural comparison is between our estimators and least squares, as these are the only approaches that make use of the various rounds of outcome measurements. Since our estimators require only an aggregated measurement of the individual’s outcomes, the $O(\beta^2)$ runtime of the interpolation is asymptotically dominated by the $O(n\beta)$ time to read in the outcome measurements. The least squares methods fit $O(\beta)$ parameters and have time complexity $O(\beta^2n)$.
}

\begin{figure}[t]
     \centering
     \begin{subfigure}[b]{0.49\textwidth}
         \centering
         \includegraphics[width=\textwidth]{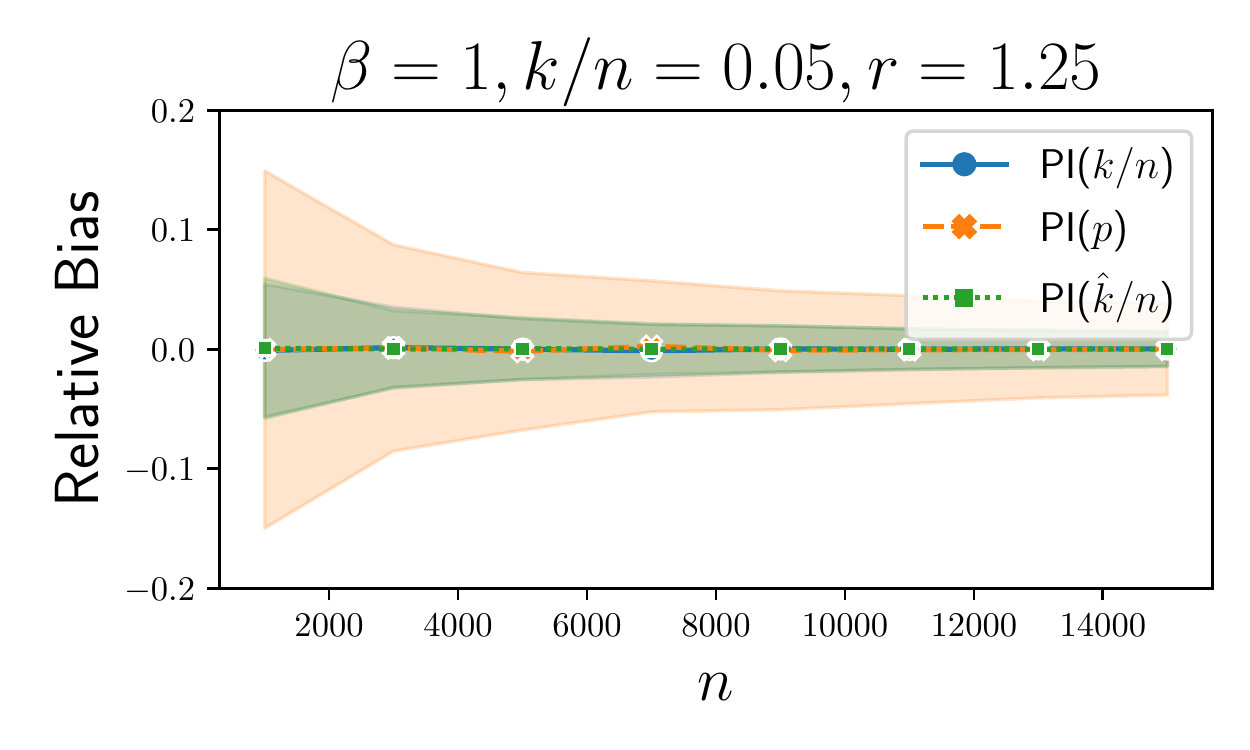}
         \caption{Varying size of the population}  \label{fig:size_ours}
     \end{subfigure}
     \begin{subfigure}[b]{0.49\textwidth}
         \centering
         \includegraphics[width=\textwidth]{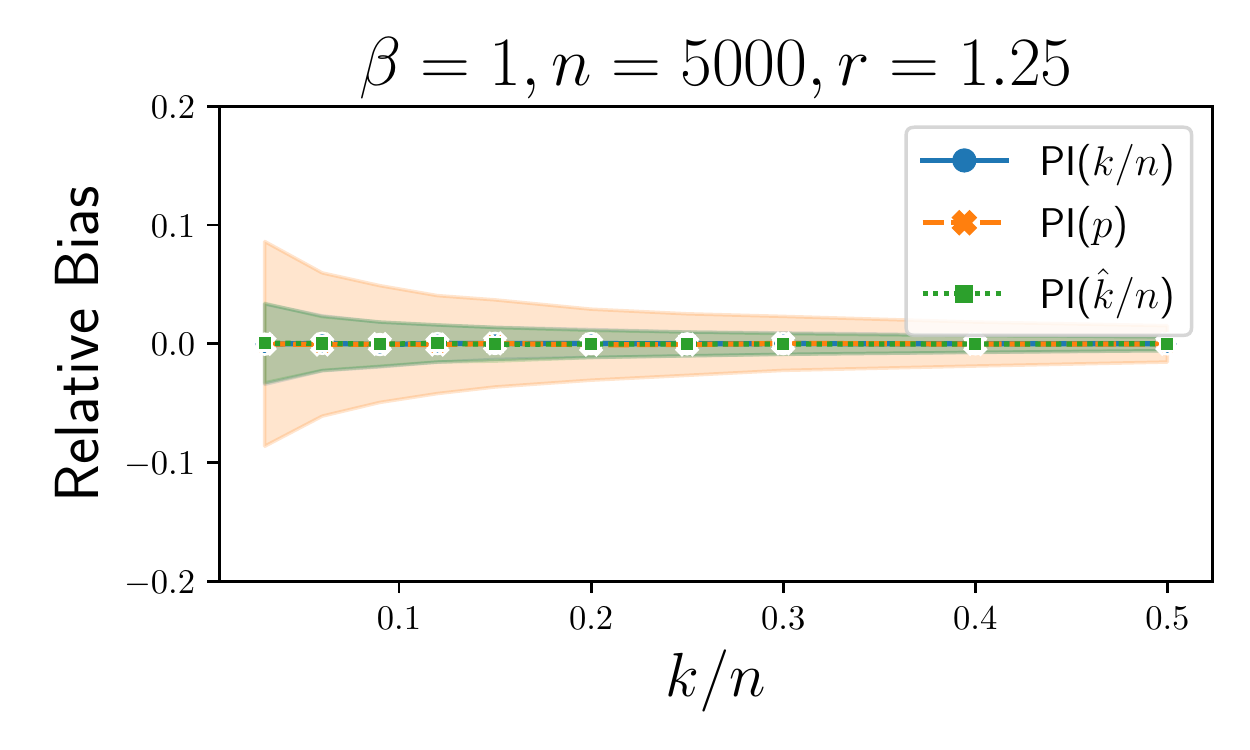}
         \caption{Varying treatment budget}  \label{fig:p_ours}
     \end{subfigure} %
        \caption{Two graphs visualizing the performance of our proposed $\TTE$ estimators as the size of the population ($n$) or treatment budget ($k/n$) is varied. The height of each graph depicts the experimental relative bias of the estimator and the shaded width depicts the experimental standard deviation. The blue and the green plots essentially overlap.} \label{fig:our_results}
\end{figure}
\section{Conclusion}

We propose a new approach for causal inference under network interference which performs significantly better than existing approaches without requiring knowledge of the graph. In particular, the additional measurements from a staggered rollout design enable us to reduce the task of estimating total treatment effect to that of polynomial interpolation. We show that under a flexible class of low degree polynomial potential outcomes our estimator is unbiased with variance scaling as $O(1/n)$. Future directions include how to optimally perform model selection when $\beta$ is unknown, and generalizing to a dynamic setting by incorporating time-dependent noise to the model, considering time-varying effects, or allowing for time-varying networks. The staggered rollout design framework has implications towards estimation under other model classes beyond polynomial, such as sublinear or monotone functions, under which one may be able to construct bounds on $\TTE$.


\begin{ack}
We gratefully acknowledge financial support from the National Science Foundation grants CCF-1948256 and CNS-1955997 and the National Science Foundation Graduate Research Fellowship Program. Dr. Yu is also supported by an Intel Rising Stars award and a JPMorgan Faculty Research award. Mayleen is also supported in part by the Sloan Foundation grant 90855.
\end{ack}

\bibliography{biblio}
\medskip

\section*{Checklist}


\begin{enumerate}
    \item For all authors...
    \begin{enumerate}
      \item Do the main claims made in the abstract and introduction accurately reflect the paper's contributions and scope?
        \answerYes{}
      \item Did you describe the limitations of your work?
        \answerYes{See the end of section \ref{sec:GASR}.}
      \item Did you discuss any potential negative societal impacts of your work?
        \answerYes{This is implied by the limitations of our work--because causal inference is applied to societal problems, problems such as model misspecification can have negative consequences. We comment on this at the end of section \ref{sec:GASR}.}
      \item Have you read the ethics review guidelines and ensured that your paper conforms to them?
        \answerYes{}
    \end{enumerate}
    
    \item If you are including theoretical results...
    \begin{enumerate}
      \item Did you state the full set of assumptions of all theoretical results?
        \answerYes{}
      \item Did you include complete proofs of all theoretical results?
        \answerYes{All proofs are included in the appendices.}
    \end{enumerate}
    
    \item If you ran experiments...
    \begin{enumerate}
      \item Did you include the code, data, and instructions needed to reproduce the main experimental results (either in the supplemental material or as a URL)?
        \answerYes{see Section \ref{sec:experiments} for details on how we got the results. Code and data to reproduce results is included in supplementary material.}
      \item Did you specify all the training details (e.g., data splits, hyperparameters, how they were chosen)?
        \answerYes{See Section \ref{sec:experiments}.}
            \item Did you report error bars (e.g., with respect to the random seed after running experiments multiple times)?
        \answerYes{See Figure \ref{fig:CRD_results}. The shading indicates experimental standard deviation over all samples.}
            \item Did you include the total amount of compute and the type of resources used (e.g., type of GPUs, internal cluster, or cloud provider)?
        \answerYes{See the results and discussion paragraph in Section \ref{sec:experiments}.}
    \end{enumerate}
    
    \item If you are using existing assets (e.g., code, data, models) or curating/releasing new assets...
    \begin{enumerate}
      \item If your work uses existing assets, did you cite the creators?
        \answerNA{}
      \item Did you mention the license of the assets?
        \answerNA{}
      \item Did you include any new assets either in the supplemental material or as a URL?
        \answerYes{All the code we used will be accessible via the supplementary material.}
      \item Did you discuss whether and how consent was obtained from people whose data you're using/curating?
        \answerNA{}
      \item Did you discuss whether the data you are using/curating contains personally identifiable information or offensive content?
        \answerNA{}
    \end{enumerate}
    
    \item If you used crowdsourcing or conducted research with human subjects...
    \begin{enumerate}
      \item Did you include the full text of instructions given to participants and screenshots, if applicable?
        \answerNA{}
      \item Did you describe any potential participant risks, with links to Institutional Review Board (IRB) approvals, if applicable?
        \answerNA{}
      \item Did you include the estimated hourly wage paid to participants and the total amount spent on participant compensation?
        \answerNA{}
    \end{enumerate}
\end{enumerate}

\newpage
\appendix

\section{Variance Calculations} \label{ap:var_calcs}

In this section, we establish the unbiasedness and variance bounds of the estimators introduced throughout the paper. The following lemma will be useful for some of these calculations.

\begin{lemma} \label{eq:gen_var_bound}
    \mcedit{Suppose we have $\Yobs = Y_i(\bz^t) + \eps_{i,t} \ $ for iid noise $\eps_{i,t} \sim N(0,\sigma^2)$}
    and our estimator has the form 
    \[
        \widehat{\TTE} = \frac{1}{n} \sum_{t=0}^{\beta} \sum_{i=1}^{n} \alpha_{i,t} \cdot \Yobs,
    \]
    with each $|\alpha_{i,t}| = O(\alpha)$. Further suppose that for any $t,t' \in 0, \hdots, \beta$ and two subsets $\cS, \cS'$ of cardinality at most $\beta$,
    \[
        \bigg| \emph{\Cov} \Big[ \prod_{j \in \cS} z^t_j , \prod_{j' \in \cS'} z^{t'}_{j'} \Big] \bigg| \leq \begin{cases}
            B_1 & \cS \cap \cS' \not= \varnothing, \\
            B_2 & \cS \cap \cS' = \varnothing.
        \end{cases}
    \]
    Then, 
    \[
        \emph{Var} \Big[ \widehat{\TTE} \Big] = O \bigg( \alpha^2 \beta^2 Y_{\max}^2 \Big( \tfrac{d^2}{n} \max \{ B_1, B_2 \} + B_2 \Big) \mcedit{+ \tfrac{\sigma^2 \beta}{n}\alpha^2} \bigg).
     \]
\end{lemma}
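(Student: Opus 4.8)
The plan is to split $\widehat{\TTE}$ into its signal and noise parts and bound their variances separately. Writing $\widehat{\TTE} = S + N$ with $S = \frac1n\sum_{t=0}^\beta\sum_{i=1}^n\alpha_{i,t}Y_i(\bz^t)$ and $N = \frac1n\sum_{t=0}^\beta\sum_{i=1}^n\alpha_{i,t}\eps_{i,t}$, the coefficients $\alpha_{i,t}$ are deterministic (they depend only on the design targets), and the noise $\eps_{i,t}$ is drawn independently of the treatment vectors $\bz^t$, hence of every $Y_i(\bz^t)$. Thus $\Cov[S,N]=0$ and $\Var[\widehat{\TTE}] = \Var[S] + \Var[N]$. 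The noise term is immediate: since the $\eps_{i,t}$ are iid with variance $\sigma^2$, $\Var[N] = \frac{\sigma^2}{n^2}\sum_{t,i}\alpha_{i,t}^2$, and as there are $(\beta+1)n$ summands each $O(\alpha^2)$, this is $O(\sigma^2\beta\alpha^2/n)$, matching the additive noise term in the claim.

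\textbf{Setting up the signal variance.} The bulk of the work is $\Var[S]$. First I would expand it by bilinearity as $\frac1{n^2}\sum_{t,t'}\sum_{i,i'}\alpha_{i,t}\alpha_{i',t'}\,\Cov[Y_i(\bz^t),Y_{i'}(\bz^{t'})]$, then substitute the degree-$\beta$ polynomial form \eqref{eq:ppom} to write each covariance as $\sum_{\cS\subseteq\cN_i}\sum_{\cS'\subseteq\cN_{i'}}c_{i,\cS}c_{i',\cS'}\,\Cov\big[\prod_{j\in\cS}z_j^t,\prod_{j'\in\cS'}z_{j'}^{t'}\big]$. Applying the hypothesized covariance bounds, I would split every monomial-pair term according to whether $\cS\cap\cS'=\varnothing$ (bound $B_2$) or $\cS\cap\cS'\neq\varnothing$ (bound $B_1$), producing two sums to control.

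\textbf{The two cases.} The disjoint contribution is the easier one: taking absolute values and then dropping the disjointness constraint only adds nonnegative terms, so it is at most $\frac{B_2}{n^2}\big(\sum_t\sum_i|\alpha_{i,t}|\sum_{\cS}|c_{i,\cS}|\big)^2$. Using $|\alpha_{i,t}|=O(\alpha)$, the bound $\sum_{\cS}|c_{i,\cS}|\leq Y_{\max}$, and the $(\beta+1)$ time steps, the inner factor is $O(\beta\alpha n Y_{\max})$, giving an $O(B_2\beta^2\alpha^2 Y_{\max}^2)$ bound. For the intersecting contribution, the key structural observation is that $\cS\cap\cS'\neq\varnothing$ forces a common node $j\in\cS\cap\cS'\subseteq\cN_i\cap\cN_{i'}$, so both $i$ and $i'$ must be out-neighbors of $j$. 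Bounding each intersecting pair by a single witnessing $j$ and introducing the per-target influence $L_{i,j}=\sum_{\cS\ni j}|c_{i,\cS}|\leq Y_{\max}$, one reorganizes the sum as $\frac{B_1}{n^2}\,O(\beta^2\alpha^2)\sum_j\big(\sum_i L_{i,j}\big)^2 = \frac{B_1}{n^2}\,O(\beta^2\alpha^2)\sum_j L_j^2$.

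\textbf{Main obstacle.} The crux is controlling $\sum_j L_j^2$ by the degree rather than trivially. Here the out-degree bound is essential: only the at most $d$ out-neighbors of $j$ contribute to $L_j=\sum_{i:j\in\cN_i}L_{i,j}$, and each $L_{i,j}\leq Y_{\max}$, so $L_j\leq d\,Y_{\max}$ and hence $\sum_j L_j^2\leq n\,d^2 Y_{\max}^2$. This yields an $O(\frac{d^2}{n}B_1\beta^2\alpha^2 Y_{\max}^2)$ bound on the intersecting term. Combining the two cases (and using $B_1\leq\max\{B_1,B_2\}$) gives $\Var[S]=O\big(\alpha^2\beta^2 Y_{\max}^2(\frac{d^2}{n}\max\{B_1,B_2\}+B_2)\big)$, which with the noise bound establishes the lemma. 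The delicate points are the over-counting when reducing each intersecting pair to a single shared $j$ (only an upper bound is needed, so this is harmless) and carefully tracking the $\beta$ and $\alpha$ factors through the nested sums.
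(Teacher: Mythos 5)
Your proof is correct, and its skeleton matches the paper's: isolate the observation noise (your orthogonality split $\Var[\widehat{\TTE}]=\Var[S]+\Var[N]$ is equivalent to the paper's law-of-total-variance step conditioning on the treatment vectors, and gives the same $O(\sigma^2\beta\alpha^2/n)$ term), expand the signal variance via the polynomial coefficients, and split according to whether $\cS\cap\cS'$ is empty. The one place you genuinely diverge is the combinatorial accounting of the intersecting term. The paper groups at the level of individual pairs: it defines $\cM_i=\{i' : \cN_i\cap\cN_{i'}\neq\varnothing\}$, notes $|\cM_i|\leq \din\dout\leq d^2$, bounds every monomial-pair covariance for $i'\in\cM_i$ uniformly by $\max\{B_1,B_2\}$ (since such pairs can still have disjoint $\cS,\cS'$), and uses $\sum_{\cS}|c_{i,\cS}|\leq Y_{\max}$ per individual, so the $d^2/n$ factor comes from $\sum_i|\cM_i|\leq nd^2$. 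You instead group at the level of the shared node: each intersecting pair $(\cS,\cS')$ is charged to its witnesses $j\in\cS\cap\cS'$, giving $\sum_{\cS\cap\cS'\neq\varnothing}|c_{i,\cS}||c_{i',\cS'}|\leq\sum_j L_{i,j}L_{i',j}$ (a harmless over-count, as you note), and the $d^2/n$ factor comes from $\sum_j L_j^2\leq nd^2Y_{\max}^2$ via the out-degree bound $L_j\leq d\,Y_{\max}$. Both routes rest on the same two degree bounds and land on the same estimate; yours is marginally sharper ($B_1$ rather than $\max\{B_1,B_2\}$ multiplying $d^2/n$, immaterial in the paper's applications where $B_2\leq B_1$), and it runs through the influence quantities $L_{i,j}$, $L_j$ — essentially the device the paper itself deploys later in the proof of Corollary~\ref{cor:lin_bounds} — so your argument specializes to the linear case more directly.
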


\begin{proof}
\mcedit{By the law of total variance, we have
\begin{align*}
    \Var \Big[ \widehat{\TTE} \Big] &= \Var\left[\E\left[\widehat{\TTE} ~\Big|~ \bz^t\right]\right] + \E\left[\Var\left[\widehat{\TTE} ~\Big|~ \bz^t\right]\right] \\
    &= \Var \Big[ \tfrac{1}{n} \sum_{t=0}^{\beta} \sum_{i=1}^{n} \alpha_{i,t} \cdot Y_i(\bz^t) \Big] \\
    &\qquad + \E \left[ \Var \Bigg( \tfrac{1}{n} \sum_{t=0}^{\beta} \sum_{i=1}^{n} \alpha_{i,t} \cdot Y_i(\bz^t) + \tfrac{1}{n} \sum_{t=0}^{\beta} \sum_{i=1}^{n} \alpha_{i,t} \cdot \eps_{i,t} \ \Big| \ \bz^t\Bigg)\right] \\
    &= \Var \Big[ \tfrac{1}{n} \sum_{t=0}^{\beta} \sum_{i=1}^{n} \alpha_{i,t} \cdot Y_i(\bz^t) \Big] + \E \left[ \Var \Bigg( \tfrac{1}{n} \sum_{t=0}^{\beta} \sum_{i=1}^{n} \alpha_{i,t} \cdot \eps_{i,t} \Bigg)\right] \\
    &= \Var \Big[ \tfrac{1}{n} \sum_{t=0}^{\beta} \sum_{i=1}^{n} \alpha_{i,t} \cdot Y_i(\bz^t) \Big] + \E \left[ \tfrac{1}{n^2} \sum_{t=0}^{\beta} \sum_{i=1}^{n} \Var(\alpha_{i,t} \cdot \eps_{i,t})\right] \\
    &= \Var \Big[ \tfrac{1}{n} \sum_{t=0}^{\beta} \sum_{i=1}^{n} \alpha_{i,t} \cdot Y_i(\bz^t) \Big] + O(\tfrac{\sigma^2 \beta}{n} \alpha^2)
\end{align*}}
    \meedit{Turning our attention to the first variance term,}
    we introduce the notation $\cM_i = \{ i' : |\cN_i \cap \cN_{i'}| \geq 1 \}$. Note that $|\cM_i| \leq \dmax^2$. In addition, for all $i' \not\in \cM_i$, all $\cS \subseteq \cN_i$, and all $\cS' \subseteq \cN_{i'}$, we have $\cS \cap \cS' = \varnothing$. We may expand the variance,
    \begin{align*}
        \mereplace{\Var \Big[ \widehat{\TTE} \Big]}{\Var \Big[ \tfrac{1}{n} \sum_{t=0}^{\beta} \sum_{i=1}^{n} \alpha_{i,t} \cdot Y_i(\bz^t) \Big]}
        &= \tfrac{1}{n^2} \sum_{i=1}^{n}  \sum_{i'=1}^{n} \sum_{t=0}^{\beta} \sum_{t'=0}^{\beta} \alpha_{i,t} \cdot \alpha_{i',t'} \cdot \Cov \Big[ Y_i(\bz^t), Y_{i'}(\bz^{t'}) \Big] \\ 
        &\leq \cdot \tfrac{O(\alpha^2)}{n^2} \sum_{i'=1}^{n}  \sum_{i=1}^{n} \sum_{t=0}^{\beta} \sum_{t'=0}^{\beta} \bigg| \Cov \Big[ Y_i(\bz^t), Y_{i'}(\bz^{t'}) \Big] \bigg| \\ 
        &\leq \tfrac{O(\alpha^2)}{n^2} \sum_{i=1}^{n}  \sum_{i'=1}^{n} \sum_{t=0}^{\beta} \sum_{t'=0}^{\beta} \sum_{\substack{\cS \subseteq \cN_i \\ |\cS| \leq \beta}} |c_{i,\cS}| \sum_{\substack{\cS' \subseteq \cN_{i'} \\ |\cS'| \leq \beta}} |c_{i',\cS'}| \cdot \bigg| \Cov \Big[ \prod_{j \in \cS} z^t_j , \prod_{j \in \cS'} z^{t'}_{j'} \Big] \bigg| \\ 
        &\leq \tfrac{O(\alpha^2)}{n^2} \bigg( \sum_{i=1}^{n} \sum_{i' \in \cM_i} \sum_{t=0}^{\beta} \sum_{t'=0}^{\beta} \sum_{\substack{\cS \subseteq \cN_i \\ |\cS| \leq \beta}} |c_{i,\cS}| \sum_{\substack{\cS' \subseteq \cN_{i'} \\ |\cS'| \leq \beta}} |c_{i',\cS'}| \cdot \max \{ B_1, B_2 \} \\ 
        &\hspace{48pt} + \sum_{i=1}^{n} \sum_{i' \not\in \cM_i} \sum_{t=0}^{\beta} \sum_{t'=0}^{\beta} \sum_{\substack{\cS \subseteq \cN_i \\ |\cS| \leq \beta}} |c_{i,\cS}| \sum_{\substack{\cS' \subseteq \cN_{i'} \\ |\cS'| \leq \beta}} |c_{i',\cS'}| \cdot B_2 \bigg) \\ 
        &\leq \tfrac{O(\alpha^2)}{n^2} \bigg( \sum_{i=1}^{n} \sum_{i' \in \cM_i} \beta^2 \, Y_{\max}^2 \cdot \max \{ B_1, B_2 \} + \sum_{i=1}^{n} \sum_{i' \not\in \cM_i} \beta^2 \, Y_{\max}^2 \cdot B_2 \bigg) \\ 
        &\leq \tfrac{O(\alpha^2)}{n^2} \bigg( \dmax^2 n \cdot \beta^2 \, Y_{\max}^2 \cdot \max \{ B_1, B_2 \} + n^2 \cdot \beta^2 \, Y_{\max}^2 \cdot B_2 \bigg) \\ 
        &= O \bigg( \alpha^2 \beta^2 Y_{\max}^2 \Big( \tfrac{d^2}{n} \max \{ B_1, B_2 \} + B_2 \Big) \bigg).
    \end{align*}
\mcedit{Therefore, $$ \Var \Big[ \widehat{\TTE} \Big] = O \bigg( \alpha^2 \beta^2 Y_{\max}^2 \Big( \tfrac{d^2}{n} \max \{ B_1, B_2 \} + B_2 \Big) + \tfrac{\sigma^2 \beta}{n} \alpha^2 \bigg). $$}
\end{proof}

\subsection{Graph Agnostic with Bernoulli Treatment} \label{ap:ga_bern_var}

By plugging in the Bernoulli treatment probabilities into (\ref{eq:bern_poly}), we obtain the estimator:
\begin{equation*}
    \widehat{\TTE}(\bp) := \frac{1}{n} \sum_{i=1}^{n} \sum_{t=0}^{\beta} \Big( \ell_{t,\bp}(1) - \ell_{t,\bp}(0) \Big) \cdot \Yobs,
    \hspace{40pt}
    \ell_{t,\bp}(x) = \prod_{\substack{s=0 \\ s\not=t}}^{\beta} \frac{x-p_s}{p_t-p_s}.
\end{equation*}
The following lemma will be useful in establishing a bound on the variance of this estimator.

\begin{lemma} \label{lem:bern_interp_bound}
    $\max\limits_{t \in \{0 \hdots \beta\}} \big\{ | \ell_{t,\bp}(1) - \ell_{t,\bp}(0) | \big\} = O \Big( \Delta_{\bp}^{-\beta} \Big).$
\end{lemma}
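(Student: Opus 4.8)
The plan is to bound each factor of the Lagrange basis polynomial separately, exploiting two facts: that all interpolation nodes $p_0, \ldots, p_\beta$ lie in $[0,1]$, and that consecutive nodes are separated by at least $\Delta_{\bp}$.

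First I would simplify the difference using the hypothesis $p_0 = 0$. Since $\ell_{t,\bp}$ is the Lagrange basis polynomial for the nodes $\bp$, it satisfies $\ell_{t,\bp}(p_s) = \Ind(s = t)$; evaluating at $x = 0 = p_0$ then gives $\ell_{t,\bp}(0) = \Ind(t = 0)$. Thus the quantity to bound is $|\ell_{0,\bp}(1) - 1|$ when $t = 0$ and $|\ell_{t,\bp}(1)|$ when $t \geq 1$, and by the triangle inequality it suffices in all cases to show the single uniform estimate $|\ell_{t,\bp}(1)| = O(\Delta_{\bp}^{-\beta})$.

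Next I would bound $|\ell_{t,\bp}(1)| = \prod_{s \neq t} \frac{|1 - p_s|}{|p_t - p_s|}$ factor by factor. Each numerator satisfies $|1 - p_s| \leq 1$ because $p_s \in [0,1]$, so the full numerator product is at most $1$. For the denominator, the key observation is that because the nodes are sorted and every consecutive gap is at least $\Delta_{\bp}$, we have the separation bound $|p_t - p_s| \geq |t - s| \, \Delta_{\bp}$ for all $s \neq t$. Multiplying over $s \neq t$ gives $\prod_{s \neq t} |p_t - p_s| \geq \Delta_{\bp}^{\beta} \prod_{s \neq t} |t - s| = \Delta_{\bp}^{\beta} \, t! \, (\beta - t)!$.

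Combining the two bounds yields $|\ell_{t,\bp}(1)| \leq \big( \Delta_{\bp}^{\beta} \, t! \, (\beta - t)! \big)^{-1} \leq \Delta_{\bp}^{-\beta}$, uniformly in $t$; feeding this back through the triangle-inequality step completes the argument. I do not anticipate a genuine obstacle, as the argument is elementary; the one point requiring care is the separation bound $|p_t - p_s| \geq |t - s| \, \Delta_{\bp}$, which relies on the nodes being in sorted order so that the index distance $|t - s|$ correctly counts the number of consecutive gaps lying between $p_t$ and $p_s$. It is worth noting that the bound is loose — it discards both the smallness of the numerators and the factorial factor $t!\,(\beta-t)!$ in the denominator — but since only the $\Delta_{\bp}^{-\beta}$ scaling is claimed, this crude estimate is sufficient.
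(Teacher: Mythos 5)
Your proposal is correct and takes essentially the same route as the paper's proof: a triangle-inequality split followed by factor-by-factor bounds on the Lagrange basis polynomial, with numerators bounded by $1$ (since all $p_s \in [0,1]$) and denominators bounded below via $\Delta_{\bp}$ — your refinement $|p_t - p_s| \geq |t-s|\,\Delta_{\bp}$, yielding the extra factor $t!\,(\beta-t)!$, is sharper than the paper's uniform bound $|p_t-p_s| \geq \Delta_{\bp}$ but unnecessary for the claimed scaling. One caveat: your simplification $\ell_{t,\bp}(0) = \Ind(t=0)$ invokes $p_0 = 0$, which is not a hypothesis of the lemma as stated; the paper instead bounds $|\ell_{t,\bp}(0)| \leq \prod_{s \neq t} |p_s|/\Delta_{\bp} \leq \Delta_{\bp}^{-\beta}$ symmetrically to $|\ell_{t,\bp}(1)|$, keeping the lemma valid for arbitrary sorted nodes in $[0,1]$ (harmless here, since $p_0=0$ or $k_0=0$ holds in every application, and your $|\ell_{t,\bp}(1)|$ bound adapts verbatim to $|\ell_{t,\bp}(0)|$ if needed).
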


\begin{proof}
    For each $t \in 0, \hdots, \beta$, we have,
    \begin{equation*}
        \big| \ell_{t,\bp}(1) - \ell_{t,\bp}(0) \big|
        \leq \Big| \prod_{\substack{s = 0 \\ s \not= t}}^{\beta} \frac{1-p_s}{p_t-p_s} \Big| + \Big| \prod_{\substack{s = 0 \\ s \not= t}}^{\beta} \frac{-p_s}{p_t-p_s} \Big|
        \leq \prod_{\substack{s = 0 \\ s \not= t}}^{\beta} \frac{|1-p_s|}{\Delta_{\bp}} + \prod_{\substack{s = 0 \\ s \not= t}}^{\beta} \frac{|p_s|}{\Delta_{\bp}}
        = O \Big( \Delta_{\bp}^{-\beta} \Big).
    \end{equation*}
    Here, the first inequality is an application of the triangle inequality, the second uses the definition of $\Delta_{\bp}$, and the third uses the fact that each $p_t \in [0,1]$. 
\end{proof}

\begin{proof}[Proof of Theorem~\ref{thm:ga_bern_var}]
    To establish the unbiasedness of the estimator, note that,
    \begin{align*}
        \E \Big[ \widehat{\TTE}(\bp) \Big] 
        &= \sum_{t=0}^{\beta} \Big( \ell_{t,\bp}(1) - \ell_{t,\bp}(0) \Big) \cdot  \E \Big[ \tfrac{1}{n} \sum_{i=1}^{n} Y_i(\bz^t) \Big] \\
        &= \sum_{t=0}^{\beta} \Big( \ell_{t,\bp}(1) - \ell_{t,\bp}(0) \Big) \cdot F_B(p_t) \\
        &= \Big( \sum_{t=0}^{\beta} \ell_{t,\bp}(1) \cdot F_B(p_t) \Big) - \Big( \sum_{t=0}^{\beta} \ell_{t,\bp}(0) \cdot F_B(p_t) \Big) \\
        &= F_B(1) - F_B(0) \\
        &= \TTE.
    \end{align*}
    
    Now, we compute a bound on the variance. Since the entries of each $\bz^t$ are independent, $\Cov \Big[ \prod_{j \in \cS} z^t_j , \prod_{j' \in \cS'} z^{t'}_{j'} \Big] = 0$ for any disjoint $\cS, \cS'$. In addition, since both arguments of this covariance are indicator variables, we can upper bound the absolute value of each covariance by 1. We appeal to Lemma~\ref{eq:gen_var_bound}, with $B_1 = 1$, $B_2 = 0$, and $\alpha = \Delta_{\bp}^{-\beta}$ (by Lemma~\ref{lem:bern_interp_bound}), giving,
    \begin{equation*}
        \Var \Big[ \widehat{\TTE}(\bp) \Big] = O \Big( \tfrac{\dmax^2 \beta^2}{n} \; Y_{\max}^2 \; \Delta_{\bp}^{-2\beta} \mcedit{+ \tfrac{\sigma^2 \beta}{n}\Delta_{\bp}^{-2\beta}} \Big).
    \end{equation*}
\end{proof}


\subsection{Graph Agnostic with Completely Randomized Treatment} \label{ap:ga_cr_var}

We'll make use of the following algebraic lemma to bound the variance; recall the bracket notation introduced in equation (\ref{eq:bracket_def}) in Section~\ref{sec:setup}.  

\begin{lemma} \label{lem:crazy_bound}
    For any constants $a,b \in \mathbb{N}$ and any $p \in (0,1]$, 
    \[
       \Bigg| \frac{\Big[ \tfrac{pn-a}{n-a} \Big]^b}{\Big[ \tfrac{pn}{n} \Big]^b} - 1 \Bigg| = O \Big(\frac{ab}{pn}\Big).
    \]
\end{lemma}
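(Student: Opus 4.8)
The plan is to write the entire ratio as a single product of $b$ factors indexed by $i \in \{0, \dots, b-1\}$, bound each factor's deviation from $1$, and combine these through a standard product-to-sum estimate. Expanding the bracket notation from (\ref{eq:bracket_def}),
\[
    \frac{\left[ \tfrac{pn-a}{n-a} \right]^b}{\left[ \tfrac{pn}{n} \right]^b} = \prod_{i=0}^{b-1} \frac{(pn-a-i)/(n-a-i)}{(pn-i)/(n-i)} = \prod_{i=0}^{b-1} f_i, \qquad f_i := \frac{(pn-a-i)(n-i)}{(n-a-i)(pn-i)}.
\]
I would work in the regime where $n$ is large relative to the constants $a,b$ (and $pn$ exceeds, say, $2b$), which is exactly the regime where all the denominators are positive and the bound is nontrivial.

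The crux is to compute $f_i - 1$ exactly. Writing $u = pn$ and $v = n$, the numerator of $f_i - 1$ is $(u-a-i)(v-i) - (v-a-i)(u-i)$, and upon expansion the leading $uv = pn^2$ terms cancel while every mixed term pairs off, leaving only $a(u-v)$. Hence
\[
    f_i - 1 = \frac{a(pn-n)}{(n-a-i)(pn-i)} = \frac{-an(1-p)}{(n-a-i)(pn-i)}.
\]
This identity is the heart of the argument: it shows the per-factor deviation has a numerator that is only \emph{linear} in $n$, sitting over a denominator of order $pn^2$.

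From here the bound is routine. Since $a,b$ are constants, for large $n$ we have $n-a-i \ge n/2$ and $pn-i \ge pn/2$ for every $i \le b-1$, and $1-p \le 1$, so
\[
    |f_i - 1| \le \frac{an}{(n/2)(pn/2)} = \frac{4a}{pn} = O\!\left(\tfrac{a}{pn}\right)
\]
uniformly in $i$. I would then finish with the elementary inequality $\big| \prod_i (1+\eps_i) - 1 \big| \le \prod_i (1 + |\eps_i|) - 1$ applied with $\eps_i = f_i - 1$, giving $\big| \prod_i f_i - 1 \big| \le (1 + 4a/(pn))^b - 1$. Since $b$ is constant and $4a/(pn) \to 0$, the expansion $(1+x)^b - 1 = O(bx)$ for small $x$ yields the claimed $O(ab/(pn))$.

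I do not anticipate a genuine obstacle; the only step needing care is the numerator cancellation, since an uncancelled $pn^2$ term would degrade the per-factor bound from $O(a/(pn))$ to $O(1)$ and break the whole estimate. I would also flag that the statement is only meaningful as $pn \to \infty$: otherwise $ab/(pn)$ is bounded below (and the brackets themselves require $pn \ge b$ merely to be nonzero), so the large-$n$, $pn \gtrsim b$ regime assumed throughout is precisely the one of interest.
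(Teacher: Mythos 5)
Your proposal is correct and takes essentially the same route as the paper: expand the bracket ratio as a product of $b$ factors, show each factor deviates from $1$ by $O\!\left(\tfrac{a}{pn}\right)$, and finish with a binomial/product-to-sum expansion; your exact identity $f_i - 1 = \tfrac{-an(1-p)}{(n-a-i)(pn-i)}$ is precisely the cancellation the paper invokes implicitly when it rewrites each factor as $\big(1 - \tfrac{a}{pn-i}\big)\big(1 + \tfrac{a}{n-a-i}\big) = 1 + O\big(\tfrac{a(p-1)}{pn}\big)$. If anything, your explicit cancellation and the stated regime caveat ($pn$ large relative to $a,b$) make the argument slightly more careful than the paper's big-$O$ manipulations.
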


\begin{proof}
    Expanding the bracket notation, we have, 
    \begin{align*}
        \Bigg| \frac{\Big[ \tfrac{pn-a}{n-a} \Big]^b}{\Big[ \tfrac{pn}{n} \Big]^b} - 1 \Bigg| 
        &= \bigg| \prod_{i=0}^{b-1} \Big( \frac{pn-a-i}{pn-i} \Big) \Big( \frac{n-i}{n-a-i} \Big) - 1 \bigg| \\
        &= \bigg| \prod_{i=0}^{b-1} \Big( 1 - \frac{a}{pn-i} \Big) \Big( 1 + \frac{a}{n-a-i} \Big) - 1 \bigg| \\
        &= \bigg| \prod_{i=0}^{b-1} \Big( 1 + O \Big( \tfrac{a(p-1)}{pn} \Big) \Big) - 1 \bigg| \\
        &\leq \sum_{j=1}^{b-1} \binom{b}{j} \cdot O \Big( \tfrac{a}{pn} \Big)^j \\
        &\leq \sum_{j=1}^{b-1} O \Big( \tfrac{ab}{pn} \Big)^j \\
        &= O \Big( \tfrac{ab}{pn} \Big).
    \end{align*}
\end{proof}

\begin{proof}[Proof of Theorem~\ref{thm:ga_cr_var}]
     To establish the unbiasedness of the estimator, note that,
    \begin{align*}
        \E \Big[ \widehat{\TTE}(\bk) \Big] 
        &= \sum_{t=0}^{\beta} \Big( \ell_{t,\bk/n}(1) - \ell_{t,\bk/n}(0) \Big) \cdot \E \Big[ \tfrac{1}{n} \sum_{i=1}^{n} Y_i(\bz^t) \Big] \\
        &= \sum_{t=0}^{\beta} \Big( \ell_{t,\bk/n}(1) - \ell_{t,\bk/n}(0) \Big) \cdot F_C(\tfrac{k}{n}) \\
        &= \Big( \sum_{t=0}^{\beta} \ell_{t,\bk/n}(1) \cdot F_C(\tfrac{k}{n}) \Big) - \Big( \sum_{t=0}^{\beta} \ell_{t,\bk/n}(0) \cdot F_C(\tfrac{k}{n}) \Big) \\
        &= F_C(1) - F_C(0) \\
        &= \TTE.
    \end{align*}
    Next, we establish a bound on the variance of this estimator. We consider the covariance term $\Big| \Cov \Big[ \prod_{j \in \cS} z^t_j , \prod_{j' \in \cS'} z^{t'}_{j'} \Big] \Big|$ for various values of $t,t',\cS,$ and $\cS'$. First, note that when $t$ or $t'=0$, an argument of this covariance is deterministically 0, so the covariance is 0 as well. Otherwise, when $\cS \cap \cS' \not= \varnothing$, we can bound $\Big| \Cov \Big[ \prod_{j \in \cS} z^t_j , \prod_{j' \in \cS'} z^{t'}_{j'} \Big] \Big| \leq 1$ by noting that both arguments are indicator variables. In the case that $\cS \cap \cS' = \varnothing$, we establish a stronger bound using Lemma~\ref{lem:crazy_bound}. We have,
    \begin{align*}
        \Cov \Big[ \prod_{j \in \cS} z^t_j , \prod_{j' \in \cS'} z^{t'}_{j'} \Big] 
        &= \E \Big[ \prod_{j \in \cS} z^t_j \prod_{j' \in \cS'} z^{t'}_{j'} \Big] - \E \Big[ \prod_{j \in \cS} z^t_j \Big] \E \Big[ \prod_{j' \in \cS'} z^{t'}_{j'} \Big] \\
        &\leq \Big[ \tfrac{k_t}{n} \Big]^{|\cS|} \Big[ \tfrac{k_{t'}}{n} \Big]^{|\cS'|} \cdot \Bigg| \frac{\Big[ \tfrac{k_{t'} - |\cS|}{n-|\cS|} \Big]^{|\cS'|}}{\Big[ \tfrac{k_{t'}}{n} \Big]^{|\cS'|}} - 1 \Bigg| \\
        &= O \Big( \tfrac{|\cS||\cS'|}{k_{t'}} \Big) \\
        &= O \Big( \tfrac{\beta^2}{k_1} \Big).
    \end{align*}
    In the second last line, we bound the first two factors by 1, and use Lemma~\ref{lem:crazy_bound} (with $p = \tfrac{k_{t'}}{n}$) to bound the third factor. Applying Lemma~\ref{eq:gen_var_bound}, with $B_1 = 1$, $B_2 = O \Big( \frac{\beta^2}{k_1} \Big)$, and $\alpha = \Big(\frac{n}{\Delta_{\bk}} \Big)^{\beta}$ (by Lemma~\ref{lem:bern_interp_bound} using the substitution $\bp = \bk/n$), giving,
    \begin{equation*}
        \Var \Big[ \widehat{\TTE}(\bk) \Big] = O \Big( \beta^2 \; Y_{\max}^2 \Big( \tfrac{d^2}{n} + \tfrac{\beta^2}{k_1} \Big) \cdot \big(\tfrac{n}{\Delta_{\bk}}\big)^{2\beta} \mcedit{+ \tfrac{\sigma^2 \beta}{n} \big(\tfrac{n}{\Delta_{\bk}}\big)^{2\beta}} \Big).
    \end{equation*}
    \cycomment{I think we can actually replace $k_1 = k/\beta$ with $k/\log(\beta)$.}
\end{proof}

\subsection{Improved Variance Bounds in the Linear Setting}

\begin{proof}[Proof of Corollary~\ref{cor:lin_bounds}]
    \mcedit{In the linear setting ($\beta=1$) for $\bx = (0,x)$, the Lagrange polynomial coefficients evaluate to $\ell_{0,\bx}(1) - \ell_{0,\bx}(0) = -\alpha \ $ and $\ \ell_{1,\bx}(1) - \ell_{1,\bx}(0) = \alpha$ for $\alpha = \tfrac{1}{x}$, so that the estimator $\widehat{\TTE}(\bx)$ is equal to
    \begin{align*}
    \widehat{\TTE}(\bx) &= \frac{\alpha}{n} \left(\sum_{i=1}^n Y_{i,1}^{\text{\normalfont obs}} - \sum_{i=1}^n Y_{i,0}^{\text{\normalfont obs}} \right) \\
    &= \tfrac{\alpha}{n} \sum_{i=1}^n \Big( Y_i(\bz^1) + \eps_{i,1} - \eps_{i,0} - c_{i,\emptyset}\Big).
    \end{align*}
    
    Using the Law of Total Variance, we get 
    $$
    \Var\Big[\widehat{\TTE}(\bx)\Big] = \Var\left[\E\left[\widehat{\TTE} ~\Big|~ \bz^1\right]\right] + \E\left[\Var\left[\widehat{\TTE} ~\Big|~ \bz^1\right]\right] = \Var\Big[\tfrac{\alpha}{n} \sum_{i=1}^{n} Y_i(\bz^1)\Big] + \tfrac{2\sigma^2\alpha^2}{n}.
    $$
    Rewriting the first term, we get}
    \begin{align}
        \nonumber \mcreplace{\Var\Big[\widehat{\TTE}(\bx) \Big]}{\Var \Big[ \tfrac{\alpha}{n} \sum_{i=1}^{n} Y_i(\bz^1) \Big]}
        &= \tfrac{\alpha^2}{n^2} \sum_{i=1}^{n} \sum_{i'=1}^{n}  \Cov \big[Y_i(\bz^1), Y_{i'}(\bz^1)\big] \\
        &= \tfrac{\alpha^2}{n^2} \sum_{i=1}^{n} \sum_{i'=1}^{n}  \sum_{j \in \cN_i} c_{ij} \sum_{j' \in \cN_{i'}} c_{i'j'} \Cov[z_j, z_{j'}] \\
        \label{eq:linbound} &= \tfrac{\alpha^2}{n^2} \sum_{j=1}^{n} \sum_{j'=1}^{n}  \left(\sum_{i: j \in \cN_i} c_{ij}\right) \left(\sum_{i': j' \in \cN_{i'}} c_{i'j'}\right) \Cov[z_j, z_{j'}].
    \end{align}
    Here, we used the fact that $\bz^0 = \mathbf{0}$ deterministically to remove covariance terms, as it has covariance 0 with any other random variable. \mcreplace{In the linear Bernoulli setting, $\ell_{0,\bp}(x) = \frac{p-x}{p}$ and $\ell_{1,\bp}(x) = \frac{x}{p}$, so that $\alpha_1 = \frac{1}{p}$}{Under BRD$(0,p)$ we have $\alpha = \tfrac{1}{p}$}. Additionally, $\Var(z^1_j) = p(1-p)$ for each $j \in [n]$, and $\Cov[z_j, z_{j'}]=0$ for $j \neq j'$ so we may simplify the variance bound to
    \begin{align*}
        &= \tfrac{\alpha^2}{n^2} \sum_{j=1}^{n} \left(\sum_{i: j \in \cN_i} c_{ij} \right)^2 \cdot \Var ( z^1_j) \mcedit{ \ + \ \tfrac{2\sigma^2\alpha^2}{n}}\\
        &\leq \tfrac{\alpha^2}{n^2} \cdot L_{\max}^2 \cdot \sum_{j=1}^{n} \Var (z^1_j) + \mcedit{\tfrac{2\sigma^2\alpha^2}{n}} \\
        &\leq \tfrac{1-p}{np} \cdot L_{\max}^2 + \mcedit{\tfrac{2\sigma^2}{np^2}}.
    \end{align*}
    
    The analysis for the completely randomized design setting is presented in pg 32 of \cite{YuAiroldiBorgsChayes22}, and we include it here for convenience.
    \mcreplace{In the completely randomized setting, $\ell_{0,\bk}(x) = \frac{k-x}{k}$ and $\ell_{1,\bk}(x) = \frac{x}{k}$, so that $\alpha=\tfrac{n}{k}$}{Under CRD$(0,k)$, we have $\alpha=\tfrac{n}{k}$.} Additionally, $\Var(z^1_j) = \frac{k(n-k)}{n^2}$ for each $j \in [n]$, and
    \[
        \Cov[z^1_j, z^1_{j'}] = \frac{k(k-1)n}{n^2(n-1)} - \frac{k^2(n-1)}{n^2(n-1)}
        =\frac{-k(n-k)}{n^2(n-1)}
        \leq 0.
    \]
   Plugging into (\ref{eq:linbound}), we find that
    \begin{align*}
     \Var\Big[\widehat{\TTE}(\bk) \Big] 
        &= \tfrac{1}{k^2} \sum_{j=1}^{n} \left(\sum_{i: j \in \cN_i} c_{ij} \right)^2 \cdot \Var ( z^1_j) + \tfrac{1}{k^2} \sum_{j\neq j'} \left(\sum_{i: j \in \cN_i} c_{ij} \right) \left(\sum_{i: j' \in \cN_i} c_{ij'} \right) \cdot \Cov ( z^1_j, z^1_{j'}) \mcedit{ \ + \tfrac{2\sigma^2n}{k^2}} \\
        &= \tfrac{1}{k^2} \sum_{j=1}^{n} \left(\sum_{i: j \in \cN_i} c_{ij} \right)^2 \left( \frac{k(n-k)}{n^2}+\frac{k(n-k)}{n^2(n-1)}\right) + \left(\frac{1}{k} \sum_{j=1}^n \sum_{i: j \in \cN_i} c_{ij} \right)^2 \frac{-k(n-k)}{n^2(n-1)} \mcedit{ \ + \tfrac{2\sigma^2n}{k^2}} \\
        &\leq \tfrac{n L_{\max}^2}{k^2} \left( \frac{k(n-k)}{n^2}+\frac{k(n-k)}{n^2(n-1)}\right) \mcedit{ \ + \tfrac{2\sigma^2n}{k^2}}\\
        &\leq \tfrac{(n-k)}{(n-1)k} L_{\max}^2 \mcedit{ \ + \tfrac{2\sigma^2n}{k^2}}.
    \end{align*}
\end{proof}

\subsection{Bernoulli Estimator Utilizing Realized Treatment Counts} \label{ap:ga_varred}

We will make use of the following lemma to bound the variance of this estimator.

\begin{lemma} \label{lem:inv_bin_bound}
    Suppose $X \sim \textrm{Binom}(n,p)$, and define
    \[
        Y = \begin{cases}
            0 & X=0, \\
            \frac{1}{X^\beta} & X > 0.
        \end{cases}
    \]
    Then, $\E \big[ Y \big] < (1 +o(1)) (np)^{-\beta}$.
\end{lemma}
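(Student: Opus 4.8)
The plan is to exploit the concentration of $X$ around its mean $np$: with high probability $X$ is close to $np$, so $X^{-\beta}$ is close to $(np)^{-\beta}$, while the rare event that $X$ falls well below $np$ is exponentially unlikely and can be absorbed using the crude bound $Y \le 1$ (valid whenever $X \ge 1$, since $\beta \ge 1$). The whole statement is understood in the asymptotic regime $np \to \infty$, where the claimed constant $1+o(1)$ is meaningful.

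First I would fix a parameter $\delta = \delta_n \in (0,1)$, to be chosen at the end, and split the expectation according to the event $\{ X \ge (1-\delta)np \}$, using that $Y = 0$ on $\{X = 0\}$:
\[
    \E[Y] = \E\big[ Y \cdot \Ind\big(X \ge (1-\delta)np\big) \big] + \E\big[ Y \cdot \Ind\big(1 \le X < (1-\delta)np\big) \big].
\]
On the first event $Y = X^{-\beta} \le \big((1-\delta)np\big)^{-\beta}$, so that contribution is at most $(1-\delta)^{-\beta}(np)^{-\beta}$. On the second event I bound $Y \le 1$ and invoke the multiplicative Chernoff lower-tail bound $\Prob\big(X \le (1-\delta)np\big) \le \exp\big(-\tfrac12 \delta^2 np\big)$, so that contribution is at most $\exp\big(-\tfrac12 \delta^2 np\big)$. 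Combining,
\[
    \E[Y] \le (1-\delta)^{-\beta}(np)^{-\beta} + \exp\big(-\tfrac12 \delta^2 np\big).
\]

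To finish I would factor out $(np)^{-\beta}$ and choose $\delta$ so that both residual terms are negligible. Taking $\delta = (np)^{-1/4}$ makes $(1-\delta)^{-\beta} = 1 + o(1)$ since $\beta$ is fixed and $\delta \to 0$, while $\delta^2 np = (np)^{1/2}$ grows faster than $\beta \log(np)$, so that $(np)^{\beta}\exp\big(-\tfrac12 \delta^2 np\big) = \exp\big(\beta \log(np) - \tfrac12 (np)^{1/2}\big) \to 0$. Hence $\E[Y] \le (1 + o(1))(np)^{-\beta}$, as claimed.

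The step I expect to be the crux is the choice of $\delta$: it must tend to zero fast enough that the distortion factor $(1-\delta)^{-\beta}$ stays $1 + o(1)$, yet slowly enough that the Chernoff tail $\exp(-\tfrac12\delta^2 np)$ is $o\big((np)^{-\beta}\big)$ — that is, the exponentially small tail probability must beat the polynomial blow-up factor $(np)^{\beta}$ coming from normalizing by $(np)^{-\beta}$. This is feasible precisely because $np \to \infty$ allows $\delta^2 np \to \infty$ while $\delta \to 0$. As an aside, one could instead compute the inverse factorial moment $\E\big[\big((X+1)\cdots(X+\beta)\big)^{-1}\big]$ in closed form via the identity $\binom{n}{k}\big/\big((k+1)\cdots(k+\beta)\big) = \tfrac{n!}{(n+\beta)!}\binom{n+\beta}{k+\beta}$ and then compare $X^{-\beta}$ to it; this recovers the correct order $(np)^{-\beta}$ but only with an unwanted $\beta$-dependent constant rather than the sharp $1 + o(1)$, which is exactly why the concentration argument above is preferable here.
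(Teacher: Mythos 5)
Your proposal is correct and follows essentially the same route as the paper's proof: both split the expectation on the event $\{X \le (1-\delta)np\}$, bound $Y\le 1$ on the tail and $Y \le ((1-\delta)np)^{-\beta}$ on the bulk, control the tail probability with a concentration inequality (you use multiplicative Chernoff, the paper uses Bernstein), and then send $\delta \to 0$ slowly enough (your $\delta=(np)^{-1/4}$ versus the paper's $\delta=1/\log n$) so that the exponential tail beats the $(np)^{\beta}$ normalization while $(1-\delta)^{-\beta}=1+o(1)$.
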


\begin{proof}
    Using the law of total expectation, we can upper bound this expectation,
    \begin{align} \label{eq:ey_bound}
        \E \big[ Y \big] &\leq \Pr \big( X \leq (1-\delta) np \big) + \Big( \tfrac{1}{(1-\delta) np} \Big)^\beta \cdot \Pr \big( X > (1-\delta) np \big) \nonumber \\
        &\leq \Pr \big( X \leq (1-\delta) np \big) + \left(\tfrac{1}{(1-\delta) np}\right)^{\beta}.
    \end{align}
    We apply Bernstein's inequality to compute this probability. Note that we can express 
    \[
        X = X_1 + \hdots + X_n,
    \]
    with each $X_i \sim \textrm{Bernoulli}(p)$. Now, define $Z = Z_1 + \hdots + Z_n$ where each $Z_i = p - X_i$. Note that each $\E \big[ Z_i \big] = 0$ and $|Z_i| \leq 1$. Thus,
    \begin{align*}
        \Pr \big( X \leq (1-\delta) np \big) &= \Pr \big( Z \geq \delta np \big)\\
        &\leq \exp \bigg( \frac{-\frac{1}{2} \big( \delta np \big)^2}{\sum_{i=1}^{n} \E \big[ Z_i^2 \big] + \frac{1}{3}\big( \delta np \big)} \bigg) \\
        &= \exp \bigg( \frac{- 3\delta^2 n^2p^2}{6 np(1-p) + 2\delta np }\bigg) \\
        &\leq \exp \bigg( \frac{-3\delta^2 np}{6+2\delta} \bigg).
    \end{align*}
    For $\delta = \log^{-1} n$ and large enough $n$, $\exp \bigg( \frac{-3\delta^2 np}{6+2\delta} \bigg) < (np)^{-2\beta}$, such that plugging into (\ref{eq:ey_bound}), we find $\E \big[ Y \big] \leq  \left((1-\delta)np\right)^{-\beta} + (np)^{-2\beta} = (1 +o(1)) np^{-\beta}$.
\end{proof}

\begin{proof}[Proof of Theorem~\ref{thm:ga_varred}]
    First, we reason about the bias of the estimator. We define the event $\cE_1$ be the event $\{ k_0 < k_1 < \hdots < k_\beta \}$. By the argument from the proof of Theorem~\ref{thm:ga_cr_var}, $\widehat{\TTE}(\hat{\bk}/n)$ is unbiased on $\cE_1$. Thus, we can express the bias as 
    \[
        \E \Big[ \widehat{\TTE}(\hat{\bk}/n) - \TTE \Big] = - \Pr \big( \cE_1^c \big) \cdot \TTE.
    \]
    However, 
    \begin{align*}
        \Pr \big( \cE_1^c \big) 
        &= \Pr \Big( \bigcup_{t=1}^{\beta} \big\{ \hat{k}_t = \hat{k}_{t-1} \big\} \Big) \\
        &\leq \sum_{t=1}^{\beta} \Pr \big( \hat{k}_t = \hat{k}_{t-1} \big)  \tag{Union Bound} \\
        &= \sum_{t=1}^{\beta} \Pr \big( \hat{k}_t - \hat{k}_{t-1} \leq 0 \big) \\
        &\leq \sum_{t=1}^{\beta} \exp \Big( \tfrac{-(p_t - p_{t-1})n}{2}\Big) \tag{Chernoff Bound} \\
        &\leq \beta \cdot \exp \Big( \tfrac{-\Delta_{\bp}n}{2}\Big),
    \end{align*}
    so the bias decays exponentially with $n$. 
    
    To bound the variance, we apply the law of total variance:
    \begin{equation} \label{eq:varred_poly}
        \Var \Big[ \widehat{\TTE} \Big] = \Var \bigg[ \E \Big[ \widehat{\TTE} \Big| \sum_{j=1}^{n} z_j^t = \hat{k}_t \: \forall t \Big] \bigg] + \E \bigg[ \Var \Big[ \widehat{\TTE} \Big| \sum_{j=1}^{n} z_j^t = \hat{k}_t \: \forall t \Big] \bigg].
    \end{equation}
    We bound these terms individually. For the first term, note that
    \[
        \E \Big[ \widehat{\TTE}(\hat{\bk}/n) \Big| \sum_{j=1}^{n} z_j^t = \hat{k}_t \: \forall t \Big] = \TTE \cdot \Ind(\cE_1).
    \]
    which implies that,
    \[
        \Var \bigg[ \E \Big[ \widehat{\TTE}(\hat{\bk}/n) \Big| \sum_{j=1}^{n} z_j^t = \hat{k}_t \: \forall t\Big] \bigg]
        = \TTE^2 \cdot \Var \Big( \Ind(\cE_1) \Big)
        = \TTE^2 \cdot \Pr\big(\cE_1\big) \cdot \Pr\big(\cE_1^c\big). 
    \]
    This term decays exponentially as $n$ grows large, so \eqref{eq:varred_poly} will be dominated by the second term.
    
    Next, we define the event
    \[
        \cE_2 := \cE_1 \cap \bigcap_{t=1}^{\beta} \Big\{ |\hat{k}_t - p_t n| \leq \delta p_t n \Big\}. 
    \]
    Then, 
    \begin{align*}
        \Pr \big( \cE_2^c \big) 
        &= \Pr \Big( \cE_1^c \cup \bigcup_{t=1}^{\beta} \big\{ |\hat{k}_t - p_t n| > \delta p_t n \big\} \Big) \\
        &\leq \Pr\big(\cE_1^c\big) + \sum_{t=1}^{\beta} \Pr \Big( |\hat{k}_t - p_t n| \geq \delta p_t n \Big) \tag{Union Bound} \\
        &\leq \Pr\big(\cE_1^c\big) + \sum_{t=1}^{\beta} \exp \Big( \frac{-\delta^2 p_t n}{3} \Big) \tag{Chernoff Bound} \\
        &\leq \Pr\big(\cE_1^c\big) + \beta \cdot \exp \Big( \frac{-\delta^2 p_1 n}{3} \Big). 
    \end{align*}
    
    \mcedit{Notice that by a different application of the law of total variance, we get
    \begin{align*}
       \Var \Big[ \widehat{\TTE}(\hat{\bk}/n) \Big] &= \Var\left[\E\left[\widehat{\TTE}(\hat{\bk}/n) ~\Big|~ \bz^t\right]\right] + \E\left[\Var\left[\widehat{\TTE}(\hat{\bk}/n) ~\Big|~ \bz^t\right]\right] \\
       &= \Var \Bigg[ \tfrac{1}{n} \sum_{t=0}^\beta \sum_{i=1}^n \Big(\ell_{t,\hat{\bk}/n}(1) - \ell_{t,\hat{\bk}/n}(0)\Big) Y_i(\bz^t)\Bigg] + \E\left[\tfrac{\sigma^2}{n} \sum_{t=0}^\beta \Big(\ell_{t,\hat{\bk}/n}(1) - \ell_{t,\hat{\bk}/n}(0)\Big)^2\right].
       \end{align*}
       We can bound $\ell_{t,\hat{\bk}/n}(1) - \ell_{t,\hat{\bk}/n}(0) \leq n^\beta$ independently of the realized treatment counts to get
       \begin{equation*}
           \E\left[\tfrac{\sigma^2}{n} \sum_{t=0}^\beta \Big(\ell_{t,\hat{\bk}/n}(1) - \ell_{t,\hat{\bk}/n}(0)\Big)^2\right] \leq \tfrac{\beta \sigma^2}{n} \cdot n^{2\beta}.
       \end{equation*}

    Let $\widehat{\TTE}_{-\eps} := \tfrac{1}{n} \sum_{t=0}^\beta \sum_{i=1}^n \Big(\ell_{t,\hat{\bk}/n}(1) - \ell_{t,\hat{\bk}/n}(0)\Big) Y_i(\bz^t)$.}
    
    \mcreplace{To bound the second term of (\ref{eq:varred_poly}), we'll make use of the following unconditional bound on the variance:}{Using the fact that the variance of Bernoulli random variables is always bounded above by $1$, and again using the bound $\ell_{t,\hat{\bk}/n}(1) - \ell_{t,\hat{\bk}/n}(0) \leq n^\beta$, we get}
    \begin{align*}
        \Var \Big[ \widehat{\TTE}\mcedit{_{-\eps}} \Big]
        &\leq \frac{1}{n^2} \sum_{i=1}^n \sum_{i'=1}^n \sum_{t=0}^{\beta} \sum_{t'=0}^{\beta} \sum_{\substack{ \cS \subseteq \cN_i \\ |\cS| \leq \beta}} \sum_{\substack{ \cS' \subseteq \cN_{i'} \\ |\cS'| \leq \beta}} \big|c_{i,\cS}\big| \!\cdot\! \big|c_{i',\cS'}\big| \!\cdot\! \big|\ell_{t,\frac{\hat{\bk}}{n}}(1) - \ell_{t,\frac{\hat{\bk}}{n}}(0)\big| \!\cdot\! \big|\ell_{t',\frac{\hat{\bk}}{n}}(1) - \ell_{t',\frac{\hat{\bk}}{n}}(0)\big| \\
        &\leq \beta^2 \cdot Y_{\max}^2 \cdot n^{2\beta}.
    \end{align*}

    \mcedit{Then, to bound the second term of (\ref{eq:varred_poly}), we use the unconditional bound 
    \begin{equation}
    \label{eq:unconditionalVarBound}
    \Var \Big[ \widehat{\TTE}(\hat{\bk}/n) \Big] \leq \beta^2 \cdot Y_{\max}^2 \cdot n^{2\beta} \ + \ \tfrac{\beta \sigma^2}{n} \cdot n^{2\beta}.
    \end{equation}}
    
    Applying the definition of expectation, we have
    \begin{align*}
        \E &\bigg[ \Var \Big[ \widehat{\TTE} \Big| \sum_{j=1}^{n} z_j^t = \hat{k}_t\Big] \bigg] \\
        &\leq \sum_{\bk \in \cE_2} \Pr \Big( \sum_{j=1}^{n} z_j^t = \hat{k}_t \: \forall t \Big) \cdot \Var \Big[ \widehat{\TTE} \Big| \sum_{j=1}^{n} z_j^t = \hat{k}_t \Big]
        + \Pr(\cE_2^c) \cdot \mcedit{(}\beta^2  Y_{\max}^2  n^{2\beta}\mcedit{+ \tfrac{\beta \sigma^2}{n}  n^{2\beta})} \\
        &\leq O \Big( \beta^2 \; Y_{\max}^2 \Big( \tfrac{d^2}{n} + \tfrac{\beta^2}{(1-\delta)p_1n} \Big) \cdot \big(\tfrac{n}{(\Delta_{\bp} - \delta p) n}\big)^{2\beta} \mcedit{+ \tfrac{\sigma^2\beta}{n} \Big(\tfrac{n}{(\Delta_{\bp} - \delta p) n}\Big)^{2\beta}} \Big) 
        + \Pr(\cE_2^c) \cdot (\beta^2  Y_{\max}^2  n^{2\beta} \mcedit{+ \tfrac{\beta \sigma^2}{n}  n^{2\beta})}.
    \end{align*}
    Here, the first equality makes use of our unconditional bound on the variance\mcedit{, given in inequality~\ref{eq:unconditionalVarBound}}. The second inequality plugs the variance bound from Theorem~\ref{thm:ga_cr_var} for the most pessimistically perturbed treatment count vector in $\cE_2$. The probability $\Pr(\cE_2^c)$ decays exponentially in $n$. Therefore, choosing $\delta = \Theta(\frac{1}{\log(n)})$ and letting $n$ get sufficiently large, the upper bound for this estimator is 
    \[
        O \Big( \beta^2 \; Y_{\max}^2 \Big( \tfrac{d^2}{n} + \tfrac{\beta^2}{p_1n} \Big) \cdot \Delta_{\bp}^{-2\beta} \mcedit{\ + \tfrac{\beta \sigma^2}{n} \Delta_{\bp}^{-2\beta}}\Big).
    \]
    \cycomment{again this $p_1$ might be able to be replaced with $p/\log(\beta)$.}
\end{proof}

\section{Unbiased Estimation with Additional Observations} \label{ap:overfitting}

A natural question is whether we continue to see improvements in the estimator when we increase the number of estimates beyond $\beta+1$. Note that we restrict our attention to unbiased estimators, as we desire the asymptotic reduction in mean-squared error as the population grows large. We may thus assess the quality of an estimator by its variance. While in general, with noisy data, more measurements may result in improveed estimates, we show that in the linear setting, under perfect observations (i.e. no observation noise), these extra measurements do not help to reduce variance. In fact, we'll argue that the unbiased estimator with minimum variance is the one that ignores all but its first and last observations and then performs polynomial interpolation on these endpoints.
We record this result in Theorem~\ref{thm:extra_points_dont_help}.

\begin{restatable}{theorem}{extrapts}
\label{thm:extra_points_dont_help}
    Suppose that the potential outcomes model is linear, and a staggered rollout Bernoulli design is implemented with a set of $T+1$ distinct treatment probabilities $p_0 < p_1 < \hdots < p_T$. Then, the unbiased estimator for $\TTE$ of the form
    \[
        \widehat{\TTE} = \frac{1}{n} \sum_{i=1}^{n} \sum_{t=0}^{T} \alpha_t Y_i(\bz^t)
    \]
    that minimizes variance has $\alpha_0 = \frac{-1}{p_T-p_0}$, $\alpha_T = \frac{1}{p_T-p_0}$ and $\alpha_1, \hdots, \alpha_{T-1} = 0$.
\end{restatable}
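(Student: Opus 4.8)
The plan is to reduce the problem to a constrained quadratic minimization and finish with Cauchy--Schwarz. First I would expand the estimator in the linear model. Writing $Y_i(\bz) = c_{i,\emptyset} + \sum_{j \in \cN_i} c_{ij} z_j$ and setting $S_j := \sum_{i : j \in \cN_i} c_{ij}$, the stage-$t$ population average becomes $\frac{1}{n}\sum_i Y_i(\bz^t) = \frac{1}{n}\sum_i c_{i,\emptyset} + \frac{1}{n}\sum_j S_j z^t_j$. Since $\E[z_j^t] = p_t$, the estimator's expectation is a linear combination of $\frac{1}{n}\sum_i c_{i,\emptyset}$ (with coefficient $\sum_t \alpha_t$) and $\TTE = \frac{1}{n}\sum_j S_j$ (with coefficient $\sum_t \alpha_t p_t$). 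Demanding unbiasedness across all coefficient configurations forces the two linear constraints $\sum_t \alpha_t = 0$ and $\sum_t \alpha_t p_t = 1$.

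Next I would exploit the coupling defining the staggered rollout: draw $u_j \sim U[0,1]$ i.i.d. and set $z_j^t = \Ind(u_j \le p_t)$. The baseline terms $c_{i,\emptyset}$ are deterministic and drop out of the variance, and because the vectors $(z_j^0,\dots,z_j^T)$ are independent across $j$, the variance decomposes as $\Var[\widehat{\TTE}] = \frac{1}{n^2}\sum_j S_j^2 \,\Var[W_j]$, where $W_j := \sum_t \alpha_t z_j^t$ and, since the $u_j$ are identically distributed, $\Var[W_j]$ takes a common value independent of $j$. Conditioning on which interval $(p_{s-1},p_s]$ contains $u_j$ and introducing the partial sums $A_s := \sum_{t \ge s}\alpha_t$ (so that $A_0 = 0$ by the first constraint and the intervals $[0,p_0]$ and $(p_T,1]$ contribute nothing), I get $W_j = A_s$ on the $s$-th interval, hence $\E[W_j] = \sum_{s=1}^T (p_s - p_{s-1}) A_s$, which equals $1$ by the second constraint, and $\E[W_j^2] = \sum_{s=1}^T (p_s - p_{s-1}) A_s^2$.

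With $w_s := p_s - p_{s-1} > 0$, minimizing the variance over admissible $\alpha$ is therefore exactly the problem of minimizing $\sum_{s=1}^T w_s A_s^2$ subject to $\sum_{s=1}^T w_s A_s = 1$; crucially the nonnegative prefactor $\frac{1}{n^2}\sum_j S_j^2$ does not depend on $\alpha$, so the minimizer is identical for every instance. Cauchy--Schwarz gives $1 = \big(\sum_s w_s A_s\big)^2 \le \big(\sum_s w_s\big)\big(\sum_s w_s A_s^2\big)$ with $\sum_s w_s = p_T - p_0$, so the minimum $\frac{1}{p_T-p_0}$ is attained exactly when all $A_s$ are equal, i.e. $A_1 = \cdots = A_T = \frac{1}{p_T - p_0}$. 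Unwinding $\alpha_t = A_t - A_{t+1}$ (with $A_{T+1} = 0$) then yields $\alpha_0 = -\frac{1}{p_T-p_0}$, $\alpha_T = \frac{1}{p_T-p_0}$, and $\alpha_1 = \cdots = \alpha_{T-1} = 0$, as claimed.

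I expect the \emph{main obstacle} to be the variance computation in the second paragraph: one must use the monotone coupling to collapse the highly correlated observations $\{z_j^t\}_t$ into a single scalar $W_j$ per individual, and then recognize that the partial sums $A_s$ are the coordinates in which both the objective and the constraint diagonalize. Once this change of variables is in place, the optimization is routine. A minor but essential point to state carefully is that the optimization is genuinely instance-independent, since $\Var[W_j]$ factors cleanly out of the per-$j$ sum, so ``minimum variance'' is unambiguous even though the $S_j$ are unknown.
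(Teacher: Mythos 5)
Your proposal is correct, and it reaches the paper's conclusion by a genuinely different route in the key optimization step. Both arguments coincide at the start: you derive the same unbiasedness constraints $\sum_t \alpha_t = 0$ and $\sum_t \alpha_t p_t = 1$, and your variance factorization $\frac{1}{n^2}\sum_j S_j^2\,\Var[W_j]$ with $S_j = \sum_{i : j \in \cN_i} c_{ij}$ is exactly the paper's factorization, since under the monotone coupling $\Cov\bigl[z_j^t, z_j^{t'}\bigr] = p_{\min(t,t')} - p_t p_{t'}$, so $\Var[W_j]$ equals the paper's quadratic form $\sum_{t,t'}\alpha_t\alpha_{t'}\bigl(p_{\min(t,t')} - p_t p_{t'}\bigr)$. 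The divergence is in how this form is minimized. The paper sets up a Lagrangian and eliminates variables mechanically, computing $\partial\mathcal{L}/\partial\alpha_0$, $\partial\mathcal{L}/\partial\alpha_1, \ldots$ in sequence to conclude $\alpha_1 = \cdots = \alpha_{T-1} = 0$; you instead pass to the partial sums $A_s = \sum_{t \geq s}\alpha_t$, observe via the coupling that $W_j$ takes the value $A_s$ with probability $w_s = p_s - p_{s-1}$ and $0$ otherwise, so that the objective and constraint diagonalize simultaneously, and finish with Cauchy--Schwarz: $1 = \bigl(\sum_s w_s A_s\bigr)^2 \leq (p_T - p_0)\sum_s w_s A_s^2$ with equality iff $A_1 = \cdots = A_T$. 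Your route buys three things: it certifies \emph{global} optimality directly (the paper's stationary-point computation implicitly relies on convexity of the quadratic form to upgrade a critical point to a minimum); it produces the optimal variance in closed form, $\frac{1}{n^2}\bigl(\sum_j S_j^2\bigr)\bigl(\frac{1}{p_T-p_0} - 1\bigr)$, consistent with Corollary~\ref{cor:lin_bounds} when $p_0 = 0$; and it explains conceptually why intermediate measurements are useless --- under the staggered rollout coupling they only re-partition the event space on which the constant partial sums are realized. The one point to state explicitly in a final write-up is that $\alpha \mapsto (A_1, \ldots, A_T)$ is a bijection on the constraint set once $A_0 = 0$ is imposed, so the minimization over $A$ subject to $\sum_s w_s A_s = 1$ is equivalent to the original problem; with that line added, your argument is complete.
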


On one hand, such a result seems surprising: having more observations seems like it would only lead to a stronger estimator. However, what is overlooked is that there is strong correlation in the different measurements due to the monotonicity of treatments enforced in the staggered rollout design, such that the information in the first and last measurements contain all the useful information one could construct from the intermediate measurements. \meedit{When random noise is added, the trade-off between the noise-canceling effects of additional measurements and the increased sensitivity of higher-degree interpolating polynomials adds an additional level of complexity.} 



\begin{proof}
    To begin, we derive the constraints on $(\alpha_0, \hdots, \alpha_T)$ needed to ensure unbiasedness. We have,
    \begin{align*}
        \E \Big[ \widehat{TTE} \Big] = \frac{1}{n} \sum_{i=1}^{n} \sum_{t=0}^{T} \alpha_t \Big( c_{i,\varnothing} + p_t \sum_{j \in \cN_i} c_{ij} \Big) = \frac{1}{n} \sum_{i=1}^{n} \bigg[ c_{i,\varnothing} \Big( \sum_{t=0}^{T} \alpha_T \Big) + \sum_{j \in \cN_i} c_{ij} \Big( \sum_{t=0}^{T} \alpha_T p_t \Big) \bigg].
    \end{align*}
    Comparing to our expression for $\TTE$ in terms of the $c_{i,\cS}$ coefficients:
    \begin{equation*}
        \TTE = \tfrac{1}{n}\sum_{i=1}^{n} \sum_{\substack{\cS \subseteq\cN_i \\ 1 \leq |\cS| \leq \beta}} c_{i,\cS},
    \end{equation*}
    we see that we must have,
    \begin{equation} \label{eq:unbiased_constraints}
         \sum_{t=0}^{T} \alpha_t = 0,
         \hspace{60pt}
         \sum_{t=0}^{T} \alpha_t p_t = 1.
    \end{equation}
    Now, we consider the variance of this family of estimators. We have,
    \begin{align}
        \notag \Var \Big[ \widehat{TTE} \Big]
        &= \frac{1}{n^2} \sum_{i=1}^{n} \sum_{i'=1}^{n} \sum_{t=0}^{T} \sum_{t'=0}^{T} \alpha_t \alpha_{t'} \cdot \Cov \Big[ Y_i(\bz^t), Y_{i'}(\bz^{t'}) \Big] \\
        \notag &= \frac{1}{n^2} \sum_{i=1}^{n} \sum_{i'=1}^{n} \sum_{t=0}^{T} \sum_{t'=0}^{T} \sum_{j \in \cN_i \cap \cN_{i'}} \alpha_t \alpha_{t'} \cdot c_{ij} c_{i'j} \cdot \big( p_{\min(t,t')} - p_t p_{t'} \big) \\
        \label{eq:factored_var} &= \bigg( \frac{1}{n^2} \sum_{i=1}^{n} \sum_{i'=1}^{n} \sum_{j \in \cN_i \cap \cN_{i'}} c_{ij} c_{i'j} \bigg) \bigg( \sum_{t=0}^{T} \sum_{t'=0}^{T} \alpha_t \alpha_{t'} \cdot \big( p_{\min(t,t')} - p_t p_{t'} \big) \bigg).
    \end{align}
    Note that the first factor is a constant depending only on the network (i.e. not on the $\alpha$ and $p$ parameters of the estimator). Thus, to minimize the variance, it suffices to locate critical values of this second factor, subject to our unbiasedness constraints. We can rewrite this factor
    \begin{equation*}
        \sum_{t=0}^{T} \alpha_t^2 \cdot p_t(1-p_t) + 2 \sum_{t=0}^{T} \sum_{t'=t+1}^{T} \alpha_t \alpha_{t'} \cdot p_t(1-p_{t'})
        \hspace{5pt} = \hspace{5pt}
        \sum_{t=0}^{T} \alpha_t p_t \Big( \alpha_t(1-p_t) + 2 \sum_{t'=t+1}^{T} \alpha_{t'} (1-p_{t'}) \Big).
    \end{equation*}
    Then, we consider the Lagrangian,
    \begin{equation}
        \mathcal{L} := \sum_{t=0}^{T} \alpha_t p_t \Big( \alpha_t(1-p_t) + 2 \sum_{t'=t+1}^{T} \alpha_{t'} (1-p_{t'}) \Big) + \lambda \sum_{t=0}^{T} \alpha_t + \mu \Big( 1 - \sum_{t=0}^{T} \alpha_t p_t \Big).
    \end{equation}
    We compute the partial derivatives of this Lagrangian with respect to each $\alpha_t$ as,
    \begin{equation*}
        \frac{\partial \mathcal{L}}{\partial \alpha_t} = 2 (1-p_t) \sum_{t'=0}^{t-1} \alpha_{t'} p_{t'} + 2 p_t \sum_{t''=t}^{T} \alpha_{t''} (1-p_{t''}) + \lambda - p_t \mu.
    \end{equation*}
    We will set each of these partial derivatives equal to 0 sequentially to fix each of the variables at the critical point. First, we consider the partial derivative with respect to $\alpha_0$. We have,
    \begin{align*}
        \frac{\partial \mathcal{L}}{\partial \alpha_0} 
        &= 2 p_0 \sum_{t''=0}^{T} \alpha_{t''} (1-p_{t''}) + \lambda - p_0 \mu = - p_0 (2+\mu) + \lambda.
    \end{align*}
    Here, the second inequality uses the unbiasedness constraints. Setting this partial derivative equal to 0, we must have $\lambda = p_0(2+\mu)$. Next, we consider the partial derivative with respect to $\alpha_1$:
    \begin{align*}
        \frac{\partial \mathcal{L}}{\partial \alpha_1} 
        &= 2 \alpha_0 p_0 (1-p_1) - 2p_1 \sum_{t''=1}^{T} \alpha_{t''} (1-p_{t''}) + \lambda - p_1 \mu \\
        &= 2 \alpha_0 p_0 (1-p_1) + 2p_1 \Big( - 1 - \alpha_0 (1-p_0) \Big) + \lambda - p_1 \mu \tag{unbiasedness} \\
        &= 2 \alpha_0 p_0 (1-p_1) - 2p_1 - 2p_1 \alpha_0 (1-p_0) + p_0(2+\mu) - p_1 \mu \\
        &= (p_0 - p_1) ( 2\alpha_0 + 2 + \mu ).
    \end{align*}
    Note that $p_0-p_1 \not= 0$ by our distinct probabilities assumption. Thus, setting this partial derivative equal to 0, we must have $2+\mu = -2\alpha_0$. In addition, combining with the previous constraint, we can re-express $\lambda = -2\alpha_0 p_0$. Next, we consider the partial derivative with respect to $\alpha_2$:
    \begin{align*}
        \frac{\partial \mathcal{L}}{\partial \alpha_2} 
        &= 2 \alpha_0 p_0 (1-p_2) + 2 \alpha_1 p_1 (1-p_2) - 2p_2 - 2p_2\alpha_0(1-p_0) - 2p_2\alpha_1(1-p_1) + \lambda - p_2\mu \\
        &= 2 \alpha_0 (p_0 - p_2) + 2 \alpha_1 (p_1-p_2) -2\alpha_0 p_0 - p_2 (2+\mu) \\
        &= 2 \alpha_0 (p_0 - p_2) + 2 \alpha_1 (p_1-p_2) -2\alpha_0 (p_0 - p_2) \\
        &= 2 \alpha_1 (p_1-p_2).
    \end{align*}
    Setting this partial derivative equal to 0, we must have $\alpha_1 = 0$, since $p_1-p_2 \not= 0$. We can iterate this process on the partial derivatives with respect to $\alpha_3, \hdots, \alpha_T$, concluding that $\alpha_2, \hdots, \alpha_{T-1} = 0$. 
    
    We are left with the system of two linear equations given by the unbiasedness constraints:
    \begin{equation*}
        \alpha_0 + \alpha_T = 0, \hspace{60pt} \alpha_0 p_0 + \alpha_T p_T = 1.
    \end{equation*} 
    The unique solution to this system is $\alpha_0 = \frac{-1}{p_T - p_0}$, $\alpha_T = \frac{1}{p_T - p_0}$.
\end{proof}
\section{Experimental Results under a Quadratic Outcomes Model} 
\label{ap:quadratic}
In this section, we discuss the results of our experiments\footnote{Code can be found at \href{https://tinyurl.com/kee88h6d}{https://tinyurl.com/kee88h6d}} under a quadratic potential outcomes model ($\beta = 2$). As in the linear setting (see Section~\ref{sec:experiments}), for each population size $n$, we sample $G$ networks from the distribution described in Section \ref{sec:experiments}. For each configuration of parameters in the experiment, we sample $N$ treatment schedules $\{ \bz^0, \hdots, \bz^\beta \}$ from our parameterized distribution class (Bernoulli or CRD) and compute the $\TTE$ using each estimator. In the experiments for both this setting and the linear setting, we set $G = 30$ and $N=100.$

For each estimator, we plot the relative bias of the $\TTE$ estimates averaged over the results from these $GN$ samples and normalized by the magnitude of the $\TTE$. The width of the shading in the figures depicts the standard deviation across the $GN$ estimates. The experiments in the quadratic setting ran for $29.4$ minutes on the same Linux machine.

In Figure~\ref{fig:CRD_results_quadratic}, we visualize the effect of three network or estimator parameters on the quality of each of the five $\TTE$ estimators (the four described in the Other Algorithms paragraph of Section~\ref{sec:experiments}, and our CRD estimator with treatment targets $k_t = \frac{tk}{\beta}$). Specifically, we consider the effects of the population size ($n$), the maximum proportion of treated individuals ($k/n$) and the degree of the potential outcomes model ($\beta$). Each of the plots fixes two of these parameters and varies the third. Specific settings of the parameters are listed on each plot. 

\begin{figure}[h]
     \centering
     \begin{subfigure}[b]{0.32\textwidth}
         \centering
         \includegraphics[width=\textwidth]{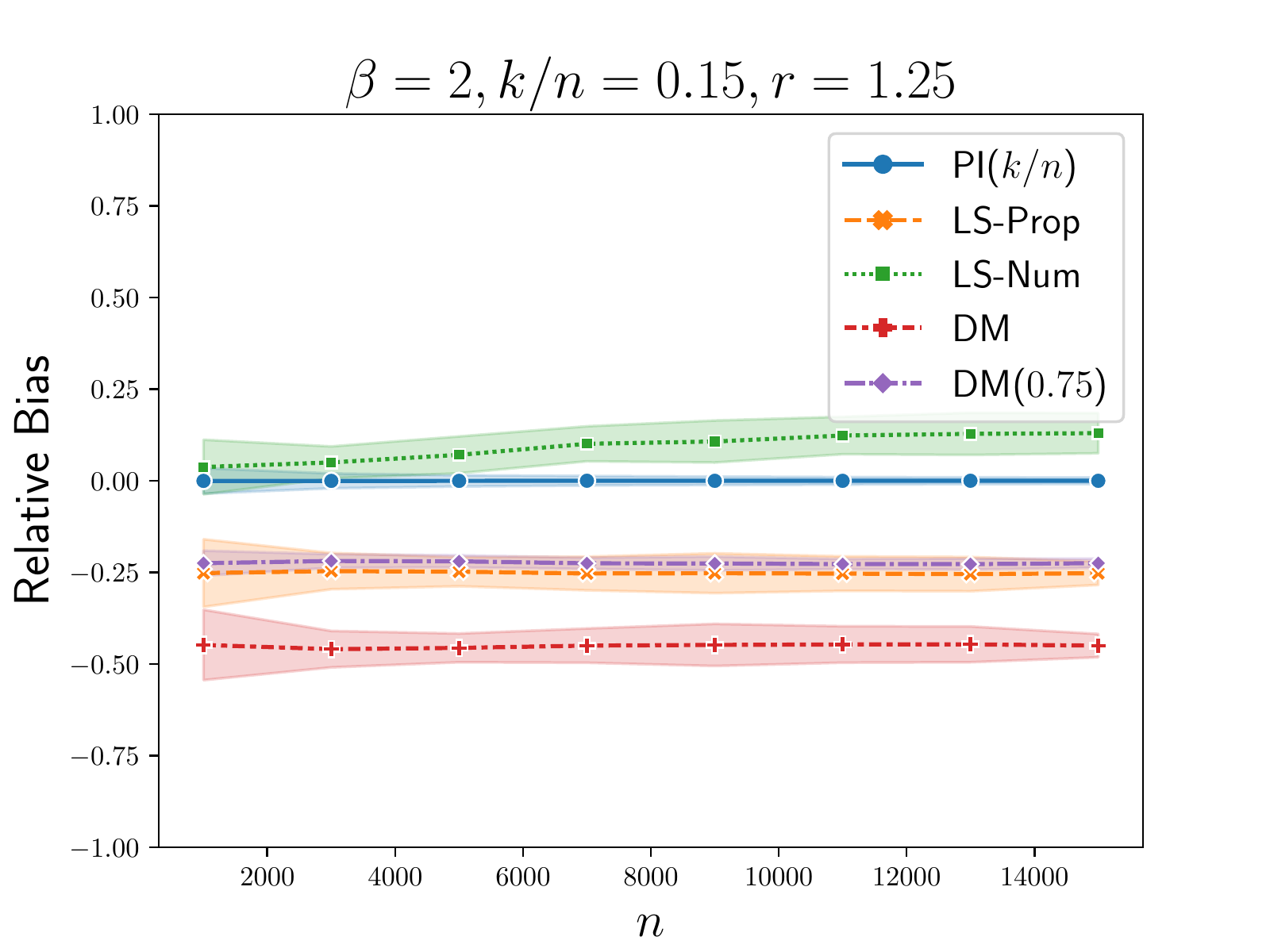}
         \caption{Varying size of the population}  \label{fig:size2}
     \end{subfigure}
     \begin{subfigure}[b]{0.32\textwidth}
         \centering
         \includegraphics[width=\textwidth]{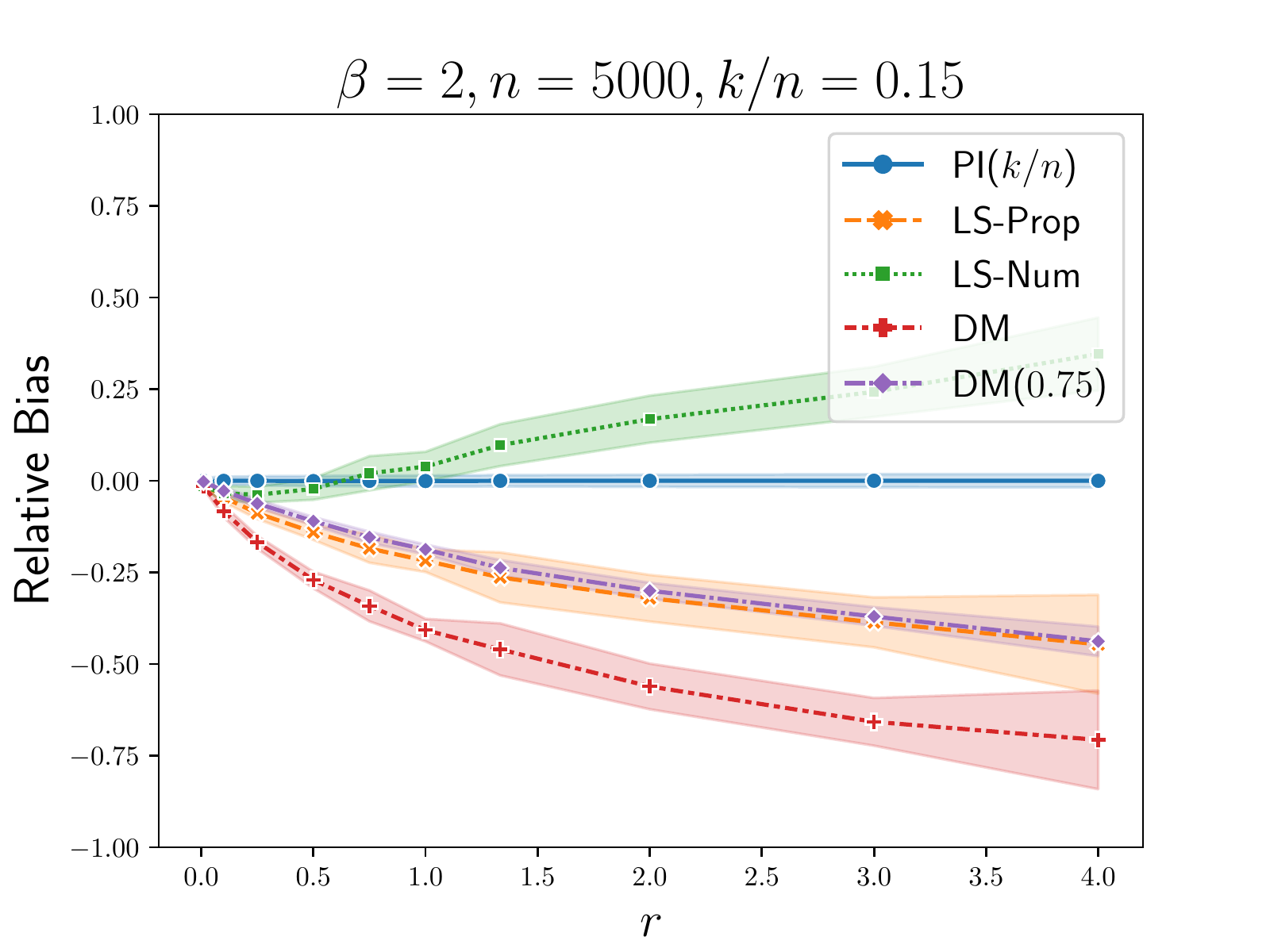}
         \caption{Varying direct:indirect effects}  \label{fig:ratio2}
     \end{subfigure}
     \begin{subfigure}[b]{0.32\textwidth}
         \centering
         \includegraphics[width=\textwidth]{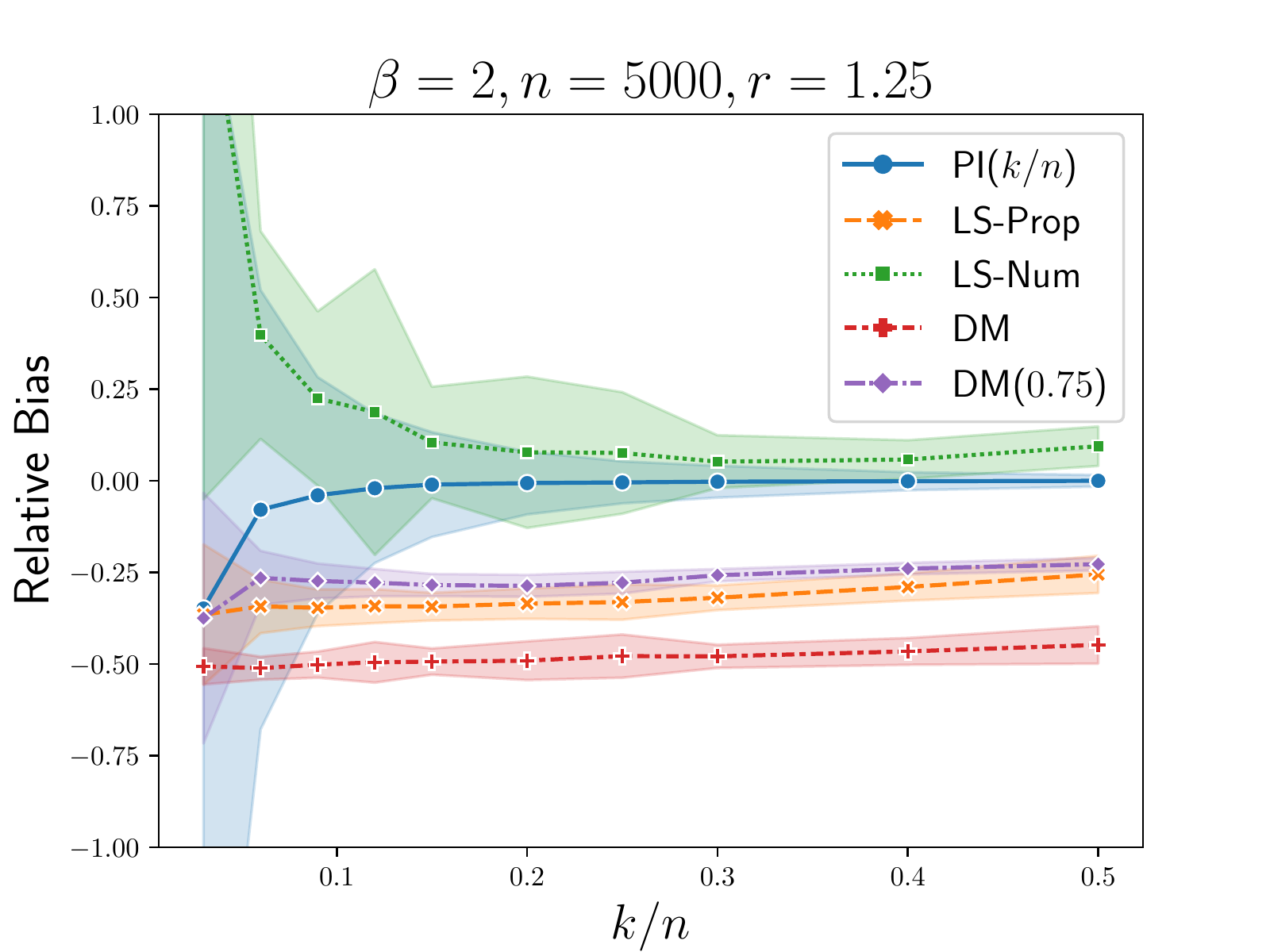}
         \caption{Varying treatment budget}  \label{fig:p2}
     \end{subfigure}
     
     \caption{Three graphs visualizing the performance of various $\TTE$ estimators as different parameters are adjusted. The height of each graph depicts the experimental relative bias of the estimator and the shaded width depicts the experimental standard deviation.} \label{fig:CRD_results_quadratic}
\end{figure}

Our estimator is the blue line with blue shading on each of the plots. As expected, the estimator is unbiased and the variance decreases as $n$ or $k/n$ increases. On the other hand, regardless of population size or treatment budget, the rest of the estimators remain biased. In general, the variances of these other estimators remains higher than ours, although it is worth noting that when the treatment budget $k/n$ is lower, the variance of our estimator is higher. As the ratio $r$ increases, the network (aka indirect) effects become greater relative to the direct effect. This is exhibited by the increase in the bias of all the estimators, besides ours, as shown in Figure \ref{fig:ratio2}. As expected, when the ratio is near $0$, all estimators are unbiased as this corresponds to the case where there is no network interference. 

In Figure \ref{fig:our_results_quadratic}, we compare the variants of our estimator when $\beta = 2$, evaluating $\widehat{\TTE}_{\text{\normalfont PI}}(\bk/n)$ under CRD and evaluating $\widehat{\TTE}_{\text{\normalfont PI}}(\bp)$ and $\widehat{\TTE}_{\text{\normalfont PI}}(\hat{\bk}/n)$ under Bernoulli($\bp$) randomized design, where $p_t = tp/\beta$ and $\hat{\bk}$ is the vector of realized treatment counts. 

\begin{figure}[h]
     \centering
     \begin{subfigure}[b]{0.49\textwidth}
         \centering
         \includegraphics[width=\textwidth]{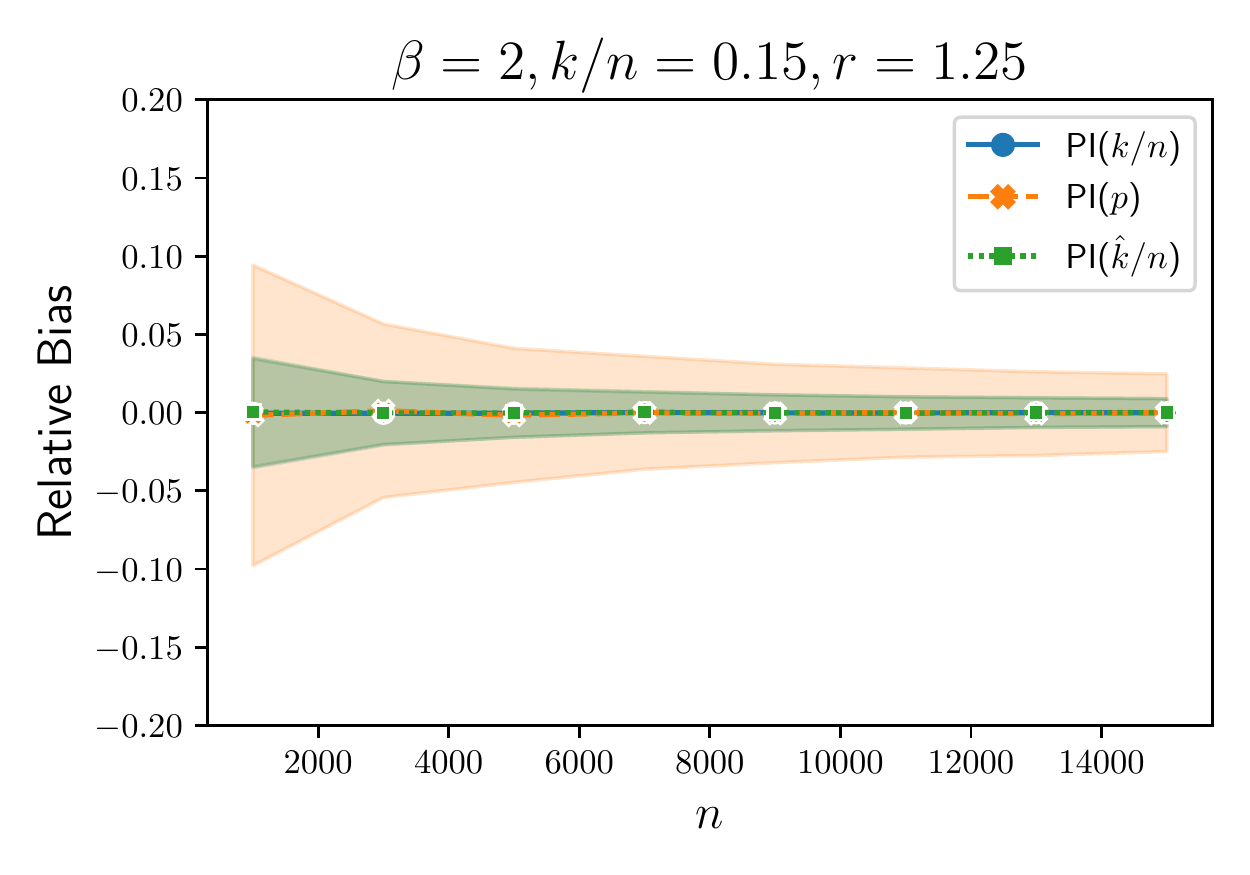}
         \caption{Varying size of the population}  \label{fig:size_ours2}
     \end{subfigure}
     \begin{subfigure}[b]{0.49\textwidth}
         \centering
         \includegraphics[width=\textwidth]{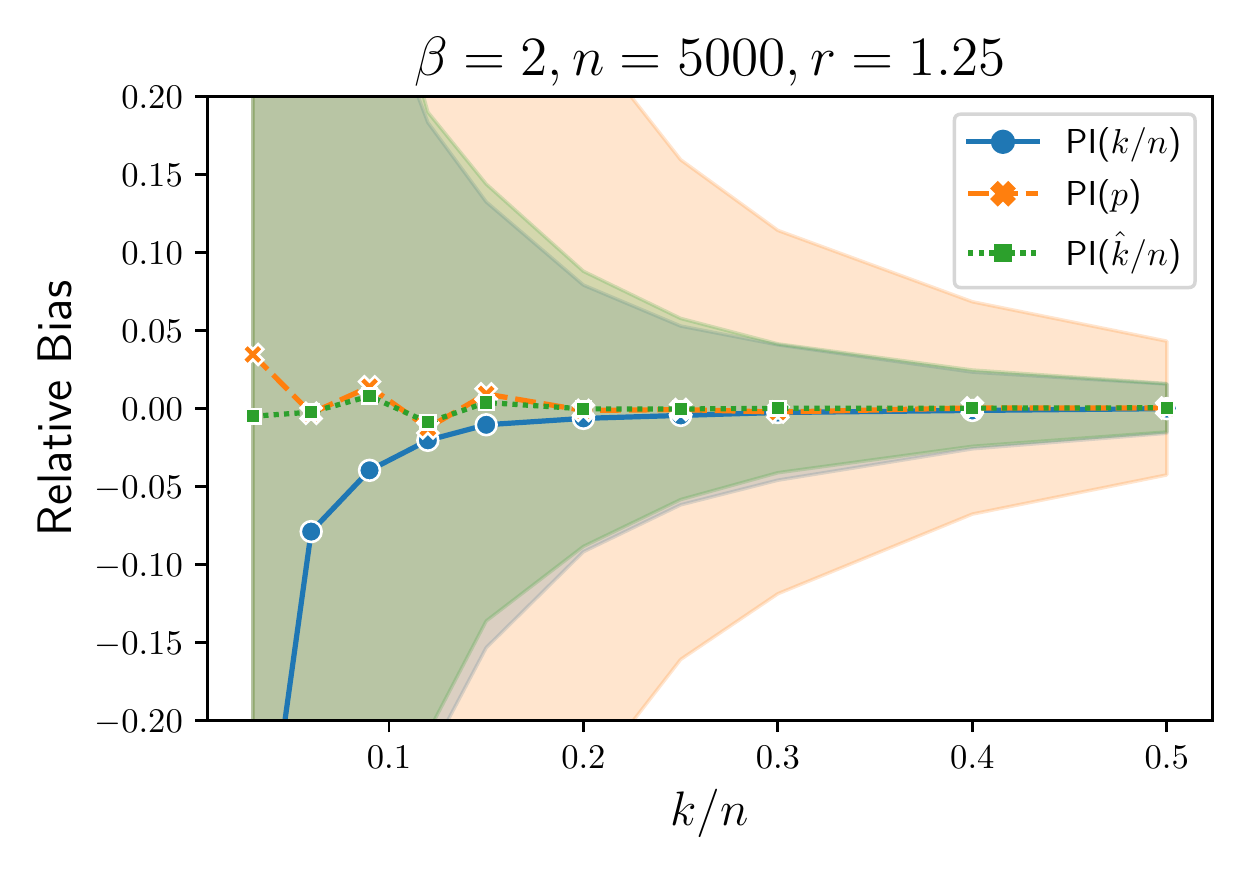}
         \caption{Varying treatment budget}  \label{fig:p_ours2}
     \end{subfigure} %
        \caption{Two graphs visualizing the performance of our proposed $\TTE$ estimators as the size of the population ($n$) or treatment budget ($k/n$) is varied. The height of each graph depicts the experimental relative bias of the estimator and the shaded width depicts the experimental standard deviation. The blue and the green plots essentially overlap.} \label{fig:our_results_quadratic}
\end{figure}

The estimators $\widehat{\TTE}_{\text{\normalfont PI}}(\bk/n)$ and $\widehat{\TTE}_{\text{\normalfont PI}}(\hat{\bk}/n)$ perform nearly identically as we vary the size of the population. 
They differ for lower treatment budgets, with $\widehat{\TTE}_{\text{\normalfont PI}}(\hat{\bk}/n)$ having lower bias than $\widehat{\TTE}_{\text{\normalfont PI}}(\bk/n)$ but about the same variance. 
As the treatment budget increases, they perform almost identically. 
$\widehat{\TTE}_{\text{\normalfont PI}}(\hat{\bk}/n)$ has lower variance than $\widehat{\TTE}_{\text{\normalfont PI}}(\bp)$, which is intuitive as it performs polynomial interpolation on the realized treatment fraction rather than the expected treatment fraction.
\section{Experimental Results under Bernoulli Design} \label{ap:bernoulli}

\begin{figure}[t]
     \centering
     \begin{subfigure}[b]{0.49\textwidth}
         \centering
         \includegraphics[width=\textwidth]{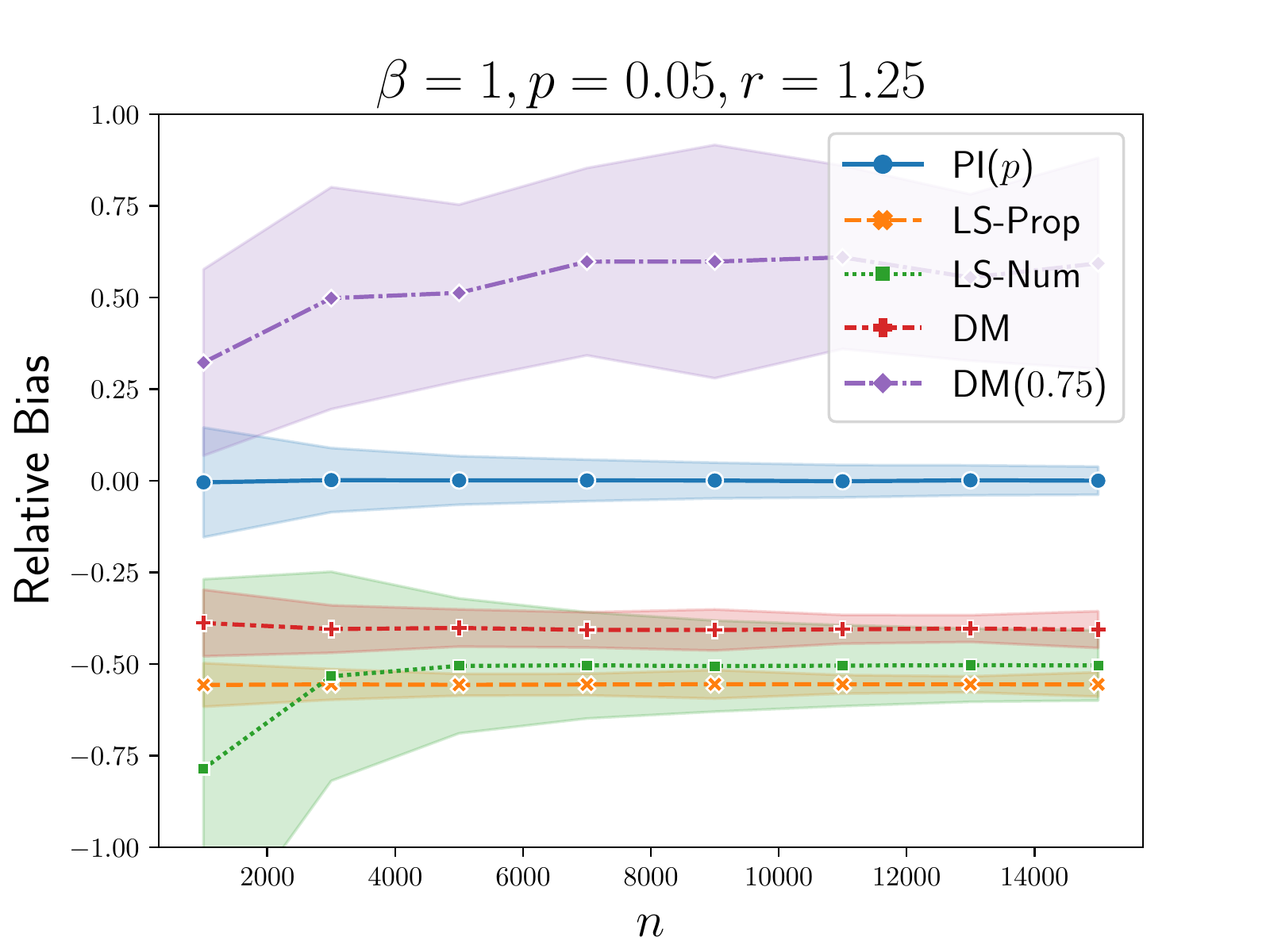}
         \caption{Varying size of the population}  \label{fig:sizeBRD}
     \end{subfigure}
     \begin{subfigure}[b]{0.49\textwidth}
         \centering
         \includegraphics[width=\textwidth]{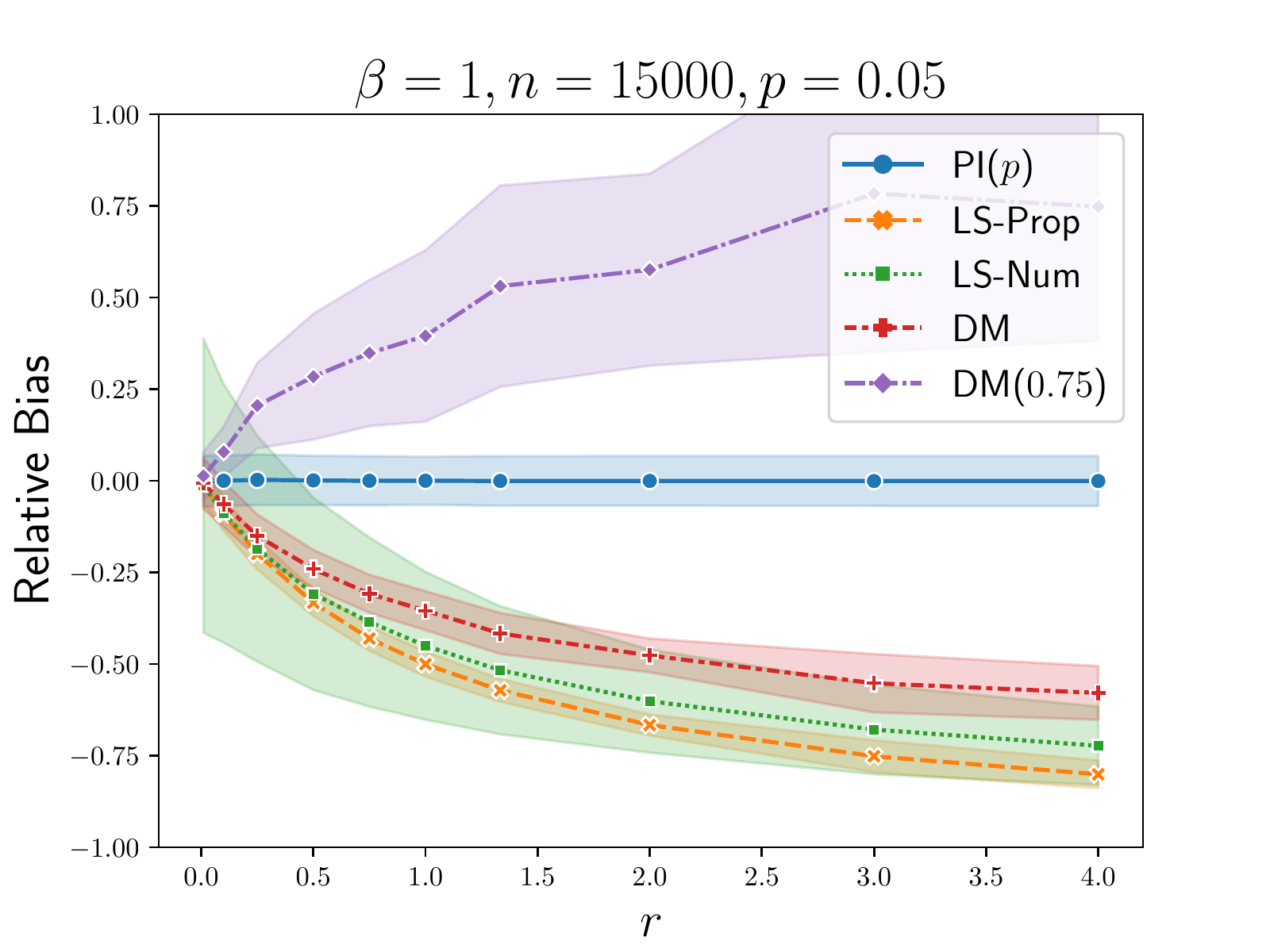}
         \caption{Varying ratio of direct:indirect effects}  \label{fig:ratioBRD}
     \end{subfigure} \\
     \begin{subfigure}[b]{0.49\textwidth}
         \centering
         \includegraphics[width=\textwidth]{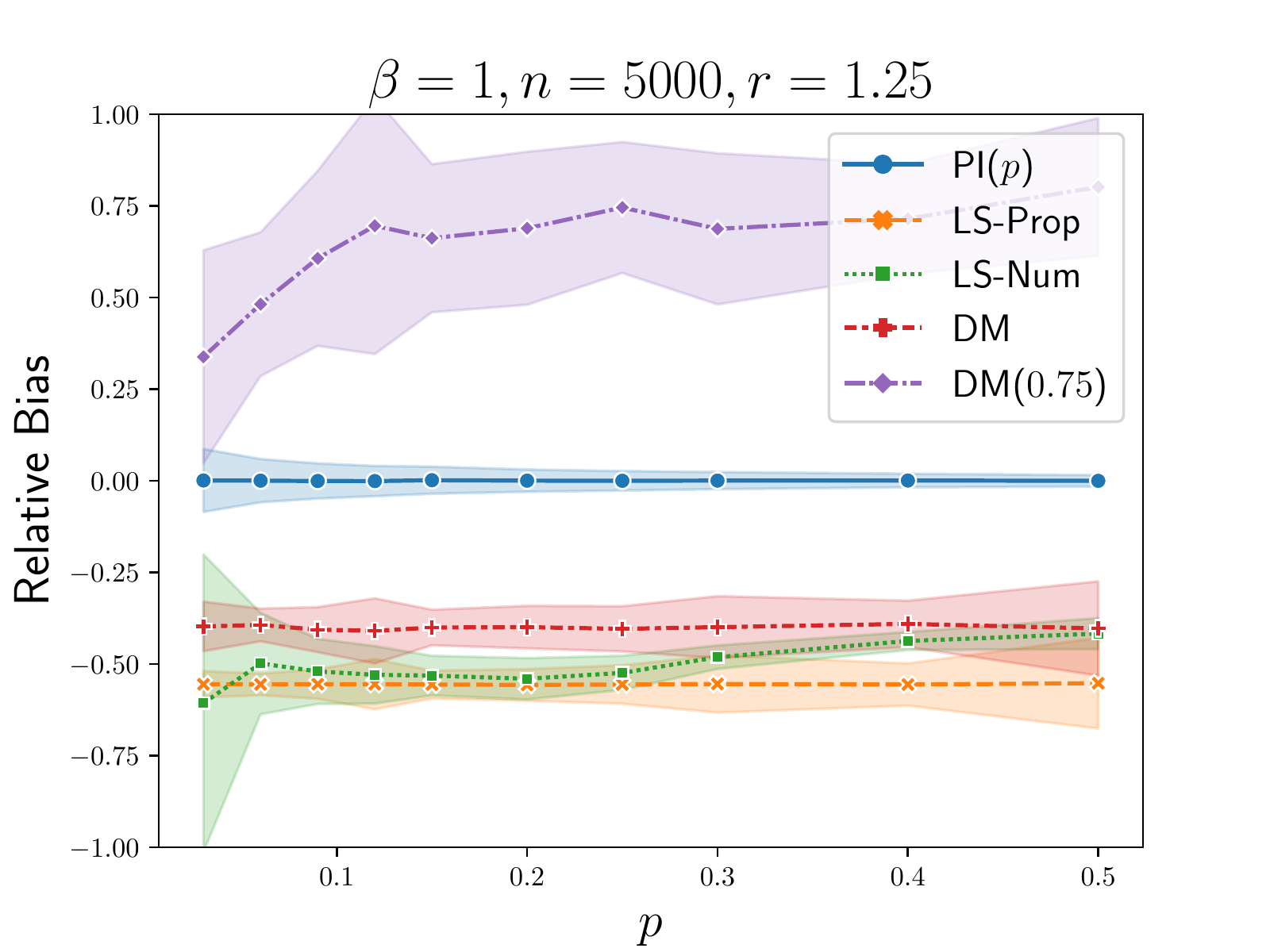}
         \caption{Varying treatment budget}  \label{fig:pBRD}
     \end{subfigure}
     \begin{subfigure}[b]{0.49\textwidth}
         \centering
         \includegraphics[width=\textwidth]{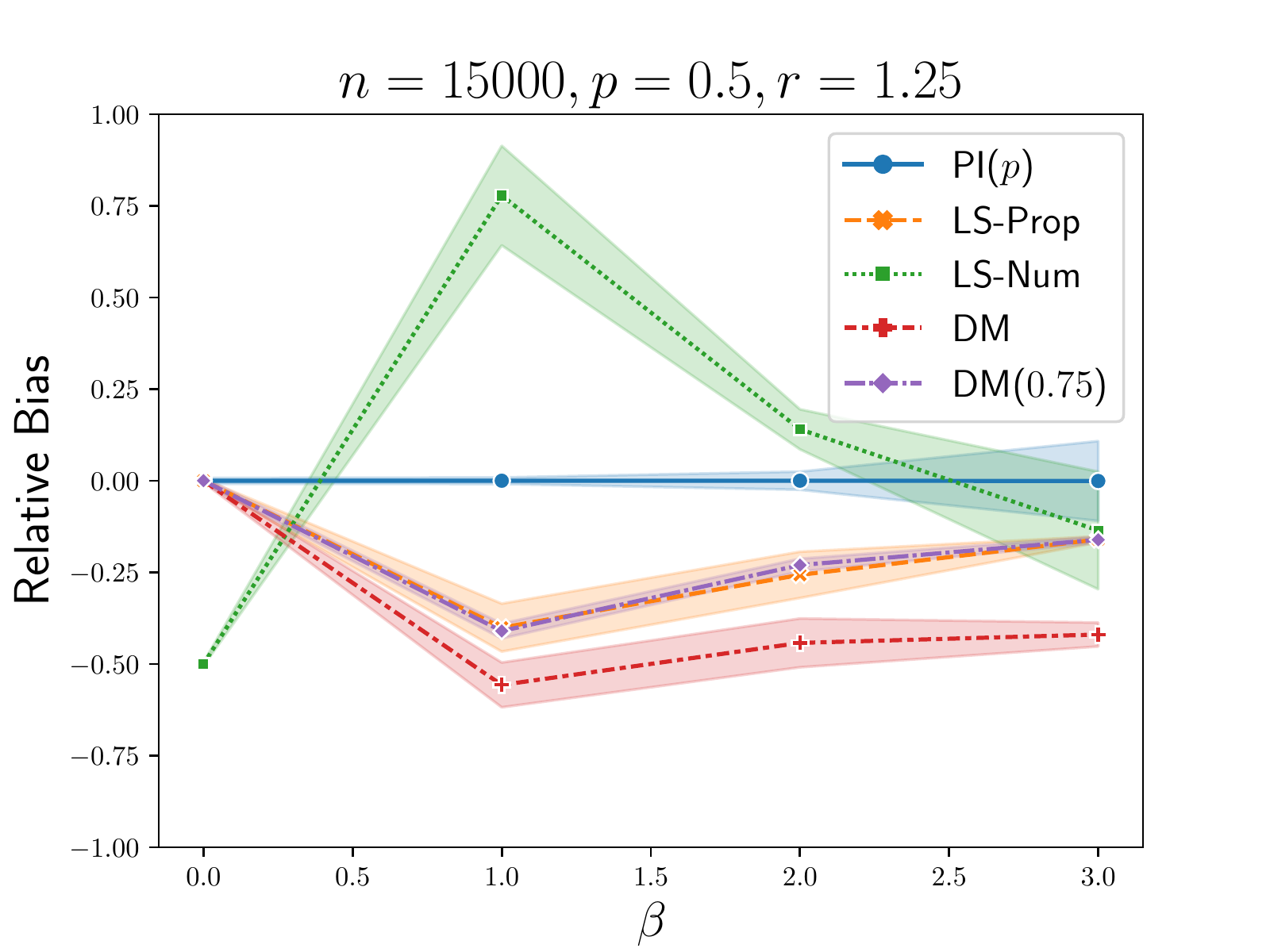}
         \caption{Varying the model degree}  \label{fig:betaBRD}
     \end{subfigure} 
        \caption{Four graphs visualizing the performance of various $\TTE$ estimators, under Bernoulli randomized design, as various parameters are adjusted. The height of each graph depicts the experimental relative bias of the estimator and the shaded width depicts the experimental standard deviation.} \label{fig:BRD_results}
\end{figure}

We performed similar experiments to Section~\ref{sec:experiments} and Appendix~\ref{ap:quadratic} for the Bernoulli randomized design setting. The main difference is that our parameterization on the budget in the realized fraction of treated individuals, $k/n$, has been replaced by an upper threshold on the treatment probability, $p$. The results we find in this Bernoulli design setting exhibit the same trends as those under completely randomized design. We include these plots for completeness and refer the reader to earlier sections for discussion and analysis.
\begin{figure}[t]
     \centering
     \begin{subfigure}[b]{0.49\textwidth}
         \centering
         \includegraphics[width=\textwidth]{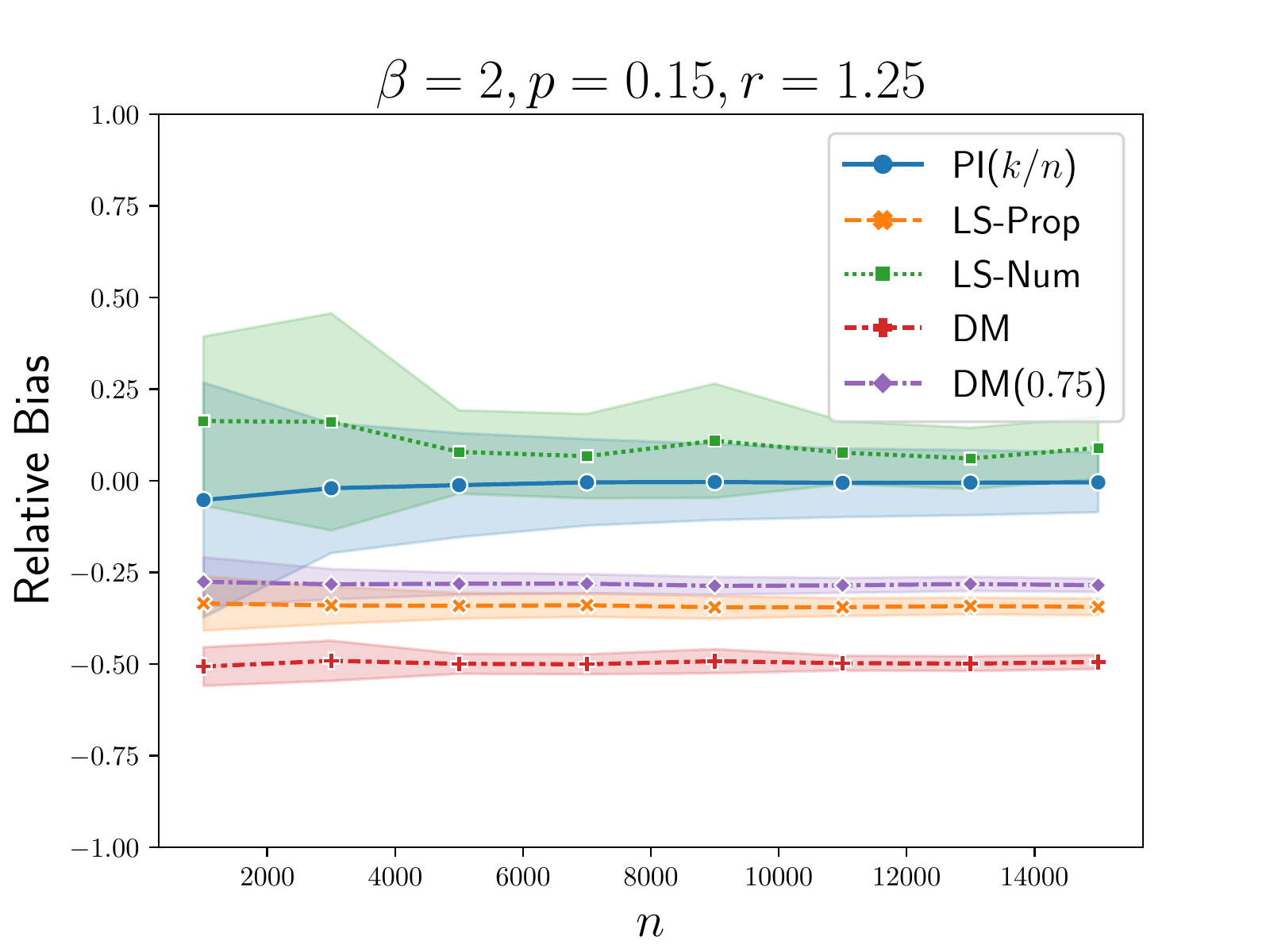}
         \caption{Varying size of the population}  \label{fig:size_ourBRD}
     \end{subfigure}
     \begin{subfigure}[b]{0.49\textwidth}
         \centering
         \includegraphics[width=\textwidth]{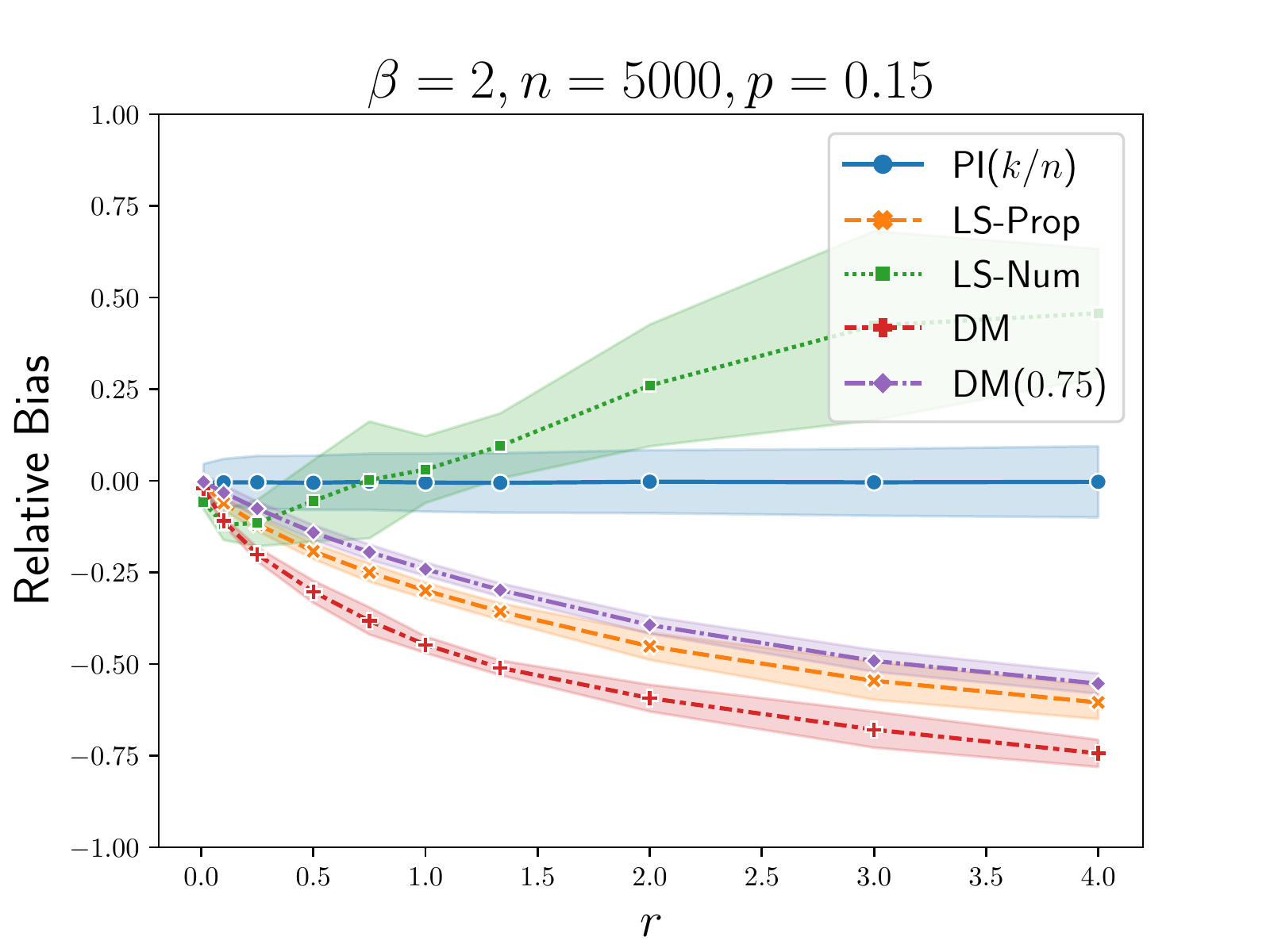}
         \caption{Varying direct:indirect effects}  \label{fig:p_oursBRD}
     \end{subfigure} %

        \caption{Two graphs visualizing the performance of our proposed $\TTE$ estimators under Bernoulli randomized design as the size of the population ($n$) or ratio between direct and indirect effects ($r$) is varied. The height of each graph depicts the experimental relative bias of the estimator and the shaded width depicts the experimental standard deviation.} \label{fig:our_results_BRD}
\end{figure}


\end{document}